\documentclass[11pt]{article}

\usepackage[a4paper,hmargin=0.85in,vmargin=0.85in]{geometry}

\usepackage[dvips]{graphicx}
\usepackage{pstricks}
\usepackage{enumerate}
\usepackage{subfig}
\usepackage{amssymb,amsmath,latexsym}
\usepackage{amsthm}
\usepackage[plainpages=false]{hyperref}
\usepackage{pstricks-add}
\usepackage{pst-plot}
\usepackage{pst-xkey}
\usepackage{multirow}
\usepackage{slashbox}
\usepackage{dsfont}
\usepackage{pifont}
\usepackage{bm}
\usepackage{textcomp}
\usepackage{microtype}
\usepackage{url}
\usepackage{units}

\newcommand{\ep}{\varepsilon}
\newcommand{\bof}[1]{\bm{#1}}
\newcommand{\st}{\operatorname{s.t.}}
\newcommand{\Prob}{\operatorname{Pr}}
\newcommand{\tr}{\operatorname{tr}}

\def\bra#1{\mathinner{\langle{#1}|}}
\def\ket#1{\mathinner{|{#1}\rangle}}

\def\brakket#1#2#3{\mathinner{\langle{#1}|{#2}|{#3}\rangle}}

\theoremstyle{plain}

\newtheorem{lemma}{Lemma}
\newtheorem{theorem}{Theorem}
\newtheorem{corollary}{Corollary}

\theoremstyle{definition}
\newtheorem{definition}{Definition}
\newtheorem{example}{Example}

\theoremstyle{remark}
\newtheorem{remark}{Remark}
\newtheorem{protocol}{Protocol}

\title{Device-Independent Quantum Key Distribution\\ with Commuting Measurements}

\author{
Esther H\"anggi\thanks{Computer Science Department, ETH Zurich, CH-8092 Z\"urich, Switzerland}
\and
Renato Renner\thanks{Institute for Theoretical Physics, ETH Zurich, CH-8093 Z\"urich, Switzerland}
}

\begin{document}

\maketitle

\begin{abstract}
We consider quantum key distribution in the device-independent scenario, i.e., where the legitimate parties do not know (or trust) the exact specification of their 
apparatus. We show how secure key distribution can be realized against the most general attacks by a quantum adversary under the condition that measurements on different 
subsystems by the honest parties commute. 
\end{abstract}

\section{Introduction}

The security of quantum key distribution is based on the laws of physics and does not rely on any (unproven) assumption of computational hardness. It does, however, assume 
that the honest parties can control their physical devices accurately and completely. If an implementation of quantum key distribution does not meet this requirement, its 
security may be compromised. For example, the BB84 quantum key-distribution protocol~\cite{bb84}  becomes completely insecure if 
the source emits several instead of single photons or if the measurement device measures only in one 
instead of two different bases. Experimentally,
 several successful attacks making use of imperfections of the physical devices have lately been implemented (see, e.g.~\cite{makarov,lo,photonics}). 

The goal of device-independent quantum key distribution is to show the security of key distribution schemes, where 
the exact description 
of the particle source (in particular, the dimension of the Hilbert space they act on) and the exact specification 
of the measurement apparatus are unknown. 
The honest parties can only check properties of the input/output behaviour of their physical system described by 
statistical tests. 

Two approaches to achieve device-independent quantum key distribution have been investigated: the first uses the validity of quantum mechanics with all its formalism, 
while the second bases security only on the non-signalling principle, i.e., the fact that the parties cannot use their physical apparatus to send messages (in particular, 
measurements on an entangled quantum state cannot be used for message transmission). It can be shown that this later condition is strictly weaker and that there exist 
examples of systems which are secure against quantum adversaries, but  insecure (or only partially secure) in a model built on the non-signalling principle only. 
The latter can therefore lead to unnecessarily low key rates or even the impossibility to create a key in certain regimes.

\subparagraph*{Our contribution: }
We give a general security proof of device-independent quantum key distribution against the 
most general attacks by any adversary limited by quantum mechanics under the sole condition that whenever 
the key distribution protocol prescribes measurements on separate subsystems, then 
these measurements commute. This condition can, for instance, be enforced by isolating the individual subsystems 
or by performing the measurements at space-like separated points. Furthermore, it is understood that the 
legitimate parties have access to a source of randomness\footnote{It is already sufficient to have a source 
producing a small number of random bits. As shown in~\cite{colbeckphd,random}, this randomness can then be expanded using a device-independent randomness expansion protocol.} and that none of the devices leaks information 
to the environment. 
Our proof method applies to a generic class of entanglement-based quantum key-distribution protocols~\cite{ekert}. In particular, we show that a protocol similar to the one proposed originally by Ekert 
reaches an asymptotic key rate of one secure bit per channel use in the noiseless limit 
even in the device-independent scenario with commuting measurements. 

The proof method we use is based on a criterion by Navascu\'es, Pirionio and Ac\'in~\cite{npa07}, to bound the information a (quantum) adversary can have about 
the legitimate parties'  
measurement results by a semi-definite program. Our main technical contribution is to show that when the honest parties share several systems with commuting measurements, 
then this semi-definite program follows a sort of \emph{product theorem} (Theorem~\ref{th:guessprod}), i.e., an adversary cannot guess the outputs of several systems any 
better than trying to guess each output individually. The resulting security proof works for any alphabet size of the inputs and outputs to the system and does not use any 
Hilbert space formalism, only convex optimization techniques.

Our security proof can also be applied in the non-device-independent scenario, i.e., where the properties of the devices are (partially) known (or trusted), leading to 
a higher key rate. 

Our technique also implies that privacy amplification, i.e., the random hashing usually performed at the end of the protocol to turn a partially secure raw key into a fully secure key, can be replaced by a deterministic function, the XOR.

\subparagraph*{Related work: }
The problem of device-independence 
has been introduced and studied by Mayers and Yao~\cite{my}, who showed security against an adversary limited to individual attacks in the noiseless scenario. 
The same scenario but allowing for noise has been treated in~\cite{mmmo}.  A device-independent quantum key distribution scheme secure against collective attacks has been 
given in~\cite{abgs}. If the devices are memoryless, this scheme can even be shown secure against the most general attacks, using a plausible but unproven assumption, as 
shown in~\cite{mckaguephd}. 

Device-independent key distribution against adversaries only limited by the non-signalling principle has first been studied by Barrett, Hardy and Kent~\cite{bhk}. 
Key distribution schemes secure against (non-signalling) individual attacks have been proposed and analysed in~\cite{AcinGisinMasanes,AcinMassarPironio,SGBMPA}. Under the additional assumption 
that a non-signalling condition holds between all subsystems, security against the most general attacks has been proven in~\cite{lluis,eurocrypt,masanesv4}.

\subparagraph*{Outline: }
We will first introduce the mathematical framework needed to describe key distribution protocols and 
define their security (Section~\ref{sec:model}). In Section~\ref{sec:condition} we review the semi-definite criterion by Navascu\'es, Pironio and 
Ac\'in~\cite{npa07} to bound the set of quantum systems. In Section~\ref{sec:qsingle} we study how to bound the security of a single system. Our main technical result 
relates the security of a single system to the security of many systems and is given in Section~\ref{sec:several}.  Using these results, we can then give a general security 
proof for device-independent quantum key distribution. We first treat the case when the marginal systems shared by 
the honest parties are independent (Section~\ref{sec:qkd}), before 
removing this requirement in Section~\ref{sec:notindependent}. Finally in Section~\ref{sec:qprotocol}, we apply our result to a specific  protocol which is secure in the device-independent scenario.

\section{Framework}\label{sec:model}

\subsection{Systems}

We define security in the context of \emph{random systems}~\cite{Maurer02}. A \emph{system} is an abstract device taking inputs and
 giving outputs at one or more \emph{interfaces} and is characterized by the 
probability distributions of the outputs given the inputs. 
The closeness of two systems $\mathcal{S}_0$ and $\mathcal{S}_1$ can be measured by introducing a so-called \emph{distinguisher}. A distinguisher $\mathcal{D}$ is itself a system which cna interact with another system 
and output a bit, $B$.  Assume the distinguisher is connected at random either to system $\mathcal{S}_0$ or to $\mathcal{S}_1$; after interacting with the system, 
the distinguisher is supposed to guess which of the two systems it is connected to. 
 The \emph{distinguishing advantage between system $\mathcal{S}_0$ and $\mathcal{S}_1$} is then defined in terms 
 of the probability of winning this game.  
\begin{definition}
The \emph{distinguishing advantage between two systems $\mathcal{S}_0$ and $\mathcal{S}_1$ }is 
\begin{eqnarray}
 \nonumber \delta(\mathcal{S}_0, \mathcal{S}_1)&=& \max_{\mathcal{D}}[P(B=1|\mathcal{S}=\mathcal{S}_1)-P(B=1|\mathcal{S}=\mathcal{S}_0)]\ ,
\end{eqnarray}
where the maximum ranges over all distinguishers $\mathcal{D}$ connected to a system $\mathcal{S}$ and where $B$ denotes 
the output of the distinguisher. 
Two systems $\mathcal{S}_0$ and $\mathcal{S}_1$ are called \emph{$\epsilon$-indistinguishable} if $\delta(\mathcal{S}_0, \mathcal{S}_1)\leq \epsilon$.
\end{definition}
The probability of any event $\mathcal{E}$, defined in a scenario involving a system $\mathcal{S}_0$ cannot 
differ by more than this quantity from the probability of a corresponding event in a scenario where $\mathcal{S}_0$ 
has been replaced by $\mathcal{S}_1$. 
The reason is that otherwise this event could be used to distinguish the two systems. 
\begin{lemma}\label{lemma:event}
Let $\mathcal{S}_0$ and $\mathcal{S}_1$ be $\epsilon$-indistinguishable systems. 
Denote by $P(\mathcal{E}|\mathcal{S}_0)$ the probability of an event $\mathcal{E}$, defined by any of the input and output variables of the system $\mathcal{S}_0$. Then
$
  P(\mathcal{E}|\mathcal{S}_0)\leq  P(\mathcal{E}|\mathcal{S}_1)+ \epsilon
$.
\end{lemma}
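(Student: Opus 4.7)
The plan is to prove the contrapositive by converting the event $\mathcal{E}$ into an explicit distinguisher. Concretely, I would construct a distinguisher $\mathcal{D}_{\mathcal{E}}$ that, upon being connected to an unknown system $\mathcal{S} \in \{\mathcal{S}_0,\mathcal{S}_1\}$, reproduces exactly the input/output scenario used to define the event $\mathcal{E}$ (sampling whatever auxiliary randomness that scenario requires), records all input and output variables exchanged with $\mathcal{S}$, evaluates the indicator of $\mathcal{E}$ on these variables, and outputs this indicator as its bit $B$. Since $\mathcal{E}$ is by assumption a function of the inputs and outputs of $\mathcal{S}$ alone, this evaluation is well-defined and requires no further information about which system $\mathcal{D}_{\mathcal{E}}$ is attached to.

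Next I would observe that, by construction, $P(B=1 \mid \mathcal{S}=\mathcal{S}_i) = P(\mathcal{E} \mid \mathcal{S}_i)$ for $i \in \{0,1\}$, because $\mathcal{D}_{\mathcal{E}}$ faithfully simulates the same distribution of inputs and records the corresponding outputs. Plugging this into the definition of the distinguishing advantage gives
\[
  P(\mathcal{E}\mid \mathcal{S}_0) - P(\mathcal{E}\mid \mathcal{S}_1) \;=\; P(B=1\mid \mathcal{S}=\mathcal{S}_0) - P(B=1\mid \mathcal{S}=\mathcal{S}_1) \;\leq\; \delta(\mathcal{S}_1,\mathcal{S}_0) \;\leq\; \epsilon,
\]
where in the last inequality I use that the definition of $\delta(\mathcal{S}_0,\mathcal{S}_1)$ involves a maximum over all distinguishers and is symmetric in the two systems (or, equivalently, one can consider the distinguisher that flips $B$). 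Rearranging gives the claimed bound.

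The only real subtlety I anticipate is making sure $\mathcal{D}_{\mathcal{E}}$ is a legitimate distinguisher in the sense of Maurer's random-systems framework: it must interact sequentially with $\mathcal{S}$ using only the allowed interface and must not peek at which of $\mathcal{S}_0,\mathcal{S}_1$ it is connected to. This is handled by insisting that the scenario defining $\mathcal{E}$ is itself specified by some admissible interaction with the system, so $\mathcal{D}_{\mathcal{E}}$ can simulate that interaction verbatim. Everything else is bookkeeping: the event $\mathcal{E}$ becomes a deterministic function of the transcript, and the bound then follows directly from the maximality in the definition of $\delta$.
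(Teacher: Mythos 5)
Your proof is correct and takes exactly the approach the paper gestures at: the paper gives no formal proof of this lemma, just the one-sentence remark immediately preceding it that ``otherwise this event could be used to distinguish the two systems,'' which is precisely the distinguisher construction you carry out. You also correctly handle the only small wrinkle --- the asymmetric form of the definition of $\delta$ --- by noting that the max over distinguishers includes the bit-flipped one, so the advantage is symmetric and the inequality comes out in the needed direction.
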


The distinguishing advantage is a \emph{pseudo-metric}, in particular, it fulfils the triangle inequality 
\begin{eqnarray}
\label{eq:triangle} \delta(\mathcal{S}_0,\mathcal{S}_1)+\delta(\mathcal{S}_1,\mathcal{S}_2)&\geq & \delta(\mathcal{S}_0,\mathcal{S}_2)\ .
\end{eqnarray}

\subsection{Modelling key agreement}

The security of a cryptographic primitive can be measured by its distance from an \emph{ideal} system which is secure by definition. For example in the case of 
key distribution, the ideal system is a \emph{perfect key generation system} which outputs a uniform and random key (bit string) $S$ to both legitimate parties (usually called Alice and Bob) 
but does not leak any information about $S$ to the adversary (called Eve). 
%and for which all other input/output 
%interfaces are completely independent of this first interface. 
This key is secure by construction. 
A real key generation system is called \emph{secure} if it is indistinguishable from this ideal one. 
%If the \emph{real} system generating a key is indistinguishable from the \emph{ideal} one, this key generation system
% is called secure. 

\begin{definition}
A \emph{perfect $\ell$-bit key generation system} is a system which outputs two equal uniform random variables $S_A$ and $S_B$ with range $\mathcal{S}$ of size $|\mathcal{S}|=2^{\ell}$ at two designated interfaces (i.e., 
$P_{S_AS_B}(s_A,s_B)=1/|\mathcal{S}|$ if $s_A=s_B$ and $0$ otherwise) and for which all other interfaces are uncorrelated with $S_A$ and $S_B$. 
\end{definition}

\begin{definition}\label{def:secure}
A key generation system is \emph{$\epsilon$-secure} if the system is $\epsilon$-indistinguish\-able from a perfect key generation system.
\end{definition}
As a consequence of Lemma~\ref{lemma:event}, the resulting security is \emph{composable}~\cite{pw},\,  \cite{bpw},\, \cite{canetti}. That is, no matter 
in what application the key is used, it is as useful as a perfect key, except with a small probability $\epsilon$. 

The real key generation system we are interested in is one obtained by running a protocol $(\pi,\pi^{\prime})$ using as 
underlying resource a public authenticated channel and a pre-distributed quantum state (see Figure~\ref{fig:our_system_physical}). 
More precisely, Alice and Bob both execute locally a program, $\pi$ and $\pi^{\prime}$, to generate keys, $S_A$ and $S_B$, respectively. 
%The system we consider (see Figure~\ref{fig:our_system_physical}) is the one where 
%Alice and Bob share a public authenticated channel plus a quantum state (modelled abstractly as a device taking inputs and giving outputs). 
%They apply a protocol $\pi$ to the inputs and 
%outputs of their systems and the information exchanged over 
%the public channel to obtain a key $S_A$ and $S_B$, respectively. 
Furthermore, we model the adversary, Eve, as a program that has access to additional interfaces of these 
resources. The interface to the public channel provides her with the entire public communication, $Q$. Furthermore, she can choose an arbitrary measurement, $W$, to be applied to the pre-distributed quantum state, 
resulting in an outcome $Z$. 
%Eve can wire-tap the public channel and choose an input to her part of the system. This models that Eve may apply 
%an arbitrary measurement to her part of the pre-distributed quantum state. 
% (this corresponds to measuring her part of the quantum state, where 
%the input corresponds to the choice of measurement). 

Note that, in a realistic scenario, an adversary may access the channel interactively, make measurements and, depending on the outcomes, decide on 
further actions. This is, however, captured by our model, as the single input $W$ may be interpreted 
%We model this here as a \emph{single} input $W$, which may be interpreted 
as the encoding of an entire 
strategy that specifies how a real system would be accessed. 
More precisely, Eve  obtains all the 
communication exchanged over the public channel $Q$, can 
then choose a measurement $W$ (which can depend on 
$Q$) and finally obtains an outcome $Z$. 
\begin{figure}[ht!]
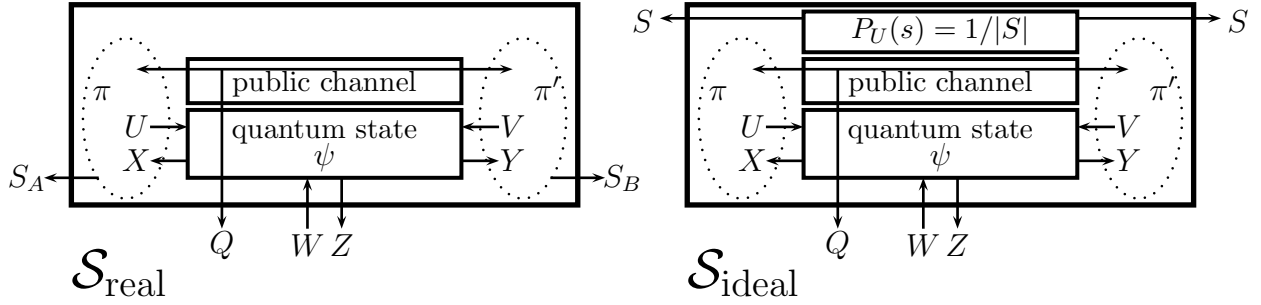

\begin{center}
\pspicture*[](-8.5,-0.5)(8.5,3.5)
\psset{unit=0.9cm}
\rput[c]{0}(-4.5,0){
\rput[l]{0}(-3.7,-0.15){\huge{$\mathcal{S}_{\mathrm{real}}$}}
\psline[linewidth=1pt]{<->}(-2.75,2.85)(2.75,2.85)
\rput[b]{0}(0,2.45){public channel}
\pspolygon[linewidth=1.5pt](-2,3)(2,3)(2,2.35)(-2,2.35)
\pspolygon[linewidth=1.5pt](-2,1.25)(2,1.25)(2,2.25)(-2,2.25)
\psline[linewidth=1pt]{<-}(-2.55,1.5)(-2,1.5)
\rput[b]{0}(-2.75,1.35){\large{$X$}}
\psline[linewidth=1pt]{->}(-2.55,2)(-2,2)
\rput[b]{0}(-2.75,1.85){\large{$U$}}
\psline[linewidth=1pt]{<-}(2.55,1.5)(2,1.5)
\rput[b]{0}(2.75,1.35){\large{$Y$}}
\psline[linewidth=1pt]{->}(2.55,2)(2,2)
\rput[b]{0}(2.75,1.85){\large{$V$}}
\psline[linewidth=1pt]{->}(-0.25,0.5)(-0.25,1.25)
\rput[b]{0}(-0.25,0.1){\large{$W$}}
\psline[linewidth=1pt]{<-}(0.25,0.5)(0.25,1.25)
\rput[b]{0}(0.25,0.1){\large{$Z$}}
\rput[b]{0}(0,1.75){quantum state}
\rput[b]{0}(0,1.35){\large{$\psi$}}
\psline[linewidth=1pt]{<-}(-1.5,0.5)(-1.5,2.85)
\rput[b]{0}(-1.5,0.05){\large{$Q$}}
%\pspolygon[linewidth=1pt,linestyle=dashed](-2.2,1.05)(2.2,1.05)(2.2,3.2)(-2.2,3.2) 
\psellipse[linewidth=1pt,linestyle=dotted](-2.9,2.125)(0.65,1.2)
\psellipse[linewidth=1pt,linestyle=dotted](2.9,2.125)(0.65,1.2)
\pspolygon[linewidth=2pt](-3.7,0.85)(3.7,0.85)(3.7,3.8)(-3.7,3.8) 
\rput[b]{0}(-3.25,2.4){\large{$\pi$}}
\rput[b]{0}(3.25,2.4){\large{$\pi^{\prime}$}}
\psline[linewidth=1pt]{<-}(-4.1,1.25)(-3.3,1.25)
\rput[b]{0}(-4.35,1.05){\large{$S_A$}}
\psline[linewidth=1pt]{<-}(4.1,1.25)(3.3,1.25)
\rput[b]{0}(4.35,1.05){\large{$S_B$}}
}
\rput[c]{0}(4.5,0){
\rput[l]{0}(-3.7,-0.15){\huge{$\mathcal{S}_{\mathrm{ideal}}$}}
\psline[linewidth=1pt]{<->}(-2.75,2.85)(2.75,2.85)
\rput[b]{0}(0,2.45){public channel}
\pspolygon[linewidth=1.5pt](-2,3)(2,3)(2,2.35)(-2,2.35)
\pspolygon[linewidth=1.5pt](-2,1.25)(2,1.25)(2,2.25)(-2,2.25)
\psline[linewidth=1pt]{<-}(-2.55,1.5)(-2,1.5)
\rput[b]{0}(-2.75,1.35){\large{$X$}}
\psline[linewidth=1pt]{->}(-2.55,2)(-2,2)
\rput[b]{0}(-2.75,1.85){\large{$U$}}
\psline[linewidth=1pt]{<-}(2.55,1.5)(2,1.5)
\rput[b]{0}(2.75,1.35){\large{$Y$}}
\psline[linewidth=1pt]{->}(2.55,2)(2,2)
\rput[b]{0}(2.75,1.85){\large{$V$}}
\psline[linewidth=1pt]{->}(-0.25,0.5)(-0.25,1.25)
\rput[b]{0}(-0.25,0.1){\large{$W$}}
\psline[linewidth=1pt]{<-}(0.25,0.5)(0.25,1.25)
\rput[b]{0}(0.25,0.1){\large{$Z$}}
\rput[b]{0}(0,1.75){quantum state}
\rput[b]{0}(0,1.35){\large{$\psi$}}
\psline[linewidth=1pt]{<-}(-1.5,0.5)(-1.5,2.85)
\rput[b]{0}(-1.5,0.05){\large{$Q$}}
%\pspolygon[linewidth=1pt,linestyle=dashed](-2.2,1.05)(2.2,1.05)(2.2,3.2)(-2.2,3.2) 
\psellipse[linewidth=1pt,linestyle=dotted](-2.9,2.125)(0.65,1.2)
\psellipse[linewidth=1pt,linestyle=dotted](2.9,2.125)(0.65,1.2)
\pspolygon[linewidth=2pt](-3.7,0.85)(3.7,0.85)(3.7,3.8)(-3.7,3.8) 
\rput[b]{0}(-3.25,2.4){\large{$\pi$}}
\rput[b]{0}(3.25,2.4){\large{$\pi^{\prime}$}}
\pspolygon[linewidth=1.5pt](-2,3.1)(2,3.1)(2,3.7)(-2,3.7)
\rput[b]{0}(0,3.2){$P_U(s)=1/|S|$}
\psline[linewidth=1pt]{<-}(-4.1,3.6)(-2,3.6)
\rput[b]{0}(-4.35,3.4){\large{$S$}}
\psline[linewidth=1pt]{<-}(4.1,3.6)(2,3.6)
\rput[b]{0}(4.35,3.4){\large{$S$}}
}
\endpspicture
\end{center}
\caption{\label{fig:our_system_physical} Our \emph{real} key distribution system $\mathcal{S}_{\mathrm{real}}$ (left). Alice and Bob share a public authentic channel and a quantum state. 
When they apply a protocol $\pi$ to obtain a key, all this can together be modelled as a system. In our \emph{ideal} system $\mathcal{S}_{\mathrm{ideal}}$ (right), instead 
of outputting the key generated by the protocol $(\pi,\pi^{\prime})$, the system outputs a uniform random string $S$ to both Alice and Bob. We will also sometimes use an \emph{intermediate} 
system $\mathcal{S}_{\mathrm{int}}$ which is the same as the real system but with $S_B$ replaced by $S_A$.}
\end{figure}

In order to derive bounds on the parameter $\epsilon$ in Definition~\ref{def:secure}, we split this parameter in two 
parts, where one corresponds to the \emph{correctness} (the probability that Alice's and Bob's key are different, i.e., 
$P(S_A\neq S_B)$) and the other one corresponds to the \emph{secrecy}. The latter is quantified by the 
\emph{distance from uniform} of the key $S_A$ given the information accessible to Eve, i.e., $Z(W_q)$ and $Q$ (we write 
$Z(W_q)$ because the eavesdropper 
can choose the input adaptively and the choice of input changes the output distribution). 
% (the distance from uniform of the key as seen from 
%an adversary, to be defined in Definition~\ref{def:distance_from_uniform}). 
%We define \emph{the distance from uniform of $S_A$ given $Z(W_q)$ and $Q$}, where we write $Z(W_q)$ because the eavesdropper 
%can choose the input adaptively and the choice of input changes the output distribution. 
\begin{definition}\label{def:distance_from_uniform}
For a given system as depicted in Figure~\ref{fig:our_system_physical}, the \emph{distance from uniform of $S_A$ given $Z(W_q)$ and $Q$} is 
\begin{eqnarray}
\nonumber
 d(S_A|Z(W_q),Q)
&=&1/2\cdot
\sum_{s,q}  \max_{w} \sum_{z} P_{Z,Q|W=w}(z,q)\cdot|P_{S_A|Z=z,Q=q,W=w}(s)-P_U(s)|
\ ,
\end{eqnarray}
where $P_U$ is the uniform distribution over $|S|$, i.e., $P_U(s)=1/|S|$. 
\end{definition}

The distance from uniform can be seen as the distinguishing advantage between the real system and an intermediate system, $\mathcal{S}_{\mathrm{int}}$, which is equal to our real system, 
but which outputs $S_A$ on both sides (i.e., $S_B$ is replaced by $S_A$). 

The following lemma is a direct consequence of the definitions of the systems in Figure~\ref{fig:our_system_physical} and the distinguishing advantage.
\begin{lemma}
\label{def:dist_advantage} Consider the intermediate system $\mathcal{S}_{\mathrm{int}}$ and the ideal system as defined above (see Figure~\ref{fig:our_system_physical}). 
%a system as given in Figure~\ref{figure:our_system}. 
Then
\begin{eqnarray}
\nonumber 
\delta(\mathcal{S}_{\mathrm{int}},\mathcal{S}_{\mathrm{ideal}})&=&  d(S_A|Z(W_q),Q)\ .
\end{eqnarray}
\end{lemma}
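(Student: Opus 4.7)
The plan is to identify the optimal distinguisher in the game between $\mathcal{S}_{\mathrm{int}}$ and $\mathcal{S}_{\mathrm{ideal}}$ and verify that its advantage matches the right-hand side of Definition~\ref{def:distance_from_uniform} term by term. First I would unpack the distinguisher's capabilities: it observes the (equal) key outputs at Alice's and Bob's interfaces, the transcript $Q$ at the public-channel interface, and then chooses an input $W$ at Eve's interface and receives $Z$. Since $Q$ and the key outputs are produced by the honest protocol before Eve's measurement is made, a distinguisher strategy is, without loss of generality, a deterministic rule choosing $W=w(s,q)$ after observing $s$ and $q$.

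The key structural observation is that $\mathcal{S}_{\mathrm{int}}$ and $\mathcal{S}_{\mathrm{ideal}}$ share the same underlying protocol execution and the same quantum state, so in both cases the marginal distribution of $(Q,Z)$ given $W=w$ is the real-system $P_{Z,Q|W=w}$. They differ only in the joint with the key output: in $\mathcal{S}_{\mathrm{int}}$, the conditional of $S_A$ given $(Z,Q,W)$ is the real conditional $P_{S_A|Z=z,Q=q,W=w}$, while in $\mathcal{S}_{\mathrm{ideal}}$ the output $S$ is drawn uniformly and independently of everything else, so its conditional given $(Z,Q,W)$ is $P_U$. Writing the total variation distance between the two induced joints on $(S,Q,Z)$ for a fixed strategy $w(\cdot,\cdot)$ and factoring out the common marginal therefore gives
\begin{eqnarray}
\nonumber
\delta_w &=& \tfrac{1}{2}\sum_{s,q,z} P_{Z,Q|W=w(s,q)}(z,q)\cdot|P_{S_A|Z=z,Q=q,W=w(s,q)}(s)-P_U(s)|\ .
\end{eqnarray}

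Since the summands are non-negative and the choice of $w(s,q)$ enters only the single term indexed by $(s,q)$, the supremum over strategies decouples and can be taken pointwise, yielding
\begin{eqnarray}
\nonumber
\delta(\mathcal{S}_{\mathrm{int}},\mathcal{S}_{\mathrm{ideal}}) &=& \tfrac{1}{2}\sum_{s,q}\max_w \sum_z P_{Z,Q|W=w}(z,q)\cdot|P_{S_A|Z=z,Q=q,W=w}(s)-P_U(s)|\ ,
\end{eqnarray}
which is exactly $d(S_A|Z(W_q),Q)$.

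The step I expect to need the most care is the factorisation of the ideal joint, specifically the independence of the uniform output $S$ from $(Q,Z,W)$: this follows directly from the construction of $\mathcal{S}_{\mathrm{ideal}}$, where the uniform key source is drawn independently of all the protocol and adversary variables while the marginal on Eve's observations $(Q,Z)$ given $W$ is inherited unchanged from the real execution. The matching upper bound---that no randomised or more complex distinguisher does better than the pointwise-optimal deterministic strategy above---is just the standard equivalence between distinguishing advantage and total-variation distance for classical systems.
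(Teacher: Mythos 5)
Your proof is correct and supplies the explicit argument that the paper leaves implicit: the paper states this lemma as "a direct consequence of the definitions of the systems in Figure~\ref{fig:our_system_physical} and the distinguishing advantage" and gives no written proof. Your derivation is the natural one: identify the joint distributions of $(S,Q,Z)$ under a fixed strategy $w(\cdot,\cdot)$, observe that $\mathcal{S}_{\mathrm{int}}$ and $\mathcal{S}_{\mathrm{ideal}}$ share the marginal $P_{Z,Q\mid W=w}$ and differ only in the conditional of the key output (real conditional versus $P_U$), factor this common marginal out of the total-variation distance, and note that the pointwise decoupling of the maximum over $w(s,q)$ is licensed because the summand indexed by $(s,q)$ involves only $w(s,q)$. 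The reduction to deterministic strategies and the identification of optimal distinguishing advantage with total-variation distance are the standard facts you invoke, and both are valid here. I see no gap.
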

The \emph{correctness} of the protocol, i.e., the probability that Alice's and Bob's key are equal, is determined by the distinguishing advantage 
from the intermediate system to the real system, more precisely, the probability that the real system outputs different values on the two sides. This is again a 
direct consequence of the definitions. 
\begin{lemma}
Consider the intermediate system $\mathcal{S}_{\mathrm{int}}$ and the real system  $\mathcal{S}_{\mathrm{real}}$ as defined above. 
Then
\begin{eqnarray}
\nonumber 
\delta(\mathcal{S}_{\mathrm{real}},\mathcal{S}_{\mathrm{int}})&=&
%1/2\cdot
\sum_{s_A\neq s_B} P_{S_AS_B}(s_A,s_B)\ .
\end{eqnarray}
\end{lemma}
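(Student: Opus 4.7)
The plan is to reduce the claim to a direct total-variation computation, exploiting that $\mathcal{S}_{\mathrm{real}}$ and $\mathcal{S}_{\mathrm{int}}$ have \emph{identical internal behavior} and differ only in what is emitted at Bob's key interface ($S_B$ versus $S_A$). First, I would fix an arbitrary distinguisher $\mathcal{D}$ together with the inputs it feeds to every interface (public channel messages, Eve's input $W$, etc.) and observe that, conditioned on this input strategy, the joint output distribution on $(Q,X,Y,U,V,Z,S_A)$ coincides in the two systems; the only coordinate that can differ is the bit string appearing on Bob's key interface. By the standard characterization of $\max_{\mathcal{D}}[P(B{=}1|\mathcal{S}_1)-P(B{=}1|\mathcal{S}_0)]$ as the statistical distance between the two induced output distributions, it therefore suffices to bound this distance.

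Second, I would establish the upper bound by the natural coupling: realize both systems from the same underlying randomness so that the $S_A$ coordinate and all other outputs are shared, and the only possible discrepancy is the value on Bob's key interface. The two output tuples then disagree precisely on the event $\{S_A\neq S_B\}$, whose probability is $\sum_{s_A\neq s_B}P_{S_AS_B}(s_A,s_B)$. Equivalently, splitting the total-variation sum into the diagonal ($s_B=s_A$) and off-diagonal ($s_B\neq s_A$) contributions, one sees that each off-diagonal mass $P_{S_AS_B}(s_A,s_B)$ with $s_A\neq s_B$ is matched by an equal deficit on the corresponding diagonal term, so the factor of $1/2$ in the total-variation definition cancels and one gets exactly $P(S_A\neq S_B)$.

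Third, I would establish the matching lower bound with an explicit distinguisher: observe the values appearing at the two key interfaces and output $B=1$ iff they agree. Under $\mathcal{S}_{\mathrm{int}}$ the two interfaces carry the same symbol by construction, so $P(B{=}1\mid\mathcal{S}_{\mathrm{int}})=1$; under $\mathcal{S}_{\mathrm{real}}$, $P(B{=}1\mid\mathcal{S}_{\mathrm{real}})=P(S_A=S_B)$. (Swap the output bit if one prefers the inequality to run the other way in the definition.) The advantage is then $1-P(S_A=S_B)=\sum_{s_A\neq s_B}P_{S_AS_B}(s_A,s_B)$, matching the upper bound.

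There is no real obstacle here; the only nuance is that $P_{S_AS_B}$ depends on the distinguisher's input strategy (in particular on the adversarial choice of $W$), but because both bounds above are attained with the \emph{same} strategy, the maximization in the definition of $\delta$ commutes cleanly with the total-variation computation and the claimed identity follows.
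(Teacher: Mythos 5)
Your argument is correct and proceeds along the same lines the paper has in mind when it says the lemma is ``a direct consequence of the definitions'': for any distinguisher, the two systems produce identically distributed transcripts except at Bob's key interface, so a coupling gives $\delta\le P(S_A\neq S_B)$, and the ``output $1$ iff the two key interfaces agree'' distinguisher saturates this, giving equality.

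One small conceptual slip appears in your closing paragraph. You flag that $P_{S_AS_B}$ might depend on the distinguisher's choice of $W$ and then claim the issue goes away because ``both bounds are attained with the same strategy.'' That phrasing does not by itself justify replacing the strategy-dependent quantity $P^{\mathcal{D}}(S_A\neq S_B)$ by a single number on the right-hand side; what it actually yields is $\delta=\max_{\mathcal{D}}P^{\mathcal{D}}(S_A\neq S_B)$. The reason this maximum collapses to the fixed $\sum_{s_A\neq s_B}P_{S_AS_B}(s_A,s_B)$ is the non-signalling property of the shared quantum state (together with the fact that, in the model of Figure~\ref{fig:our_system_physical}, Eve is a passive reader of $Q$ and cannot inject messages): Eve's input $W$ cannot influence the marginal distribution of Alice's and Bob's outcomes, hence not $(S_A,S_B)$. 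With that observation your proof is complete and matches the paper's (unstated) argument.
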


Finally, by the triangle inequality (\ref{eq:triangle}) on the distinguishing advantage of systems, we obtain the following lemma relating the security of our protocol to the secrecy (measured in terms of the distance from uniform) and the correctness.
\begin{lemma}
The key generation system depicted in Figure~\ref{fig:our_system_physical} is $\epsilon$-secure if
\begin{eqnarray}
\nonumber 
\epsilon &\leq &  d(S_A|Z(W_q),Q) + \sum_{s_A\neq s_B} P_{S_AS_B}(s_A,s_B)\ .
%\delta(\mathcal{S}_{real},\mathcal{S}_{ideal})&\leq &
%\delta(\mathcal{S}_{real},\mathcal{S}_{int})
%+
%\delta(\mathcal{S}_{int},\mathcal{S}_{ideal})\ .
\end{eqnarray}
\end{lemma}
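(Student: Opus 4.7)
The plan is to apply the triangle inequality for the distinguishing advantage, using the intermediate system $\mathcal{S}_{\mathrm{int}}$ as the bridging point between $\mathcal{S}_{\mathrm{real}}$ and $\mathcal{S}_{\mathrm{ideal}}$. By Definition~\ref{def:secure}, showing $\epsilon$-security amounts to bounding $\delta(\mathcal{S}_{\mathrm{real}},\mathcal{S}_{\mathrm{ideal}})$ by $\epsilon$, so the whole argument reduces to controlling this one quantity.

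First I would invoke the triangle inequality (\ref{eq:triangle}) with $\mathcal{S}_0 = \mathcal{S}_{\mathrm{real}}$, $\mathcal{S}_1 = \mathcal{S}_{\mathrm{int}}$, and $\mathcal{S}_2 = \mathcal{S}_{\mathrm{ideal}}$, obtaining
\begin{eqnarray*}
\delta(\mathcal{S}_{\mathrm{real}}, \mathcal{S}_{\mathrm{ideal}}) &\leq& \delta(\mathcal{S}_{\mathrm{real}}, \mathcal{S}_{\mathrm{int}}) + \delta(\mathcal{S}_{\mathrm{int}}, \mathcal{S}_{\mathrm{ideal}}).
\end{eqnarray*}
Then I would substitute the two preceding lemmas: the lemma immediately before identifies $\delta(\mathcal{S}_{\mathrm{real}}, \mathcal{S}_{\mathrm{int}})$ with the error probability $\sum_{s_A\neq s_B} P_{S_AS_B}(s_A,s_B)$, while Lemma~\ref{def:dist_advantage} identifies $\delta(\mathcal{S}_{\mathrm{int}}, \mathcal{S}_{\mathrm{ideal}})$ with the distance from uniform $d(S_A|Z(W_q),Q)$. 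Plugging these identities in yields the claimed bound, and applying Definition~\ref{def:secure} concludes the proof of $\epsilon$-security.

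There is really no substantive obstacle here; the lemma is a bookkeeping statement that ties together the previously established correspondences. The only thing to be mildly careful about is the direction of the triangle inequality and making sure the intermediate system $\mathcal{S}_{\mathrm{int}}$ is well-defined in the same probability space as $\mathcal{S}_{\mathrm{real}}$, which is already ensured by the construction described in the caption of Figure~\ref{fig:our_system_physical} (replace $S_B$ by $S_A$ at the output interface, keep everything else the same). Once those are in place, the proof is a two-line chain of inequalities.
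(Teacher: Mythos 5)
Your proof is correct and follows exactly the route the paper itself indicates (the paper merely remarks ``by the triangle inequality'' before stating the lemma): bridge via $\mathcal{S}_{\mathrm{int}}$, then substitute Lemma~\ref{def:dist_advantage} and the preceding lemma for the two legs. One small point worth flagging: the triangle inequality gives $\delta(\mathcal{S}_{\mathrm{real}},\mathcal{S}_{\mathrm{ideal}}) \leq d(S_A|Z(W_q),Q) + \sum_{s_A\neq s_B} P_{S_AS_B}(s_A,s_B)$, so the system is $\epsilon$-secure for any $\epsilon$ \emph{at least} this sum; the ``$\epsilon \leq$'' in the paper's statement appears to be a typo (it should read ``$\epsilon \geq$'' or ``with $\epsilon = \dots$''), and your derivation establishes the correct, intended claim.
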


\subsection{Quantum systems}

The fact that in the above scenario (see Figure~\ref{fig:our_system_physical}), the random variables $U,V,W,X,Y,Z$ correspond to the choice of
 measurements on a quantum state and their respective outcomes imposes a limitation 
on their possible distribution and, with this, on the eavesdropper's attacks. 
Consider the scenario where Alice, Bob and Eve share a tripartite quantum state. They can each measure their 
part of the system and obtain a measurement outcome. 
We can, of course, also consider the system Alice and Bob share tracing out Eve and this still corresponds to 
a quantum state (the reduced state). 
In accordance with the non-signalling principle, the 
marginal state Alice and Bob share is independent of what Eve does with her part of the state (in particular, 
independent of her measurement). And we can even consider 
the state Alice and Bob share conditioned an a certain measurement outcome of Eve: Alice and Bob still share a 
quantum state in this case. 

\begin{definition}\label{def:qbehavior}
 An $n$-party system $P_{\bof{X}|\bof{U}}$, where $\bof{X}=(X_1\ldots X_n)$, is called \emph{quantum} if there exists a pure state 
$\ket{\psi}\in \mathcal{H}=\bigotimes_i \mathcal{H}_i$ and a set of measurement operators $\{E_{u_i}^{x_i}\}$ on $\mathcal{H}_i$ such that
\begin{eqnarray}
\nonumber  P_{\bof{X}|\bof{U}}(\bof{x},\bof{u})&=& \brakket{\psi}{\bigotimes_i E_{u_i}^{x_i}}{\psi}\ ,
\end{eqnarray}
where the measurement operators satisfy the following conditions 
\begin{enumerate}
 \item Hermitian, i.e., ${E_{u_i}^{x_i}}^{\dagger}=E_{u_i}^{x_i}$ 
 for all $x_i,u_i$, 
 \item orthogonal projectors, i.e., $E_{u_i}^{x_i} E_{u_i}^{x^{\prime}_i}=E_{u_i}^{x_i}\delta_{x_ix^{\prime}_i}$, 
 \item and sum up to the identity, i.e., $\sum_{x_i}{E_{u_i}^{x_i}}=\mathds{1}_{\mathcal{H}_i}$ 
  for all $u_i$.
\end{enumerate}
\end{definition}
Note that the requirement that the operators correspond to projectors and the state to a pure state is not a restriction, since any POVM on a mixed state is 
equivalent to a projective measurement on a larger pure state (see, e.g.,~\cite{nielsenchuang} for a proof). 

For any $(n+1)$-party quantum system, the marginal and conditional systems are also quantum systems. 
\begin{lemma}\label{lemma:qmarginalconditional}
Consider an $(n+1)$-party quantum system $P_{\bof{X}Z|\bof{U}W}$. Then the marginal system %\linebreak[4] $
\begin{eqnarray}
\nonumber P_{\bof{X}|\bof{U}}(\bof{x},\bof{u})&:= &
\sum_z P_{\bof{X}Z|\bof{U}W}(\bof{x},z,\bof{u},w)
\end{eqnarray}
%$ 
and the conditional system %\linebreak[4] $
\begin{eqnarray}
\nonumber P_{\bof{X}|\bof{U},W=w,Z=z}(\bof{x},\bof{u})&:=&
\frac{1}{P_{Z|W=w}(z)}P_{\bof{X}Z|\bof{U}W}(\bof{x},z,\bof{u},w) 
\end{eqnarray}
%$
 are $n$-party quantum systems. 
\end{lemma}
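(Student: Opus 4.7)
The plan is to handle the two claims separately, in each case writing down an explicit pure state and set of projective measurements that realize the required distribution, and then recasting the result on $n$ parties by absorbing the auxiliary Hilbert space $\mathcal{H}_{n+1}$ into one of the remaining parties. Concretely, starting from the representation
\begin{eqnarray*}
P_{\bof{X}Z|\bof{U}W}(\bof{x},z,\bof{u},w) &=& \brakket{\psi}{\bigotimes_{i=1}^n E_{u_i}^{x_i}\otimes E_w^z}{\psi},
\end{eqnarray*}
with $\ket{\psi}\in\bigotimes_{i=1}^{n+1}\mathcal{H}_i$ and operators satisfying the three conditions of Definition~\ref{def:qbehavior}, I would sum over $z$ and use the completeness relation $\sum_z E_w^z = \mathds{1}_{\mathcal{H}_{n+1}}$. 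This eliminates the $w$-dependence and produces $\brakket{\psi}{\bigotimes_{i=1}^n E_{u_i}^{x_i}\otimes \mathds{1}_{\mathcal{H}_{n+1}}}{\psi}$. To present this as an $n$-party quantum behaviour, I absorb $\mathcal{H}_{n+1}$ into, say, the $n$-th party by setting $\tilde{\mathcal{H}}_n:=\mathcal{H}_n\otimes\mathcal{H}_{n+1}$ and $\tilde E_{u_n}^{x_n}:=E_{u_n}^{x_n}\otimes\mathds{1}_{\mathcal{H}_{n+1}}$; these operators remain Hermitian, orthogonal projectors, and sum to $\mathds{1}_{\tilde{\mathcal{H}}_n}$, and $\ket{\psi}$ remains pure on the regrouped tensor product.

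For the conditional system, the natural candidate is the post-measurement pure state
\begin{eqnarray*}
\ket{\psi_{w,z}} &:=& \frac{(\mathds{1}\otimes E_w^z)\ket{\psi}}{\sqrt{P_{Z|W=w}(z)}},
\end{eqnarray*}
which is well-defined and normalized because $E_w^z$ is a Hermitian projector, so $P_{Z|W=w}(z)=\brakket{\psi}{\mathds{1}\otimes E_w^z}{\psi}=\|(\mathds{1}\otimes E_w^z)\ket{\psi}\|^2$. A one-line check using $E_w^z E_w^z = E_w^z$ then gives $\brakket{\psi_{w,z}}{\bigotimes_{i=1}^n E_{u_i}^{x_i}\otimes\mathds{1}}{\psi_{w,z}}=P_{\bof{X}|\bof{U},W=w,Z=z}(\bof{x},\bof{u})$. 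Absorbing $\mathcal{H}_{n+1}$ into one of the $n$ parties exactly as in the marginal case finishes the argument.

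There is no real obstacle; this is essentially an unwinding of definitions. The only points that deserve a word of verification are that the regrouped operators still satisfy the three structural conditions of Definition~\ref{def:qbehavior} on $\tilde{\mathcal{H}}_n$ (which follows from the identity on $\mathcal{H}_{n+1}$ being a Hermitian projector), and that $P_{Z|W=w}(z)>0$ in the conditional case (which is automatic, since otherwise the conditional distribution is undefined). Note that the fact that $\ket{\psi_{w,z}}$ is pure is what lets us stay inside Definition~\ref{def:qbehavior} rather than invoking the purification remark that follows it.
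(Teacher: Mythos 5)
Your proof is correct and follows essentially the same route as the paper: keep $\ket{\psi}$ for the marginal (using completeness to eliminate Eve's projector) and take the renormalized post-measurement pure state $(\mathds{1}\otimes E_w^z)\ket{\psi}/\sqrt{P_{Z|W=w}(z)}$ for the conditional, in both cases absorbing $\mathcal{H}_{n+1}$ into the $n$-th party via $E_{u_n}^{x_n}\otimes\mathds{1}_{\mathcal{H}_{n+1}}$. You supply slightly more verification of the structural conditions and the normalization than the paper bothers to write out, but there is no substantive difference.
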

This follows, of course, directly from the properties of quantum systems. However, as an illustration, we give a direct proof in our framework.
\begin{proof} 
Let $\ket{\psi}$ be the state and $\{E_{u_i}^{x_i}\}$  the measurement operators associated with the original system. 
For the marginal system, take the same state $\ket{\psi}$ and the measurement operators $\{E_{u_i}^{x_i}\}$ for all $i<n$. The measurement operator associated 
with the $n$\textsuperscript{th} party are $\{E_{u_n}^{x_n}\otimes \mathds{1}_{\mathcal{H}_{n+1}}\}$.  They fulfil the requirements because they are part of the 
requirements of the operators of the $(n+1)$-party quantum system. 
For the conditional system take the state 
$\frac{1}{\sqrt{\brakket{\psi}{\mathds{1}_{\mathcal{H}_{1\ldots n}} \otimes E_w^z}{\psi}}} \mathds{1}_{\mathcal{H}_{1\ldots n}} \otimes E_w^z \ket{\psi}$, 
where $\mathds{1}_{\mathcal{H}_{1\ldots n}}=\bigotimes_{i=1}^{n}\mathds{1}_{\mathcal{H}_i}$
 and the measurement operators $\{E_{u_i}^{x_i}\}$.  
\end{proof}

%In quantum cryptography, when Alice and Bob share a certain quantum state described by %density operators $\rho_{AB}$, it is usually assumed that Eve controls the 
%whole environment, i.e., the total quantum state between Alice, Bob and Eve is pure. 
Lemma~\ref{lemma:qmarginalconditional} directly implies that any measurement of Eve on her part of the 
system induces a convex decomposition of Alice's and Bob's system into several conditional quantum systems. 
\begin{remark}\label{remark:qpartition}
Every input to an $(n+1)$-party quantum system,  $P_{\bof{X}Z|\bof{U}W}$, corresponds to a decomposition of the marginal $n$-party system 
 $P_{\bof{X}|\bof{U}}$ such that 
 \begin{equation}
 \nonumber 
 P_{\bof{X}|\bof{U}}=\sum_z p^z\cdot P^z_{\bof{X}|\bof{U}}\ ,
 \end{equation}
where $p^z:= P_{Z|W=w}(z)$ is a probability and $P^z_{\bof{X}|\bof{U}}:=P_{\bof{X}|\bof{U},W=w,Z=z}$ is a quantum system. 
\end{remark}

\section{Bounding the Set of Quantum Systems by Semi-Definite Programming}\label{sec:condition}

In~\cite{npa07}, Navascu\'es, Pironio and Ac\'in give a criterion in terms of a semi-definite program  (see, e.g.~\cite{bv,btn} for an 
introduction to semi-definite programming) which any quantum system must fulfil (see also~\cite{npa,dltw08}). 
The idea is that if a system is quantum, then it is possible to associate a matrix $\Gamma$ with it which needs to be positive semi-definite. We will use the notation $\Gamma\succeq 0$ to denote positive semi-definite matrices.  $\Gamma$ can be seen as the matrix defined as follows.

\begin{definition}
 A \emph{sequence of length $k$} of a set 
$\{E_{u_i}^{x_i}:x_i\in \mathcal{X}_i, u_i\in \mathcal{U}_i,i\in 1,\ldots,n\}$ is 
 a product of $k$ operators of this set. 
  The sequence of length $0$ is defined as the identity operator. 
\end{definition}

\begin{definition}\label{def:gamma}
The matrix \emph{$\Gamma^k$} is defined as 
\begin{eqnarray}
\nonumber \Gamma^k_{ij}:= \brakket{\Psi}{O_i^\dagger O_j}{\Psi}\ ,
\end{eqnarray}
where $O_i=E_{u_m}^{x_m}\cdot E_{u^{\prime}_n}^{x^{\prime}_n}\cdots $ is a sequence of length at most $k$ of the measurement operators $\{E_{u_i}^{x_i}\}$. 
\end{definition}
In the above notation we consider the measurement operators as operators on the whole Hilbert space $\mathcal{H}$. These operators must, of course, fulfil the conditions of 
Definition~\ref{def:qbehavior} (i.e., they must be Hermitian orthogonal projectors and sum up to the identity for each input) and they must commute. Note that in finite 
dimensions, commutativity is equivalent to the tensor product structure as in Definition~\ref{def:qbehavior} (see, e.g.,~\cite{wehnerphd} for an explicit proof of this). 

The requirements the measurement operators fulfil (Definition~\ref{def:qbehavior}) translate into requirements on the entries of the matrix $\Gamma^k$. For example, certain entries must be equal to others 
or the sum of some must be equal to the sum of others. 

In order to decide whether a certain system is quantum, we can ask the question whether such a matrix 
$\Gamma^k$ exists; because if it is, it must be possible to associate a matrix $\Gamma$ (as in Definition~\ref{def:gamma}) with it, which is consistent 
with the probabilities describing the system and fulfil the above requirements. The problem of finding a consistent matrix $\Gamma^k$ is a semi-definite programming problem. 
\begin{theorem}[Navascu\'es, Pironio, Ac\'in~\cite{npa07}]
For every quantum system $P_{\bof{X}|\bof{U}}$ and $k\in \mathbb{N}$ there exists a symmetric matrix $\Gamma^k$ with 
$\Gamma^k_{ij}=\brakket{\Psi}{O_i^\dagger O_j}{\Psi}$ and where $O_i=E_{u_m}^{x_m}\cdot E_{u^{\prime}_n}^{x^{\prime}_n}\cdots $ is a sequence of length $k$. 
Furthermore, 
\begin{eqnarray}
\nonumber A_{\mathrm{qb}}\cdot \Gamma^k&=&0\ \text{and}\\
\nonumber \Gamma^k &\succeq & 0\ ,
\end{eqnarray}
where $A_{\mathrm{qb}}$ is defined by the conditions 
\begin{itemize}
 \item orthogonal projectors: $\brakket{\Psi}{O E_{u_i}^{x_i} E_{u_i}^{x^{\prime}_i}O^{\prime}}{\Psi}-\brakket{\Psi}{OE_{u_i}^{x_i}\delta_{x_ix^{\prime}_i} O^{\prime}}{\Psi}=0$  
 \item completeness: $\sum_{x_i}\brakket{\Psi}{OE_{u_i}^{x_i} O^{\prime}}{\Psi}-\brakket{\Psi}{O O^{\prime}}{\Psi}=0$ for all ${u}_i$ 
 \item commutativity: $\brakket{\Psi}{O {E_{u_i}^{x_i}} {E_{u_j}^{x_j}} O^{\prime}}{\Psi}=\brakket{\Psi}{O{E_{u_j}^{x_j}} {E_{u_i}^{x_i}}O^{\prime}}{\Psi}$ for $i\neq j$,
\end{itemize}
where  $O$ and $O^{\prime}$ stand for arbitrary operator sequences of the set $\{E_{u_i}^{x_i}\}$.
 
$\Gamma^k$ is called \emph{quantum certificate of order $k$} associated with the system $P_{\bof{X}|\bof{U}}$. 
\end{theorem}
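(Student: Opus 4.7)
The plan is to prove the theorem by direct construction: starting from the quantum realization of $P_{\bof{X}|\bof{U}}$ guaranteed by Definition~\ref{def:qbehavior}, simply define $\Gamma^k$ by the formula $\Gamma^k_{ij} = \brakket{\Psi}{O_i^\dagger O_j}{\Psi}$ and verify the two requirements. There is no existence problem to solve beyond writing down the right object; the content of the theorem is that this matrix always satisfies the SDP constraints.

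First I would fix a realization $(\ket{\Psi}, \{E_{u_i}^{x_i}\})$ with each operator promoted to the full Hilbert space $\mathcal{H} = \bigotimes_i \mathcal{H}_i$ via tensoring with identities on the other factors, so that operators acting on different parties commute (this is what bridges the tensor-product Definition~\ref{def:qbehavior} to the commutativity constraint in $A_{\mathrm{qb}}$). Enumerate all sequences $O_i$ of length at most $k$ over $\{E_{u_i}^{x_i}\}$ (including the identity of length $0$) to index the rows and columns, and define $\Gamma^k$ entrywise as in Definition~\ref{def:gamma}. Symmetry (really Hermiticity, but the matrix is real after accounting for adjoints) follows because $\overline{\brakket{\Psi}{O_i^\dagger O_j}{\Psi}} = \brakket{\Psi}{O_j^\dagger O_i}{\Psi}$.

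The key step is the positive semi-definiteness. For any complex vector $v = (v_i)$, I would compute
\begin{equation}
\nonumber
v^\dagger \Gamma^k v \;=\; \sum_{i,j} \bar{v}_i v_j \brakket{\Psi}{O_i^\dagger O_j}{\Psi} \;=\; \Bigl\| \sum_i v_i O_i \ket{\Psi} \Bigr\|^2 \;\geq\; 0,
\end{equation}
which gives $\Gamma^k \succeq 0$ immediately. The linear constraint $A_{\mathrm{qb}} \cdot \Gamma^k = 0$ is then verified one row of $A_{\mathrm{qb}}$ at a time: the orthogonal-projector row is the operator identity $E_{u_i}^{x_i} E_{u_i}^{x'_i} = \delta_{x_i x'_i} E_{u_i}^{x_i}$ sandwiched inside $\brakket{\Psi}{O \,\cdot\, O'}{\Psi}$; the completeness row is $\sum_{x_i} E_{u_i}^{x_i} = \mathds{1}_{\mathcal{H}_i}$ used likewise; the commutativity row is the tensor-product structure of Definition~\ref{def:qbehavior}, which forces $E_{u_i}^{x_i} E_{u_j}^{x_j} = E_{u_j}^{x_j} E_{u_i}^{x_i}$ for $i \neq j$ on the full space.

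There is no real obstacle — the only nontrivial point is the conceptual one of recognizing that the matrix is a Gram matrix of the vectors $O_i \ket{\Psi}$ and therefore automatically positive semi-definite, and that the algebraic relations on the $E_{u_i}^{x_i}$ lift verbatim to the matrix entries when sandwiched between $\bra{\Psi}$ and $\ket{\Psi}$. Note that consistency with the observed statistics is automatic too: whenever $O_i^\dagger O_j$ equals a single product $\prod_i E_{u_i}^{x_i}$ realizing an entry of $P_{\bof{X}|\bof{U}}$, the corresponding entry of $\Gamma^k$ reproduces that probability by Definition~\ref{def:qbehavior}. The only care needed is enumerating sequences up to length $k$ in a canonical way and extending operators by identities consistently so that the product structure of $O_i^\dagger O_j$ is unambiguous.
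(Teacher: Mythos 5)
Your proof is essentially the paper's own: both recognize $\Gamma^k$ as the Gram matrix of the vectors $O_i\ket{\Psi}$, giving positive semi-definiteness immediately, and both verify the linear constraints of $A_{\mathrm{qb}}$ by sandwiching the operator identities (projector orthogonality, completeness, commutativity from the tensor-product structure) between $\bra{\Psi}$ and $\ket{\Psi}$. The one small point the paper makes explicit that you only gesture at is realness: the Gram matrix is a priori complex Hermitian, and the paper obtains a real symmetric certificate by replacing $\Gamma^k$ with $(\Gamma^k+(\Gamma^k)^*)/2$, which still satisfies all the constraints.
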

\begin{proof}
Orthogonality, completeness and Hermiticity follow directly from Definition~\ref{def:qbehavior}. Let us see that the matrix is positive semi-definite. 
 For all $v\in \mathbb{C}^m$
\begin{eqnarray}
 \nonumber v^T \Gamma^k v=\sum_{ij}v_i^T \Gamma^k_{ij} v_i=\sum_{ij}v_i^* \brakket{\Psi}{O_i^\dagger O_j}{\Psi} v_j =\brakket{\Psi}{V^\dagger V}{\Psi}\geq 0
\end{eqnarray}
where $V:=\sum_iv_iO_i$. Finally, the matrix can be taken to be real, because for any complex $\Gamma^k$, the matrix $(\Gamma^k+{\Gamma^k}^*)/2$ is real 
and fulfils the conditions. 
\end{proof}
We do not require this matrix to be normalized. Note that the matrix $\Gamma^k$ contains, in particular, the (potentially not normalized) probabilities 
$P_{\bof{X}|\bof{U}}(\bof{x},\bof{u})$ associated with an $n$-party quantum system, for $n\leq 2k$. 

In~\cite{npa,dltw08}, it is shown that if for all $k\rightarrow \infty$ a certificate of order $k$ can be associated with a certain system $P_{\bof{X}|\bof{U}}$, 
then this system is indeed quantum. More precisely, it corresponds to a quantum system where operators associated with different parties commute, but do not necessarily 
have a tensor product structure. For any finite dimensional system however, commutativity implies a tensor product structure. See, e.g.,~\cite{dltw08} for an explicit proof of this.

\section{Min-Entropy Bound for Single Systems}\label{sec:qsingle}

It will be our goal to show the security of a key-distribution protocol of the form as given in Figure~\ref{fig:our_system_physical}. The crucial part hereby is 
to bound the probability that an eavesdropper interacting with her part of the quantum state can correctly guess the value of Alice's raw key $\bof{X}$, since this corresponds to the min-entropy, by the following theorem.
\begin{theorem}[K\"onig, Renner, Schaffner~\cite{krs}]\label{th:krs}
Let $\rho_{XE}$ be classical on $\mathcal{H}_X$. Then 
\begin{eqnarray}
\nonumber H_{\mathrm{min}}(X|E)_{\rho} &=&-\log_2 P_{\mathrm{guess}}(X|E)_{\rho}\ ,
\end{eqnarray}
where $p_{\mathrm{guess}}(X|E)_{\rho}$ is the maximal probability of decoding $X$ from $E$ with a POVM $\{E_E^x\}_x$ on $\mathcal{H}_E$, i.e., 
\begin{eqnarray}
\nonumber P_{\mathrm{guess}}(X|E)_{\rho}:= \max_{\{E_E^x\}_x}\sum_x p_x \tr(E_E^x\rho_B^x)\ .
\end{eqnarray}
\end{theorem}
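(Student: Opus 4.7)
The plan is to prove the identity by writing each side as a semi-definite program and showing they form a primal--dual pair with no duality gap. The guessing probability is already in primal SDP form, while I expect $2^{-H_{\min}(X|E)_\rho}$ to match the dual.

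First I would unpack the definition of the conditional min-entropy, namely that $2^{-H_{\min}(X|E)_\rho}$ equals the optimum of $\tr(\sigma_E)$ over $\sigma_E\succeq 0$ subject to $\rho_{XE}\preceq \mathds{1}_X\otimes \sigma_E$. Because $\rho_{XE}=\sum_x p_x\ket{x}\bra{x}\otimes \rho_E^x$ is block-diagonal in the $X$-register, this single operator inequality decouples into the finite family $p_x\rho_E^x\preceq \sigma_E$, one constraint per $x$. So the min-entropy is the value of the SDP
\[
\min_{\sigma_E\succeq 0}\ \tr(\sigma_E)\qquad\text{s.t.}\quad \sigma_E\succeq p_x \rho_E^x\ \text{for all } x.
\]

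Next I would write the guessing probability as the primal SDP
\[
\max_{\{E_E^x\}}\ \sum_x\tr\bigl(E_E^x\, p_x\rho_E^x\bigr)\qquad\text{s.t.}\quad E_E^x\succeq 0,\ \sum_x E_E^x = \mathds{1}_E,
\]
and derive its Lagrange dual by introducing a Hermitian multiplier $\sigma_E$ for the completeness constraint. Maximizing the Lagrangian over each $E_E^x\succeq 0$ forces $\sigma_E\succeq p_x\rho_E^x$ for every $x$ (otherwise the supremum is $+\infty$), and the residual objective is $\tr(\sigma_E)$. The resulting dual program thus coincides exactly with the SDP characterising $2^{-H_{\min}(X|E)_\rho}$ obtained in the previous step; note that the constraints $\sigma_E\succeq p_x\rho_E^x\succeq 0$ automatically imply $\sigma_E\succeq 0$, so no extra condition on the multiplier is needed.

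Finally I would close the argument by appealing to strong duality. Slater's condition holds for the primal: the uniform POVM $E_E^x=\mathds{1}_E/|\mathcal{X}|$ is strictly feasible, sitting in the interior of the positive semi-definite cone. Strong duality then gives $P_{\mathrm{guess}}(X|E)_\rho=2^{-H_{\min}(X|E)_\rho}$, and taking $-\log_2$ yields the theorem. The step that I expect to deserve the most care is the initial reduction of the operator inequality $\rho_{XE}\preceq \mathds{1}_X\otimes \sigma_E$ to the simultaneous block constraints $p_x\rho_E^x\preceq \sigma_E$, together with the strong-duality justification; after that, the remainder is routine Lagrangian calculus.
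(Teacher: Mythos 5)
The paper states this result as a cited theorem of K\"onig, Renner and Schaffner~\cite{krs} and gives no proof of its own, so there is nothing internal to compare against. Your SDP primal--dual argument---unpacking $2^{-H_{\min}(X|E)_\rho}$ as $\min\{\tr(\sigma_E):\sigma_E\succeq 0,\ \rho_{XE}\preceq\mathds{1}_X\otimes\sigma_E\}$, using the block-diagonal structure of the classical-quantum state to reduce the single operator inequality to the per-symbol constraints $p_x\rho_E^x\preceq\sigma_E$, computing the Lagrange dual of the POVM-optimization defining $P_{\mathrm{guess}}$, and closing the gap via Slater's condition with the uniform POVM as strictly feasible point---is correct and is precisely the standard proof given in the cited reference.
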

 This implies that, in order to bound the min-entropy, we can equivalently bound the guessing probability. Once the min-entropy is bounded, a secure key can be obtained using standard techniques, such as information reconciliation~\cite{brassardsalvail} and privacy amplification, which work even if the adversary holds quantum information~\cite{rennerkoenig,rennerphd}.

In this section, we will see how it is possible to determine the security of a single system (corresponding to a single measurement of Alice and Bob) by a semi-definite program. In Section~\ref{sec:several}, we will see that 
the security of many systems, and therefore of the key distribution scheme, directly relates to the security of the single system.

In the following, we will often consider a $(2n+1)$-party quantum system $P_{\bof{X}\bof{Y}Z|\bof{U}\bof{V}W}$ (as well as its marginals) where 
$\bof{U}=(U_1\ldots U_n)$ and $\bof{X}=(X_1\ldots X_n)$ are Alice's inputs and outputs, 
$\bof{V}=(V_1\ldots V_n)$ and $\bof{Y}=(Y_1\ldots Y_n)$ are Bob's inputs and outputs, and 
$W$ and $Z$  Eve's input and output. 
The fact that Eve only has a single input and output reflects the fact that Eve may perform 
a \emph{joint attack}, which means that she would not necessarily measure her subsystems individually.

\subsection{A bound on the min-entropy}\label{subsec:qguess}

We will, in the following, study the scenario where Eve can choose an input $W$,  depending on some additional information $Q$, and then obtains an output $Z$ 
(depending on $W$). She should then try to guess a value $f(\bof{X})$ of range $\mathcal{F}$. In particular, this function $f$ can, of course, be the identity function on the outputs on Alice's side.  

\begin{definition}
The \emph{guessing probability of $f(\bof{X})$ given $Z(W)$} is 
\begin{eqnarray}
\nonumber P_{\mathrm{guess}}(f(\bof{X})|Z(W),Q)&=& \sum_q \max_{w} \sum_z P_{ZQ|W=w}(z,q)
%\\ \nonumber &&
\cdot
 \max_{f(x)} P_{f(X)|Z=z,Q=q,W=w}(f(x))\ ,
\end{eqnarray}
where the maximization is over all $w$ such that $P_{\bof{X}Z|\bof{U}W}$ is a quantum system. The \emph{min-entropy of $f(\bof{X})$ given $Z(W)$} is 
\begin{eqnarray}
\nonumber H_{\mathrm{min}}(f(\bof{X})|Z(W),Q))=-\log_2 P_{\mathrm{guess}}(f(\bof{X})|Z(W),Q))\ .
\end{eqnarray}
\end{definition}

Remark~\ref{remark:qpartition} gives a bound on the probability that a quantum adversary can guess Alice's outcome by the following maximization problem. (We assume 
that the inputs $\bof{u}$ are public, i.e., $Q=(\bof{U}=\bof{u},F=f)$)
\begin{lemma}\label{lemma:qattackopt}
The value of $P_{\mathrm{guess}}(f(\bof{X})|Z(W),Q)$ where $P_{\bof{X}Z|\bof{U}W}$ is a $(n+1)$-party quantum system and $Q=(\bof{U}=\bof{u})$ is bounded 
by the optimal value of the following optimization problem
\begin{eqnarray}
\nonumber \max :&& \sum_{z=1}^{|\mathcal{F}|} p^z \sum_{\bof{x}:f(\bof{x})=z} P^z_{\bof{X}|\bof{U}}(\bof{x},\bof{u})\\
\nonumber \st : && P_{\bof{X}|\bof{U}}= \sum_{z=1}^{|\mathcal{F}|}  p^z\cdot P^z_{\bof{X}|\bof{U}} \\
\nonumber &&P^z_{\bof{X}|\bof{U}}\ n\text{-party quantum system, for all }z\ .
\end{eqnarray}
\end{lemma}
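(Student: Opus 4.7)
The plan is to show that any attack strategy of Eve (i.e., any choice of input $W=w$) induces a feasible point of the optimization problem whose value is exactly the conditional guessing probability achieved by that attack. Taking the maximum over $w$ on the one side and comparing to the supremum of the optimization on the other then yields the desired bound.

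First, since $Q=(\bof{U}=\bof{u})$ is fixed (the inputs are public), the sum over $q$ in the definition of $P_{\mathrm{guess}}(f(\bof{X})|Z(W),Q)$ collapses, and the quantity simplifies to
\begin{eqnarray*}
  P_{\mathrm{guess}}(f(\bof{X})|Z(W),Q)
  &=& \max_w \sum_z P_{Z|W=w}(z)\,\max_{y\in\mathcal{F}}\sum_{\bof{x}: f(\bof{x})=y} P^z_{\bof{X}|\bof{U}}(\bof{x},\bof{u}),
\end{eqnarray*}
where I have abbreviated $P^z_{\bof{X}|\bof{U}} := P_{\bof{X}|\bof{U},W=w,Z=z}$. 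By Lemma~\ref{lemma:qmarginalconditional} each $P^z_{\bof{X}|\bof{U}}$ is an $n$-party quantum system, and by Remark~\ref{remark:qpartition} the collection $\{(p^z, P^z_{\bof{X}|\bof{U}})\}_z$ with $p^z := P_{Z|W=w}(z)$ is a convex decomposition of the marginal $P_{\bof{X}|\bof{U}}$.

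Next, fix $w$ and, for each outcome $z$, let $y_z\in\mathcal{F}$ be the value of $y$ that achieves the inner maximum. The contribution of $w$ becomes
\begin{eqnarray*}
  \sum_z p^z \sum_{\bof{x}: f(\bof{x})=y_z} P^z_{\bof{X}|\bof{U}}(\bof{x},\bof{u}).
\end{eqnarray*}
Since the optimization in the lemma has exactly $|\mathcal{F}|$ components, I would now coarse-grain: for each $y\in\mathcal{F}$ set $\tilde p^y := \sum_{z:\, y_z = y} p^z$ and, whenever $\tilde p^y>0$, define the coarse-grained system $\tilde P^y_{\bof{X}|\bof{U}} := (\tilde p^y)^{-1}\sum_{z:\, y_z = y} p^z\, P^z_{\bof{X}|\bof{U}}$ (otherwise let $\tilde P^y_{\bof{X}|\bof{U}}$ be an arbitrary $n$-party quantum system). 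The pair $\{(\tilde p^y, \tilde P^y_{\bof{X}|\bof{U}})\}_{y\in\mathcal{F}}$ still decomposes $P_{\bof{X}|\bof{U}}$ and, by construction, achieves the same objective value $\sum_y \tilde p^y \sum_{\bof{x}:f(\bof{x})=y} \tilde P^y_{\bof{X}|\bof{U}}(\bof{x},\bof{u})$ as the original attack.

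The one nontrivial step is to verify that each $\tilde P^y_{\bof{X}|\bof{U}}$ is itself a quantum system, since it is a convex combination of quantum systems. This is the main obstacle in the write-up, but it follows from the convexity of the set of quantum behaviours: given realizations of the $P^z_{\bof{X}|\bof{U}}$ on Hilbert spaces $\mathcal{H}^{(z)} = \bigotimes_i \mathcal{H}^{(z)}_i$, one realizes their convex combination by adjoining a classical flag register to each party's Hilbert space (i.e., taking the direct sum of the individual realizations) and using projective measurements that first read the flag and then act with the appropriate $E_{u_i}^{x_i}$. Alternatively, one can simply invoke convexity of the matrices $\Gamma^k$ satisfying the Navascu\'es--Pironio--Ac\'in constraints of Section~\ref{sec:condition}, which is immediate since both $A_{\mathrm{qb}}\cdot \Gamma = 0$ and $\Gamma \succeq 0$ are preserved under convex combinations. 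Either way, $\tilde P^y_{\bof{X}|\bof{U}}$ is a valid $n$-party quantum system, so the coarse-grained collection is a feasible point of the optimization. Maximizing over $w$ on the left and over all feasible decompositions on the right therefore gives $P_{\mathrm{guess}}(f(\bof{X})|Z(W),Q) \leq$ optimal value, as claimed.
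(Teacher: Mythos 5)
Your proof is correct and matches the paper's approach: coarse-grain from $|\mathcal{Z}|$ to $|\mathcal{F}|$ outcomes by grouping the $z$'s with the same optimal guess $y_z$, and verify the coarse-grained pieces remain feasible using convexity of the set of quantum systems. One small caveat worth noting: your fallback argument via convexity of the NPA matrices $\Gamma^k$ only shows that the coarse-grained behaviour admits a level-$k$ certificate, not that it is literally an $n$-party quantum system as the lemma's constraint demands, so the direct-sum (flag-register) construction is the one that actually carries the weight here.
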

\begin{proof}
The first condition follows by the definition of the marginal system and the second by the fact that for any $(n+1)$-party 
quantum system the conditional systems are $n$-party quantum systems (see Lemma~\ref{lemma:qmarginalconditional}). The objective function is the definition of guessing 
probability. 
It is sufficient to consider the case $|\mathcal{Z}|=|\mathcal{F}|$ because any system where $Z$ has larger range can be made into a system reaching the same guessing 
probability by combining the system where the same value $f(\bof{X})$ has maximal probability. By the convexity of quantum systems, this is still a quantum system. 
\end{proof}

The criterion discussed in Section~\ref{sec:condition} allows to replace the condition that $P_{\bof{XY}|\bof{UV}}^z$ is a quantum behaviour by the condition that a 
certain matrix is positive semi-definite. We can now bound Eve's guessing probability by a semi-definite program. 
A similar bound has been obtained in~\cite{random} 
%Similar techniques can be used 
in the context of device-independent randomness expansion.
%~\cite{random}. 
\begin{lemma}\label{lemma:guessissdp}
The maximum guessing probability of $f(\bof{X})$ given $Z(W)$ and $Q:=(\bof{U}=\bof{u},F=f)$ is bounded by\footnote{In the following, we sometimes write 
matrices as vectors by writing each column `on top of each other'. When we write that a vector needs to be positive semi-definite, we mean that the matrix obtained 
by the inverse transformation must be positive semi-definite.}
\begin{eqnarray}
\nonumber P_{\mathrm{guess}}(f(\bof{X})|Z(W),Q)&\leq &  \sum_{z=1}^{|\mathcal{F}|} b_z^T\cdot \Gamma^{z}\ ,
\end{eqnarray}
where  $\sum_{z=1}^{|\mathcal{F}|} b_z^T\cdot \Gamma^{z}$ is the optimal value of the semi-definite program
\begin{eqnarray}
\max :&& \sum_{z=1}^{|\mathcal{F}|} \sum_{\bof{x}:f(\bof{x})=z} \Gamma^{z}(\bof{x},\bof{u})\\
\nonumber \st :&& A_{\mathrm{qb}}\cdot \Gamma^z=0\ \text{ for all } z 
\\
\nonumber && \Gamma^z\succeq 0\\
\nonumber && \sum_z \Gamma^z = \Gamma^k_{\mathrm{marg}}
\end{eqnarray}
where $\Gamma^{z}(\bof{x},\bof{u})$ denotes the entry of the matrix $\Gamma^z$ corresponding to $\brakket{\Psi}{\prod_i E_{u_i}^{x_i} E_w^z}{\Psi}$, i.e., the 
probability $P^z_{\bof{X}|\bof{U}}(\bof{x},\bof{u})$; $b_z$ is a matrix of the same size as $\Gamma^z$ and it has a $1$ at the positions where $\Gamma^k$ has the 
entry $\brakket{\psi}{O_i^\dagger O_i}{\psi}$, where $O_i=\prod_m E_{u_m}^{x_m}$ 
 such that $f(\bof{x})=z$.  $\Gamma^k_{\mathrm{marg}}$ denotes the certificate of order $k$ associated with the marginal system $P_{\bof{X}|\bof{U}}$. 
\end{lemma}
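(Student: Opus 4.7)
The plan is to derive this SDP bound as an NPA-style relaxation of the convex optimization problem already established in Lemma~\ref{lemma:qattackopt}. The key move is to replace each constraint ``$P^z_{\bof{X}|\bof{U}}$ is a quantum system'' by the strictly weaker requirement that $P^z_{\bof{X}|\bof{U}}$ admit a quantum certificate of order $k$ in the sense of Section~\ref{sec:condition}, and to absorb the scalar $p^z$ into the certificate itself so that the whole problem becomes linear in a single matrix variable $\Gamma^z$.

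First I would identify each quantity $p^z \cdot P^z_{\bof{X}|\bof{U}}(\bof{x},\bof{u})$ with the corresponding entry $\Gamma^z(\bof{x},\bof{u})$ of a sub-normalized certificate. Since $\Gamma$ is not required to be normalized, and since the conditions $A_{\mathrm{qb}} \cdot \Gamma = 0$ and $\Gamma \succeq 0$ are preserved under positive rescaling, the matrix $p^z \cdot \Gamma^{z,k}$ (where $\Gamma^{z,k}$ is a certificate of the normalized $P^z_{\bof{X}|\bof{U}}$) is again a valid certificate. Under this identification the objective of Lemma~\ref{lemma:qattackopt} reads
\begin{eqnarray}
\nonumber \sum_{z=1}^{|\mathcal{F}|} p^z \sum_{\bof{x}: f(\bof{x})=z} P^z_{\bof{X}|\bof{U}}(\bof{x},\bof{u}) &=& \sum_{z=1}^{|\mathcal{F}|} \sum_{\bof{x}: f(\bof{x})=z} \Gamma^z(\bof{x},\bof{u})\ ,
\end{eqnarray}
and the right-hand side is exactly $\sum_z b_z^T \cdot \Gamma^z$ because $b_z$ is nothing more than the $0/1$ indicator matrix that selects the entries of $\Gamma^z$ corresponding to sequences $\prod_i E_{u_i}^{x_i}$ with $f(\bof{x})=z$.

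Next I would lift the decomposition $P_{\bof{X}|\bof{U}} = \sum_z p^z P^z_{\bof{X}|\bof{U}}$ from the level of probabilities to the level of moment matrices, yielding the constraint $\sum_z \Gamma^z = \Gamma^k_{\mathrm{marg}}$. This follows because, on the purification underlying Remark~\ref{remark:qpartition}, Eve's projector $E_w^z$ applied to $\ket{\psi}$ produces the (sub-normalized) conditional state of Alice and Bob, and $\brakket{\psi}{O_i^\dagger O_j}{\psi} = \sum_z \brakket{\psi}{O_i^\dagger (\mathds{1} \otimes E_w^z) O_j}{\psi}$ by completeness of Eve's measurement; hence the moment matrix of the marginal system is the sum of the sub-normalized moment matrices of the conditional ones. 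Since the NPA conditions are necessary for a behaviour to be quantum, every feasible point of the problem in Lemma~\ref{lemma:qattackopt} maps to a feasible point of the SDP with the same objective value, so the SDP optimum is an upper bound on $P_{\mathrm{guess}}(f(\bof{X})|Z(W),Q)$. The only delicate step is the matrix-level equality $\sum_z \Gamma^z = \Gamma^k_{\mathrm{marg}}$: it needs to be verified not only on the one-party probability entries, but on every moment of order up to $2k$, which is precisely what the completeness argument above provides.
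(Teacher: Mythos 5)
Your argument is correct and follows essentially the same route as the paper's proof, which invokes Lemma~\ref{lemma:qattackopt}, the existence of an order-$k$ certificate for each conditional quantum system, and the completeness relation $\sum_z E_w^z=\mathds{1}$. You merely spell out the details that the paper leaves implicit (absorbing $p^z$ into a sub-normalized certificate and verifying the moment-matrix-level decomposition $\sum_z\Gamma^z=\Gamma^k_{\mathrm{marg}}$), which is the right way to fill them in.
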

\begin{proof}
This follows from Lemma~\ref{lemma:qattackopt}, the fact that any quantum system $P_{\bof{X}|\bof{U}}^z$ has a quantum certificate of order $k$ and $\sum_z E_w^z=\mathds{1}$. 
\end{proof}

The primal and dual program can be expressed as:
\begin{eqnarray}
\nonumber \text{PRIMAL}\\
\label{eq:primalguess}\max :& \sum_{z=1}^{|\mathcal{F}|}  b_z^T\cdot \Gamma_z\\
\nonumber\\
\nonumber\st :& 
\underbrace{
\left(
\begin{array}{ccc}
A_{\mathrm{qb}} & \cdots  & 0\\
& \ddots & \\
0 & \cdots & A_{\mathrm{qb}}\\
\mathds{1} & \cdots &\mathds{1}
\end{array}
\right)
}_{A}
\cdot 
\left( 
\begin{array}{c}
 \Gamma_1\\
\vdots \\
\Gamma_{|\mathcal{F}|}
\end{array}
\right)
=
\underbrace{
\left(
\begin{array}{c}
0\\
\vdots \\
0 \\
\Gamma_{\mathrm{marg}}^k
\end{array}
\right)
}_{c} \\
\nonumber \\
\nonumber& \Gamma_i \succeq 0\ \text{ for all } i
\end{eqnarray}
\begin{eqnarray}
\nonumber \text{DUAL}\\
\label{eq:dualguess} \min :& {\Gamma^k_{\mathrm{marg}}}^T\cdot  \lambda_{{|\mathcal{F}|}+1}\\
\nonumber \\
 \st :& 
 \underbrace{
\left(
\begin{array}{cccc}
A_{\mathrm{qb}}^T & \cdots  & 0 & \mathds{1} \\
& \ddots &&\\
0 & \cdots & A_{\mathrm{qb}}^T & \mathds{1}\\
\end{array}
\right)
}_{A^T}
\cdot 
\left(
\begin{array}{c}
\lambda_1 \\
\vdots \\
\lambda_{|\mathcal{F}|}\\
\lambda_{{|\mathcal{F}|}+1}
\end{array}
\right)
\succeq 
\underbrace{
\left(
\begin{array}{c}
b_1 \\
\vdots \\
b_{|\mathcal{F}|}
\end{array}
\right)
}_{b}
\nonumber \\
\nonumber \\
\nonumber &\lambda_i\ \text{unrestricted}
\end{eqnarray}
We note that any dual feasible solution gives an upper bound on the guessing probability (linear) in terms of the matrix $\Gamma_{\mathrm{marg}}$ associated with the marginal 
system of Alice and Bob. Furthermore, the dual feasible region is  \emph{independent} of Alice's and Bob's marginal system, it only depends on the number of inputs and outputs 
and the step in the semi-definite hierarchy considered. 

However, the matrix $\Gamma^k_{\mathrm{marg}}$ contains entries which do not correspond to observable probabilities and are only known if the state and measurement operators are 
known. It will be the goal of the next section to express the guessing probability in terms of observable quantities.

\subsection{A min-entropy bound in terms of observable probabilities}\label{subsec:observableprob}

Certain entries of the matrix $\Gamma^k_{\mathrm{marg}}$ do not correspond to observable probabilities and it is, therefore, impossible to know their value 
by testing the system. In this section, we will modify the above optimization problem in such a way as to get a solution only in terms of observable probabilities. 
More precisely, we will modify the optimization problem to take the `worst' possible quantum certificate  consistent with observed probabilities. This leads to the 
following, modified, semi-definite program. The matrix $A_{\mathrm{IJ}}$ is defined such that multiplied with a quantum certificate the observable probabilities are 
obtained, i.e., $A_{\mathrm{IJ}}\cdot \Gamma^k=P_{\bof{X}|\bof{U}}$ (where $P_{\bof{X}|\bof{U}}$ denotes here the vector containing the values $P_{\bof{X}|\bof{U}}(\bof{x},\bof{u})$ 
for all $\bof{x},\bof{u}$).
\begin{eqnarray}
\nonumber \text{PRIMAL}\\
\label{eq:primalguessprob}
\max :& \sum_{z=1}^{|\mathcal{F}|} b_z^T\cdot \Gamma_z\\
\nonumber \\
\nonumber \st :& \left(
\begin{array}{cccc}
A_{\mathrm{qb}} & \cdots  & 0 & \phantom{-}0\\
 & \ddots & & \phantom{-}0\\
0 & \cdots & A_{\mathrm{qb}} & \phantom{-}0\\
\mathds{1} & \cdots & \mathds{1} & -\mathds{1}\\
0 & \cdots &0 & A_{\mathrm{IJ}}\\
\end{array}
\right)
\cdot 
\left( 
\begin{array}{c}
 \Gamma_1\\
\vdots \\
\Gamma_{|\mathcal{F}|}\\
\Gamma^k_{\mathrm{marg}}
\end{array}
\right)
=
\left(
\begin{array}{c}
0\\
\vdots \\
0\\
0 \\
P_{\bof{X}|\bof{U}}
\end{array}
\right)\\
\nonumber \\
\nonumber & \Gamma_i \succeq 0,\ \Gamma^k_{\mathrm{marg}}\ \text{unrestricted}
\end{eqnarray} 
\begin{eqnarray}
\nonumber \text{DUAL}\\
\label{eq:dualguessprob}
\min : & P_{\bof{X}|\bof{U}}^T \cdot \lambda_{|\mathcal{F}|+2}\\
\nonumber \\
 \st :& 
\left(
\begin{array}{ccccc}
A_{\mathrm{qb}}^T & \cdots  & 0 & \phantom{-}\mathds{1} & 0 \\
& \ddots & 0 &\\
0 & \cdots & A_{\mathrm{qb}}^T & \phantom{-}\mathds{1} & 0\\
0 & \cdots & 0 & -\mathds{1} & A_{\mathrm{IJ}}^T\\
\end{array}
\right)
\cdot 
\left(
\begin{array}{c}
\lambda_1 \\
\vdots \\
\lambda_{|\mathcal{F}|}\\
\lambda_{|\mathcal{F}|+1}\\
\lambda_{|\mathcal{F}|+2}
\end{array}
\right)
\begin{array}{c}
\\
 \succeq \\
\\
=
\end{array}
\left(
\begin{array}{c}
b_1 \\
\vdots \\
b_{|\mathcal{F}|} \\
0
\end{array}
\right)
\nonumber  \\
\nonumber  &\lambda_i\ \text{unrestricted}
\end{eqnarray}
Note that we have changed $\Gamma^k_{\mathrm{marg}}$ to be a variable (instead of a constant). Obviously $\Gamma^k_{\mathrm{marg}}\succeq 0$ holds because it is the 
sum of positive semi-definite matrices.

\begin{lemma}\label{lemma:qguessdualprod}
 If $\lambda_1,\ldots,\lambda_{|\mathcal{F}|+2}$ are dual feasible for (\ref{eq:dualguessprob}), then 
$\lambda_1,\ldots,\lambda_{|\mathcal{F}|+1}$ are dual feasible for (\ref{eq:dualguess}) with the same objective value. 
\end{lemma}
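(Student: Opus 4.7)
The plan is to compare the two dual programs side by side and observe that (\ref{eq:dualguessprob}) is essentially (\ref{eq:dualguess}) augmented with one extra equality constraint plus an additional variable $\lambda_{|\mathcal{F}|+2}$; the extra constraint forces a simple linear relation between $\lambda_{|\mathcal{F}|+1}$ and $\lambda_{|\mathcal{F}|+2}$ through $A_{\mathrm{IJ}}^T$, and this is precisely what identifies the two objective values.

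First I would extract the constraints row by row. In (\ref{eq:dualguessprob}), the first $|\mathcal{F}|$ block-rows read
\begin{eqnarray*}
A_{\mathrm{qb}}^T\lambda_z + \lambda_{|\mathcal{F}|+1} &\succeq & b_z\qquad (z=1,\ldots,|\mathcal{F}|),
\end{eqnarray*}
which coincide exactly with the constraints of (\ref{eq:dualguess}). The only additional requirement of (\ref{eq:dualguessprob}) is the last block-row, namely the equality $-\lambda_{|\mathcal{F}|+1}+A_{\mathrm{IJ}}^T \lambda_{|\mathcal{F}|+2}=0$, i.e., $\lambda_{|\mathcal{F}|+1}=A_{\mathrm{IJ}}^T\lambda_{|\mathcal{F}|+2}$. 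Consequently, if $(\lambda_1,\ldots,\lambda_{|\mathcal{F}|+2})$ is dual feasible for (\ref{eq:dualguessprob}), then dropping $\lambda_{|\mathcal{F}|+2}$ preserves feasibility for (\ref{eq:dualguess}).

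Next I would verify that the two objective values coincide. Using the extra equality together with the defining property $A_{\mathrm{IJ}}\cdot \Gamma^k_{\mathrm{marg}}=P_{\bof{X}|\bof{U}}$ of the matrix $A_{\mathrm{IJ}}$, one has
\begin{eqnarray*}
{\Gamma^k_{\mathrm{marg}}}^T\cdot\lambda_{|\mathcal{F}|+1}
&=& {\Gamma^k_{\mathrm{marg}}}^T\cdot A_{\mathrm{IJ}}^T\lambda_{|\mathcal{F}|+2}
\;=\;(A_{\mathrm{IJ}}\Gamma^k_{\mathrm{marg}})^T\lambda_{|\mathcal{F}|+2}
\;=\;P_{\bof{X}|\bof{U}}^T\cdot\lambda_{|\mathcal{F}|+2},
\end{eqnarray*}
which is exactly the objective of (\ref{eq:dualguessprob}).

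There is no real obstacle in this argument: the lemma is essentially a bookkeeping statement comparing two dual SDPs. The only point needing a line of justification is the identity $A_{\mathrm{IJ}}\,\Gamma^k_{\mathrm{marg}}=P_{\bof{X}|\bof{U}}$, which holds by construction of $A_{\mathrm{IJ}}$ as the selector of those entries of a quantum certificate that correspond to the observable probabilities $P_{\bof{X}|\bof{U}}(\bof{x},\bof{u})$. Putting the feasibility check and the objective equality together yields the claim.
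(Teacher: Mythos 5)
Your proof is correct and takes essentially the same approach as the paper: read off the last block-row of (\ref{eq:dualguessprob}) to obtain $\lambda_{|\mathcal{F}|+1}=A_{\mathrm{IJ}}^T\lambda_{|\mathcal{F}|+2}$, observe that the first $|\mathcal{F}|$ block-rows are identical to the constraints of (\ref{eq:dualguess}), and chain ${\Gamma^k_{\mathrm{marg}}}^T\lambda_{|\mathcal{F}|+1}={\Gamma^k_{\mathrm{marg}}}^TA_{\mathrm{IJ}}^T\lambda_{|\mathcal{F}|+2}=P_{\bof{X}|\bof{U}}^T\lambda_{|\mathcal{F}|+2}$ using $A_{\mathrm{IJ}}\Gamma^k_{\mathrm{marg}}=P_{\bof{X}|\bof{U}}$. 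Your version spells out the feasibility bookkeeping slightly more explicitly than the paper, but the substance is the same.
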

\begin{proof}
We use the fact that $A_{\mathrm{IJ}}\cdot \Gamma^k_{\mathrm{marg}}=P_{\bof{X}|\bof{U}}$. 
Since $\lambda_1,\ldots,\lambda_{|\mathcal{F}|+2}$ are dual feasible for (\ref{eq:dualguessprob}), it holds that $A_{\mathrm{IJ}}^T\cdot \lambda_{|\mathcal{F}|+2}
=\lambda_{|\mathcal{F}|+1}$. Therefore, 
\begin{eqnarray}
\nonumber {\Gamma^k_{\mathrm{marg}}}^T\cdot \lambda_{|\mathcal{F}|+1}&=& {\Gamma^k_{\mathrm{marg}}}^T\cdot A_{\mathrm{IJ}}^T\cdot \lambda_{|\mathcal{F}|+2} = 
P_{\bof{X}|\bof{U}}^T\cdot \lambda_{|\mathcal{F}|+2} \ .
\end{eqnarray}
\end{proof}
Lemma~\ref{lemma:qguessdualprod} implies that any dual feasible solution of (\ref{eq:dualguessprob}) gives an upper bound on the guessing probability linear in terms 
of the observable probabilities. In terms of the min-entropy we obtain the following corollary. 
\begin{corollary}
For any dual feasible $\lambda$, 
\begin{equation}
\nonumber H_{\mathrm{min}}(\bof{X}|Z(W))\leq -\log_2 (P_{\bof{X}|\bof{U}}^T\cdot \lambda_{|\mathcal{F}|+2})\ .
\end{equation} 
\end{corollary}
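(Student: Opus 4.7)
The plan is a short chain that assembles pieces already established in the excerpt and finishes with the monotonicity of $-\log_2$. First I would invoke Lemma~\ref{lemma:guessissdp}, which tells us that the guessing probability $P_{\mathrm{guess}}(\bof{X}\mid Z(W))$ is upper bounded by the optimum of the semi-definite program (\ref{eq:primalguess}). Second, by weak duality for semi-definite programs, this primal optimum is at most the value of \emph{any} feasible dual solution to (\ref{eq:dualguess}); note that only weak duality is required here, which always holds and does not demand a Slater-type condition.

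Third, I would use Lemma~\ref{lemma:qguessdualprod} to transport feasibility across the two duals. That lemma says that if $(\lambda_1,\ldots,\lambda_{|\mathcal{F}|+2})$ is feasible for (\ref{eq:dualguessprob}), then truncating to $(\lambda_1,\ldots,\lambda_{|\mathcal{F}|+1})$ is feasible for (\ref{eq:dualguess}) with objective value $P_{\bof{X}|\bof{U}}^T \cdot \lambda_{|\mathcal{F}|+2}$ (because $A_{\mathrm{IJ}}^T \lambda_{|\mathcal{F}|+2}=\lambda_{|\mathcal{F}|+1}$ and $A_{\mathrm{IJ}}\,\Gamma^k_{\mathrm{marg}} = P_{\bof{X}|\bof{U}}$). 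Chaining the two previous steps with this identity gives
\begin{equation*}
P_{\mathrm{guess}}(\bof{X}\mid Z(W)) \;\leq\; P_{\bof{X}|\bof{U}}^T\cdot \lambda_{|\mathcal{F}|+2}.
\end{equation*}

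Finally I would apply the definition of min-entropy from Section~\ref{subsec:qguess} (equivalently, Theorem~\ref{th:krs}), namely $H_{\mathrm{min}}(\bof{X}\mid Z(W)) = -\log_2 P_{\mathrm{guess}}(\bof{X}\mid Z(W))$, and use the monotone decrease of $-\log_2$ on positive reals to conclude
\begin{equation*}
H_{\mathrm{min}}(\bof{X}\mid Z(W)) \;\leq\; -\log_2\bigl(P_{\bof{X}|\bof{U}}^T\cdot \lambda_{|\mathcal{F}|+2}\bigr),
\end{equation*}
which is exactly the claim. There is essentially no obstacle: the work has already been done by Lemmas~\ref{lemma:guessissdp} and~\ref{lemma:qguessdualprod}, and this corollary is just a repackaging of their conclusions in terms of min-entropy. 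The only thing one should be careful about is that the stated bound is vacuous unless $P_{\bof{X}|\bof{U}}^T\cdot \lambda_{|\mathcal{F}|+2}>0$, which is automatic for any dual feasible $\lambda$ that actually bounds a positive guessing probability, so no further hypothesis is needed.
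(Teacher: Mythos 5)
Your chain of steps is the natural one and matches what the paper intends: invoke Lemma~\ref{lemma:guessissdp} together with weak SDP duality, transport dual feasibility via Lemma~\ref{lemma:qguessdualprod} to obtain $P_{\mathrm{guess}}(\bof{X}\mid Z(W)) \leq P_{\bof{X}|\bof{U}}^T\cdot \lambda_{|\mathcal{F}|+2}$, and then apply the definition $H_{\mathrm{min}} = -\log_2 P_{\mathrm{guess}}$. But your last step is logically inverted. Applying the decreasing function $-\log_2$ to the inequality $P_{\mathrm{guess}}\leq P_{\bof{X}|\bof{U}}^T\cdot \lambda_{|\mathcal{F}|+2}$ \emph{reverses} the inequality and yields
\begin{equation*}
H_{\mathrm{min}}(\bof{X}\mid Z(W)) = -\log_2 P_{\mathrm{guess}}(\bof{X}\mid Z(W)) \;\geq\; -\log_2\bigl(P_{\bof{X}|\bof{U}}^T\cdot \lambda_{|\mathcal{F}|+2}\bigr),
\end{equation*}
with $\geq$, not $\leq$. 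The conclusion you state is the opposite of what your own monotonicity argument actually proves.

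Almost certainly the inequality sign in the paper's corollary is a typo and should read $\geq$: the sentence immediately preceding it says a dual feasible solution ``gives an upper bound on the guessing probability,'' which by the definition $H_{\mathrm{min}}=-\log_2 P_{\mathrm{guess}}$ translates into a \emph{lower} bound on the min-entropy, and a lower bound on $H_{\mathrm{min}}$ is precisely what is needed to feed into privacy amplification (Theorem~\ref{th:qpa}); an upper bound would be cryptographically useless. You reproduced the paper's $\leq$ but then asserted it follows from ``monotone decrease,'' which it does not---a decreasing map flips the direction. You should have noticed the mismatch and flagged it. Replace $\leq$ by $\geq$ in the final line and your argument is sound and complete; everything up to that point is correct.
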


\begin{example}\label{ex:qguesssingle}
Consider a bipartite quantum system with binary inputs and outputs given by the mixture of the system in Figure~\ref{fig:box1} with weight $1-\rho$ and a perfectly 
random bit with weight $\rho$ (this could be achieved by measuring a mixture of a singlet and a fully mixed state, i.e., the state 
$(1-\rho)\cdot \ket{\Psi^-}\bra{\Psi^-}+\rho\cdot \frac{1}{4}\mathds{1}$
 using the measurements $U_0,U_1,V_0,V_1$ given in Figure~\ref{fig:basen}). The guessing probability of the output bit $X$ as function of the parameter $\rho$ is given 
in Figure~\ref{fig:pguess}.\footnote{The data plotted in Figure~\ref{fig:pguess} has been obtained by solving (\ref{eq:primalguessprob}) numerically, using  the programs MATLAB\textsuperscript{\textregistered}, Yalmip and Sedumi~\cite{matlab,sedumi,yalmip}.}

\begin{figure}[h]
\begin{minipage}[b]{0.45\linewidth}
\centering
\psset{unit=0.525cm}
\pspicture*[](-2,-0.1)(8.5,8)
\psline[linewidth=0.5pt]{-}(0,6)(-1,7)
\rput[c]{0}(-0.25,6.75){\normalsize{$X$}}
\rput[c]{0}(-0.75,6.25){\normalsize{$Y$}}
\rput[c]{0}(-0.5,7.5){\Large{$U$}}
\rput[c]{0}(-1.5,6.5){\Large{$V$}}
\rput[c]{0}(2,7.5){\Large{$0$}}
\rput[c]{0}(6,7.5){\Large{$1$}}
\rput[c]{0}(1,6.5){\Large{$0$}}
\rput[c]{0}(3,6.5){\Large{$1$}}
\rput[c]{0}(5,6.5){\Large{$0$}}
\rput[c]{0}(7,6.5){\Large{$1$}}
\rput[c]{0}(-1.5,4.5){\Large{$0$}}
\rput[c]{0}(-1.5,1.5){\Large{$1$}}
\rput[c]{0}(-0.5,5.25){\Large{$0$}}
\rput[c]{0}(-0.5,3.75){\Large{$1$}}
\rput[c]{0}(-0.5,2.25){\Large{$0$}}
\rput[c]{0}(-0.5,0.75){\Large{$1$}}
\psline[linewidth=2pt]{-}(-1,0)(8,0)
\psline[linewidth=2pt]{-}(-1,6)(8,6)
\psline[linewidth=2pt]{-}(-1,3)(8,3)
\psline[linewidth=1pt]{-}(0,1.5)(8,1.5)
\psline[linewidth=1pt]{-}(0,4.5)(8,4.5)
\psline[linewidth=2pt]{-}(0,0)(0,7)
\psline[linewidth=2pt]{-}(8,0)(8,7)
\psline[linewidth=2pt]{-}(4,0)(4,7)
\psline[linewidth=1pt]{-}(2,0)(2,6)
\psline[linewidth=1pt]{-}(6,0)(6,6)
\rput[c]{0}(1,5.25){\Large{$\frac{2+\sqrt{2}}{8}$}}
\rput[c]{0}(3,3.75){\Large{$\frac{2+\sqrt{2}}{8}$}}
\rput[c]{0}(5,5.25){\Large{$\frac{2+\sqrt{2}}{8}$}}
\rput[c]{0}(7,3.75){\Large{$\frac{2+\sqrt{2}}{8}$}}
\rput[c]{0}(1,2.25){\Large{$\frac{2+\sqrt{2}}{8}$}}
\rput[c]{0}(3,0.75){\Large{$\frac{2+\sqrt{2}}{8}$}}
\rput[c]{0}(5,0.75){\Large{$\frac{2+\sqrt{2}}{8}$}}
\rput[c]{0}(7,2.25){\Large{$\frac{2+\sqrt{2}}{8}$}}
\rput[c]{0}(3,5.25){\Large{$\frac{2-\sqrt{2}}{8}$}}
\rput[c]{0}(1,3.75){\Large{$\frac{2-\sqrt{2}}{8}$}}
\rput[c]{0}(7,5.25){\Large{$\frac{2-\sqrt{2}}{8}$}}
\rput[c]{0}(5,3.75){\Large{$\frac{2-\sqrt{2}}{8}$}}
\rput[c]{0}(3,2.25){\Large{$\frac{2-\sqrt{2}}{8}$}}
\rput[c]{0}(1,0.75){\Large{$\frac{2-\sqrt{2}}{8}$}}
\rput[c]{0}(5,2.25){\Large{$\frac{2-\sqrt{2}}{8}$}}
\rput[c]{0}(7,0.75){\Large{$\frac{2-\sqrt{2}}{8}$}}
\endpspicture
\vspace{1cm}
\caption{\label{fig:box1}The probabilities associated with a quantum system obtained by measuring the singlet state using the bases $U_0,U_1,V_0,V_1$ of Figure~\ref{fig:basen}.}
%\vspace{1cm}
\end{minipage}
\hspace{0.5cm}
\begin{minipage}[b]{0.45\linewidth}
\centering
 \pspicture[](-2,2.75)(7,7.75)
 \psset{xunit=16.66666666cm,yunit=7.5cm}
 \savedata{\mydata}[
 {
{0.0000,	0.5000},
{0.0030,	0.5546},
{0.0060,	0.5769},
{0.0090,	0.5938},
{0.0120,	0.6079},
{0.0150,	0.6202},
{0.0180,	0.6312},
{0.0210,	0.6412},
{0.0240,	0.6505},
{0.0270,	0.6590},
{0.0300,	0.6670},
{0.0330,	0.6746},
{0.0360,	0.6817},
{0.0390,	0.6885},
{0.0420,	0.6949},
{0.0450,	0.7011},
{0.0480,	0.7070},
{0.0510,	0.7126},
{0.0540,	0.7180},
{0.0570,	0.7233},
{0.0600,	0.7285},
{0.0630,	0.7336},
{0.066,	0.7386},
{0.069,	0.7436},
{0.072,	0.7485},
{0.075,	0.7534},
{0.078,	0.7583},
{0.081,	0.7631},
{0.084,	0.7678},
{0.087,	0.7725},
{0.09,	0.7771},
{0.105,	0.7996},
{0.12,	0.8209},
{0.135,	0.8412},
{0.15,	0.8605},
{0.1650,	0.8788},
{0.1800,	0.8962},
{0.1950,	0.9128},
{0.2100,	0.9284},
{0.2250,	0.9433},
{0.2400,	0.9573},
{0.2550,	0.9704},
{0.2700,	0.9828},
{0.2850,	0.9943},
{0.3000,	1.0000}
 }
 ]
 \rput[c](0.15,0.375){{$\rho$}}
  \rput[c]{90}(-0.075,0.75){{$P_{\mathrm{guess}}$}}
   \psaxes[Dx=0.05,Dy=0.1, Oy=0.5,  showorigin=true,tickstyle=bottom,axesstyle=frame](0,0.5)(0.3001,1.0001)
\dataplot[plotstyle=curve,showpoints=false,dotstyle=o]{\mydata}  
 \endpspicture
 \caption{\label{fig:pguess} The bound on the guessing probability of the measurement outcomes of Example~\ref{ex:qguesssingle}.\\
%\footnotemark
 %\footnote{The data plotted in Figure~\ref{fig:pguess} has been obtained by solving 
%(\ref{eq:primalguessprob}) using, MATLAB, Yalmip and Sedumi~\cite{matlab,sedumi,yalmip}.\textsuperscript{\textregistered}
}
\end{minipage}
\end{figure}
%\footnotetext{The data plotted in Figure~\ref{fig:pguess} has been obtained by solving (\ref{eq:primalguessprob}) using, %MATLAB\textsuperscript{\textregistered}, Yalmip and Sedumi~\cite{matlab,sedumi,yalmip}.}

\end{example}

\section{Min-Entropy Bound for Multiple Systems}\label{sec:several}

We can now show our main technical result, namely that the above semi-definite program describing the guessing probability has a product form if the measurements 
on different subsystems commute. Roughly, we will show the following: consider a system $P_{\bof{XY}|\bof{UV}}$ associated with a single pair of systems and the matrix 
$\Gamma^k$ associated with the $k$\textsuperscript{th} step of the hierarchy , 
fulfilling $A_{\mathrm{qb}}\cdot \Gamma^k=0$. 
Then with two pairs of systems it is possible to associate a matrix ${\Gamma^{\prime}}^k$ living in the tensor product space of two $\Gamma^k$. Furthermore, this matrix must 
fulfil $(\mathds{1}\otimes A_{\mathrm{qb}}) {\Gamma^{\prime}}^k=0$. 

\subsection{Conditions on several quantum systems}\label{subsec:qseveral}

The goal of this section is to express the constraints that hold for a multi-party quantum system in terms of the 
constraints on its subsystems. 

\begin{definition}
Assume an $(n+m)$-party quantum system. The \emph{reduced quantum certificate of order $k$} is the matrix ${\Gamma^{\prime}}_{n+m}^k$, defined as 
\begin{eqnarray}
\nonumber ( {\Gamma^{\prime}}_{n+m}^k)_{ij}&=& \brakket{\Psi}{O_{i_1}^\dagger O_{i_2}^\dagger O_{j_2} O_{j_1}}{\Psi}\ ,
\end{eqnarray}
where $i=l\cdot (i_1-1)+i_2$, $j=l\cdot (j_1-1)+j_2$ and $l$ is the number of rows of a quantum certificate of order $k$ for the $n$-party quantum system. $O_{i_1}$ 
is the operator associated with the $i$\textsuperscript{th} row of the quantum certificate of order $k$ of the marginal $n$-party system (and similar for $O_{i_2}$ and 
the $m$-party system). 
\end{definition}
\begin{lemma}
 $ {\Gamma^{\prime}}_{n+m}^k\succeq 0$
\end{lemma}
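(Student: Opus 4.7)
The argument should run along the same lines as the standard positive semi-definiteness proof for the ordinary NPA certificate $\Gamma^k$: exhibit the matrix as a Gram matrix of vectors in the Hilbert space $\mathcal{H}$. The only new ingredient is that the ``row operators'' now come in two blocks — one built from measurement operators of the first $n$ parties and one from those of the remaining $m$ parties — so I must be careful that their product is a legitimate operator.

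\textbf{Step 1 (reduction to a Gram matrix).} I will pick an arbitrary vector $v\in\mathbb{C}^{l\cdot l'}$, where $l$ and $l'$ are the row counts of the quantum certificates of order $k$ associated with the marginal $n$-party and $m$-party systems respectively, and index its entries as $v_{l(i_1-1)+i_2}$. Writing $V:=\sum_{i_1,i_2} v_{l(i_1-1)+i_2}\, O_{i_2} O_{i_1}$, where $O_{i_1}$ is a sequence of length at most $k$ in the measurement operators of the first $n$ parties and $O_{i_2}$ is a sequence of length at most $k$ in those of the last $m$ parties, I can then compute
\begin{eqnarray*}
\brakket{\Psi}{V^\dagger V}{\Psi}
&=& \sum_{i,j} v_i^* v_j\, \brakket{\Psi}{O_{i_1}^\dagger O_{i_2}^\dagger O_{j_2} O_{j_1}}{\Psi}\\
&=& v^*{}^T\cdot {\Gamma'}_{n+m}^{k}\cdot v .
\end{eqnarray*}
Since $\brakket{\Psi}{V^\dagger V}{\Psi}=\|V\ket{\Psi}\|^2\ge 0$, positive semi-definiteness follows.

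\textbf{Step 2 (verifying that $V$ is well defined).} The one thing that needs justification is that $V$ really is an operator on $\mathcal{H}$: the definition uses the product $O_{i_2}O_{i_1}$, which superficially is a product of two sequences drawn from different subsystems' operator sets. This is legitimate because, by Definition~\ref{def:qbehavior}, the measurement operators for distinct parties act on distinct tensor factors $\mathcal{H}_i$ of $\mathcal{H}=\bigotimes_i\mathcal{H}_i$, so operators from the first $n$ factors commute with operators from the remaining $m$ factors. Consequently $O_{i_2}O_{i_1}=O_{i_1}O_{i_2}$ is a well-defined bounded operator, and $(O_{i_2}O_{i_1})^\dagger=O_{i_1}^\dagger O_{i_2}^\dagger$, matching the ordering in the entries of ${\Gamma'}_{n+m}^{k}$.

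\textbf{Expected difficulty.} There is no serious obstacle; the proof is essentially bookkeeping. The only subtle point is keeping the indexing convention $i=l(i_1-1)+i_2$ aligned with the ordering of operators inside the entry, so that expanding $\brakket{\Psi}{V^\dagger V}{\Psi}$ produces exactly the matrix entries of ${\Gamma'}_{n+m}^{k}$ rather than, say, its partial transpose. As in the original NPA argument, if one prefers a real-valued certificate, one notes that $({\Gamma'}_{n+m}^{k}+({\Gamma'}_{n+m}^{k})^*)/2$ is real, symmetric and still PSD, and satisfies the same constraints, so we may assume ${\Gamma'}_{n+m}^{k}$ is real without loss of generality.
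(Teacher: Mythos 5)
Your proof is correct and rests on the same underlying fact as the paper's argument. The paper's one-line proof observes that ${\Gamma^{\prime}}_{n+m}^{k}$ is a (principal) submatrix of the $(2k)$-th order NPA certificate of the $(n+m)$-party system (since each $O_{i_2}O_{i_1}$ is a sequence of length at most $2k$), and principal submatrices of PSD matrices are PSD; the Gram-matrix computation you spell out with $V=\sum_i v_i\,O_{i_2}O_{i_1}$ is precisely the reason that larger certificate is PSD, so you have simply unrolled the same argument into a self-contained form.
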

\begin{proof}
 This follows directly form the fact that $ {\Gamma^{\prime}}_{n+m}^k$ is a sub-matrix  
 of the $(2k)$\textsuperscript{th} order 
  quantum certificate associated with the $(n+m)$-party quantum system.  
\end{proof}

The main insight, which will lead directly to the product theorems, is the following lemma.

\begin{lemma}\label{lemma:qproductconditions}
Let $P_{\bof{X}_1|\bof{U}_1}$ be an $n$-party and 
$P_{\bof{X}_2|\bof{U}_2}$ an $m$-party quantum system. Call the associated certificates of order $k$ $\Gamma_1^k$ and $\Gamma_2^k$ and write the linear conditions 
they fulfil 
as $A_{\mathrm{qb},1}\cdot \Gamma_1^k=0$, and 
$A_{\mathrm{qb},2}\cdot \Gamma_2^k=0$. Then the reduced quantum certificate of order $k$ associated with the $(n+m)$-party quantum system, fulfils
\begin{eqnarray}
\nonumber (A_{\mathrm{qb},1}\otimes \mathds{1}_{\Gamma_2^k})\cdot {\Gamma^{\prime}}_{n+m}^k=0\ \ \ \text{and}\ \ \ 
(\mathds{1}_{\Gamma_1^k} \otimes A_{\mathrm{qb},2} )\cdot {\Gamma^{\prime}}_{n+m}^k=0\ .
\end{eqnarray}
\end{lemma}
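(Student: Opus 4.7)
The plan is to reduce the constraint $(A_{\mathrm{qb},1}\otimes \mathds{1}_{\Gamma_2^k})\cdot {\Gamma^{\prime}}_{n+m}^k=0$ to the scalar identity $A_{\mathrm{qb},1}\cdot \Gamma_1^k=0$ by rewriting each entry of ${\Gamma^{\prime}}_{n+m}^k$ in a factored form. Concretely, an entry at position $((i_1,i_2),(j_1,j_2))$ has the form $\brakket{\Psi}{O_{i_1}^\dagger O_{i_2}^\dagger O_{j_2} O_{j_1}}{\Psi}$, where $O_{i_1},O_{j_1}$ are operator sequences acting on the first $n$ parties and $O_{i_2},O_{j_2}$ on the remaining $m$. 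Since operators on disjoint parties commute (this is exactly the commuting-measurements / tensor-product assumption that we already used to define a quantum behaviour), this entry equals
\begin{equation*}
\brakket{\Psi}{\bigl(O_{i_1}^\dagger O_{j_1}\bigr)\bigl(O_{i_2}^\dagger O_{j_2}\bigr)}{\Psi}.
\end{equation*}

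The next step is to recall what a row of $A_{\mathrm{qb},1}$ represents. The three families of conditions listed in Section~\ref{sec:condition} (orthogonality of projectors, completeness, and commutativity among the first $n$ parties), together with the implicit equalities coming from the fact that different index pairs may give rise to the same operator product, all arise from \emph{operator identities} of the form $\sum_r c_r\,O_{i_r}^\dagger O_{j_r}=0$ on the Hilbert space of the first $n$ parties. A row of $A_{\mathrm{qb},1}$ acts on $\Gamma_1^k$ by taking the expectation of this identity in the state $\ket{\Psi}$.

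The key step is then the following. Applying $A_{\mathrm{qb},1}\otimes \mathds{1}_{\Gamma_2^k}$ to ${\Gamma^{\prime}}_{n+m}^k$ amounts, for each fixed pair $(i_2,j_2)$, to taking the same linear combination over $(i_1,j_1)$ of the factored entries above. Using linearity of the expectation and the operator identity $\sum_r c_r\,O_{i_r}^\dagger O_{j_r}=0$, we get
\begin{equation*}
\sum_r c_r \brakket{\Psi}{\bigl(O_{i_r}^\dagger O_{j_r}\bigr)\bigl(O_{i_2}^\dagger O_{j_2}\bigr)}{\Psi}
=\brakket{\Psi}{\Bigl(\sum_r c_r O_{i_r}^\dagger O_{j_r}\Bigr)\bigl(O_{i_2}^\dagger O_{j_2}\bigr)}{\Psi}=0,
\end{equation*}
which is exactly the desired constraint. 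The identity $(\mathds{1}_{\Gamma_1^k}\otimes A_{\mathrm{qb},2})\cdot {\Gamma^{\prime}}_{n+m}^k=0$ is proven by the symmetric argument, moving the first-party factors past the second-party factors.

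The main obstacle is purely bookkeeping: making sure that every row of $A_{\mathrm{qb},1}$ can indeed be lifted from a scalar identity on $\Gamma_1^k$ to an operator identity that may be multiplied on the right by an arbitrary $O_{i_2}^\dagger O_{j_2}$. This works because the three constraint types in $A_{\mathrm{qb}}$ (plus the hidden "equal-product" equalities) are genuinely operator identities on the corresponding subsystem, so they continue to hold when multiplied by any operator commuting with them---which is precisely what the commuting-measurements assumption guarantees for operators on the complementary parties.
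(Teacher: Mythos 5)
Your proof is correct and takes essentially the same approach as the paper: use commutativity of operators on disjoint parties to rewrite each entry of ${\Gamma^{\prime}}_{n+m}^k$ in the factored form $\brakket{\Psi}{(O_{i_1}^\dagger O_{j_1})(O_{i_2}^\dagger O_{j_2})}{\Psi}$, observe that each row of $A_{\mathrm{qb},1}$ encodes an operator identity on the first $n$ parties, and conclude by linearity. Your version is slightly more careful than the paper's in phrasing the rows of $A_{\mathrm{qb},1}$ as general linear combinations $\sum_r c_r\,O_{i_r}^\dagger O_{j_r}=0$ rather than just two-term differences, which is what the completeness constraints actually require.
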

This can be interpreted the following way: even conditioned on any specific outcome of the second system, the first system must still be a quantum system. 
\begin{proof}
The matrix $A_{\mathrm{qb},1}$ contains entries of the form  $\brakket{\Psi}{O_{i_1}O_{j_1}}{\Psi}-\brakket{\Psi}{O_{i^{\prime}_1}O_{j^{\prime}_1}}{\Psi}=0$ which all operators 
associated with an $n$-party quantum system must fulfil, because $O_{i_1}O_{j_1}-O_{i^{\prime}_1}O_{j^{\prime}_1}=0$. 
By the definition of ${\Gamma^{\prime}}_{n+m}^k$, 
$(A_{\mathrm{qb},1}\otimes \mathds{1}_{\Gamma_2^k})\cdot {\Gamma^{\prime}}_{n+m}^k$ contains conditions of the form 
\begin{eqnarray}
\nonumber && \bra{\Psi}O_{i_1} O_{i_2} O_{j_2} O_{j_1}\ket{\Psi}- 
\bra{\Psi}O_{i^{\prime}_1} O_{i_2} O_{j_2} O_{j^{\prime}_1}\ket{\Psi}\\
\nonumber 
& =&
\bra{\Psi}O_{i_1} O_{j_1} O_{i_2} O_{j_2} \ket{\Psi}- 
\bra{\Psi}O_{i^{\prime}_1}  O_{j^{\prime}_1}O_{i_2} O_{j_2}\ket{\Psi}\\
\nonumber &=&
\bra{\Psi}(O_{i_1} O_{j_1} -O_{i^{\prime}_1}  O_{j^{\prime}_1}) O_{i_2} O_{j_2} \ket{\Psi}=0
\end{eqnarray}
where we have used the fact that operators associated with different parties commute, linearity, and the fact that the operators associated with an $(n+m)$-party 
quantum system must still fulfil the conditions 
 associated with a single system (as stated in 
 Definition~\ref{def:qbehavior}). 
\end{proof}

\subsection{A product lemma for the guessing probability}\label{subsec:productguess}

Using this property, we can show the product lemma (Theorem~\ref{th:guessprod}) for the guessing probability (for more details we refer to~\cite{thesis}). 
\begin{lemma}\label{lemma:qproduct}
 Consider the semi-definite program (\ref{eq:primalguess}) defined by $A_1,b_1,c_1$, bounding the guessing probability of $f(\bof{X}_1)$ of an $n$-party quantum system  
$P_{\bof{X}_1|\bof{U}_1}$, where $Q_1=(\bof{U}_1=\bof{u}_1,F=f)$. And similarly, associate 
 $A_2,b_2,c_2$ with an $m$-party quantum system $P_{\bof{X}_2|\bof{U}_2}$, where $g(\bof{X}_2)$ and $Q_2=(\bof{U}_2=\bof{u}_2,G=g)$. 
Then the guessing probability of $f(\bof{X}_1)\parallel g(\bof{X}_2)$ (denoting the concatenation) of the $(n+m)$-party system $P_{\bof{X}_1\bof{X}_2|\bof{U}_1\bof{U}_2}$ 
where $Q=(\bof{U}=\bof{u},F=f,G=g)$ is bounded by 
the semi-definite program $A,b,c$ with 
$b=b_1\otimes b_2$, $A=A_1\otimes A_2$.  
\end{lemma}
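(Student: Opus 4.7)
I would prove this by extracting Eve's joint attack on the $(n{+}m)$-party system as a feasible assignment of the claimed tensor-product SDP; the SDP optimum then upper-bounds her guessing probability. First, by Lemma~\ref{lemma:qattackopt} applied to the combined system, any such attack yields a convex decomposition $P_{\bof{X}_1\bof{X}_2|\bof{U}_1\bof{U}_2}=\sum_{z_1,z_2}p^{z_1,z_2}\,P^{z_1,z_2}_{\bof{X}_1\bof{X}_2|\bof{U}_1\bof{U}_2}$ into $(n{+}m)$-party quantum systems indexed by $(z_1,z_2)\in\mathcal{F}\times\mathcal{G}$, with success probability $\sum_{z_1,z_2}p^{z_1,z_2}\sum_{f(\bof{x}_1)=z_1,\,g(\bof{x}_2)=z_2}P^{z_1,z_2}(\bof{x}_1,\bof{x}_2|\bof{u}_1,\bof{u}_2)$. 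To each conditional system $P^{z_1,z_2}$ I attach its reduced quantum certificate of order $k$, written $\Gamma^{z_1,z_2}$: by the preceding subsection this is PSD, its rows and columns are indexed by pairs of operator sequences (one from each party), and linearity of the decomposition gives $\sum_{z_1,z_2}\Gamma^{z_1,z_2}=\Gamma_{\mathrm{marg}}$, the reduced marginal certificate of the combined system.

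The crucial step uses Lemma~\ref{lemma:qproductconditions}: each $\Gamma^{z_1,z_2}$ obeys the pointwise constraints $(A_{\mathrm{qb},1}\otimes\mathds{1})\,\Gamma^{z_1,z_2}=0$ and $(\mathds{1}\otimes A_{\mathrm{qb},2})\,\Gamma^{z_1,z_2}=0$, and composing the two Kronecker factors gives $(A_{\mathrm{qb},1}\otimes A_{\mathrm{qb},2})\,\Gamma^{z_1,z_2}=0$ for free. I then unpack the block form of each $A_i$ in \eqref{eq:primalguess}---a block-diagonal array of copies of $A_{\mathrm{qb},i}$ (one per guess value) stacked on top of an `all-identities' marginal row---and compute $A_1\otimes A_2$ block by block. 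Its four row-block types (qb,qb), (qb,m), (m,qb) and (m,m) are accounted for, respectively, by the three reduced-certificate conditions above and by the marginal identity $\sum\Gamma^{z_1,z_2}=\Gamma_{\mathrm{marg}}$. With an appropriate $c$ (zero on the three quantum-behaviour blocks and the combined reduced marginal on the (m,m) block; equivalently $c_1\otimes c_2$ whenever the marginals factorise), the family $\{\Gamma^{z_1,z_2}\}$ is therefore feasible for the SDP with $A=A_1\otimes A_2$.

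For the objective, every entry of $\Gamma^{z_1,z_2}$ that represents $P^{z_1,z_2}(\bof{x}_1,\bof{x}_2|\bof{u}_1,\bof{u}_2)$ is a product-indexed matrix element $\brakket{\Psi}{O_1^\dagger O_2^\dagger O_2 O_1}{\Psi}$, where $O_j$ denotes the measurement operator on party~$j$ producing outcome $\bof{x}_j$ under input $\bof{u}_j$; the tensor-product indicator $b_1\otimes b_2$ picks out exactly those positions for which $f(\bof{x}_1)=z_1$ and $g(\bof{x}_2)=z_2$. Hence Eve's success probability equals $\sum_{z_1,z_2}(b_1\otimes b_2)^T\,\Gamma^{z_1,z_2}$, which is precisely the tensor-product SDP's objective on our assignment, finishing the argument. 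The step I expect to be most delicate is the block-wise bookkeeping: spelling out the Kronecker expansion of $A_1\otimes A_2$ together with the appropriate right-hand side and verifying, one row-block type at a time, that the pointwise-in-$(z_1,z_2)$ conditions from Lemma~\ref{lemma:qproductconditions} are really captured without any silent loss of information when passing from the pointwise form to the purely tensor-product form.
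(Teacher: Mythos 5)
Your proof is correct and follows essentially the same route as the paper's (which is considerably terser): both arguments rest on Lemma~\ref{lemma:qproductconditions} to show the reduced certificates of the conditional $(n{+}m)$-party systems satisfy the Kronecker-product constraints, and on the observation that $b_1\otimes b_2$ selects exactly the entries $\brakket{\Psi}{O_1^\dagger O_2^\dagger O_2 O_1}{\Psi}$ with $f(\bof{x}_1)=z_1$, $g(\bof{x}_2)=z_2$. Your block-by-block verification of $A_1\otimes A_2$ and the remark on the appropriate right-hand side $c$ just make explicit what the paper leaves implicit.
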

\begin{proof}
This follows form the fact that any $(n+m)$-party quantum system must fulfil Lemma~\ref{lemma:qproductconditions} and that $b_i\otimes b_j$ has a $1$ exactly at the 
entry associated with $\brakket{\psi}{O_1^\dagger O_2^\dagger O_2 O_1}{\psi}$, where $O_1$ is the operator associated with the probability of the outcome $\bof{x}_1$ 
mapped to a certain $f(\bof{x}_1)$, and similarly for $O_2$ and $g(\bof{x}_2)$. 
\end{proof}
Consider now the \emph{dual} of this `tensor product' problem. We will use a product theorem from~\cite{mittalszegedy} (see also~\cite{leemittal}) to show that 
for any dual feasible $\lambda$ (for a single system), $\lambda\otimes \cdots \otimes \lambda$ is dual feasible for the dual of the tensor product problem, therefore, 
forming an upper bound on the guessing probability. 

\begin{theorem}[Mittal, Szegedy~\cite{mittalszegedy}]\label{th:mittalszegedy}
Consider the semi-definite program $\min : c_1^T\cdot \lambda_1$, $\st : A_1^T\lambda_1-b_1\succeq 0$  and a feasible $\lambda_1$, and similarly for $A_2^T,b_2,c_2,\lambda_2$. 
Assume $b_1\succeq 0$ and $b_2 \succeq 0$.
Then  $\lambda=\lambda_1\otimes \lambda_2$ is feasible for the semi-definite program
$\min : (c_1\otimes c_2)^T\cdot \lambda$, $\st : (A_1\otimes A_2)^T\lambda-(b_1\otimes b_2)\succeq 0$. 
\end{theorem}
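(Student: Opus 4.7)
The plan is to verify feasibility of $\lambda := \lambda_1 \otimes \lambda_2$ directly, and observe that by the mixed-product property of the Kronecker product
\[
(A_1 \otimes A_2)^T (\lambda_1 \otimes \lambda_2) = (A_1^T \lambda_1) \otimes (A_2^T \lambda_2).
\]
Setting $M_i := A_i^T \lambda_i$, feasibility of each $\lambda_i$ for its original program gives $M_i - b_i \succeq 0$, and combined with the hypothesis $b_i \succeq 0$ this also forces $M_i \succeq 0$. Consequently, the required feasibility condition for the tensor program reduces to the matrix inequality $M_1 \otimes M_2 \succeq b_1 \otimes b_2$.

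The main step is then the telescoping identity
\[
M_1 \otimes M_2 - b_1 \otimes b_2 \;=\; M_1 \otimes (M_2 - b_2) + (M_1 - b_1) \otimes b_2.
\]
Both summands are tensor products of two positive semi-definite matrices, and the tensor product of PSD matrices is PSD (one can see this either from the spectral decomposition, writing each factor as a sum of rank-one positive operators, or by checking non-negativity on product vectors $v \otimes w$ and extending by linearity). The sum of two PSD matrices being PSD then establishes that $\lambda = \lambda_1 \otimes \lambda_2$ is dual feasible. Finally, the objective value at this $\lambda$ equals $(c_1 \otimes c_2)^T (\lambda_1 \otimes \lambda_2) = (c_1^T \lambda_1)(c_2^T \lambda_2)$, again by the mixed-product property, so the tensor solution multiplies the two individual objective values.

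The only real subtlety is bookkeeping: the dual constraints in the SDPs of the previous section are written in a vectorized (column-stacked) convention, so one must check that the Kronecker product of these vectors corresponds, after un-stacking, to the Kronecker product of the underlying matrices, and hence that the PSD condition interpreted on the re-shaped matrix is exactly what the telescoping identity delivers. Once this identification is fixed, no further non-trivial estimate is needed; the whole argument rests on the two elementary facts that Kronecker product distributes over transpose and multiplication, and that tensoring preserves positive semi-definiteness. The main conceptual obstacle, such as it is, lies in recognising that one must \emph{interpolate} through the mixed term $M_1 \otimes b_2$ (or equivalently $b_1 \otimes M_2$), since $M_1 \otimes M_2 - b_1 \otimes b_2$ is not manifestly PSD without such a decomposition, and this is precisely where the additional assumption $b_i \succeq 0$ is used.
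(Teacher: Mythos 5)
Your proof is correct and uses a different decomposition than the paper's. Writing $M_i := A_i^T\lambda_i$, the paper notes $M_i + b_i \succeq 0$ (from $M_i - b_i \succeq 0$ and $b_i \succeq 0$), forms the two PSD products $(M_1 - b_1)\otimes(M_2 + b_2)$ and $(M_1 + b_1)\otimes(M_2 - b_2)$, and averages them so that the mixed terms $b_1 \otimes M_2$ and $M_1 \otimes b_2$ cancel, leaving $M_1\otimes M_2 - b_1\otimes b_2 \succeq 0$. You instead derive $M_i \succeq 0$ from the same two facts and use the telescoping identity $M_1\otimes M_2 - b_1\otimes b_2 = M_1\otimes(M_2 - b_2) + (M_1 - b_1)\otimes b_2$, a sum of two manifestly PSD tensors with no cancellation required. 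Both arguments rest on the same three ingredients --- the mixed-product identity, preservation of positive semi-definiteness under Kronecker products, and the hypothesis $b_i \succeq 0$ --- and differ only in which PSD combination of $M_i$ and $b_i$ is tensored with which; your version is one step shorter and exposes the role of $b_i \succeq 0$ somewhat more directly, while the paper's averaging form is symmetric in the two factors. Your closing remarks on the factorization of the objective value and the vectorization bookkeeping are accurate but not needed for the statement itself, which asserts only feasibility.
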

\begin{proof}
We use the fact that for a $\lambda$ such that $A^T\lambda-b\succeq 0$, where $b\succeq 0$, it holds that $A^T\lambda-b+2b=A^T\lambda+b\succeq 0$ because we 
consider a convex cone. The tensor product of two positive semi-definite matrices is positive semi-definite. We obtain
\begin{eqnarray} 
(A_1^T\lambda_1-b_1)\otimes (A_2^T\lambda_2+b_2)
\nonumber &=& 
A_1^T\lambda_1\otimes A_2^T\lambda_2 - b_1\otimes A_2^T\lambda_2 
+ A_1^T\lambda_1\otimes b_2 -b_1\otimes b_2\succeq 0\\
 (A_1^T\lambda_1+ b_1)\otimes (A_2^T\lambda_2- b_2)
\nonumber &=& 
A_1^T\lambda_1\otimes A_2^T\lambda_2 + b_1\otimes A_2^T\lambda_2 
- A_1^T\lambda_1\otimes b_2 -b_1\otimes b_2\succeq 0\ .
\end{eqnarray}
Adding the two inequalities and dividing by two, implies that 
\begin{eqnarray}
\nonumber A_1^T\lambda_1\otimes A_2^T\lambda_2-b_1\otimes b_2 = (A_1^T\otimes A_2^T)(\lambda_1\otimes \lambda_2)-b_1\otimes b_2 \succeq 0\ ,
\end{eqnarray}
which means that $\lambda_1\otimes \lambda_2$ is feasible for the product problem. 
\end{proof}

\begin{lemma}\label{lemma:qdualproduct}
Let $\lambda_1$ be a dual feasible solution (\ref{eq:dualguess}) defined by $A_1,b_1,c_1$ (see Lemma~\ref{lemma:qproduct}), and similarly for $\lambda_2$ and $A_2,b_2,c_2$. 
Then $\lambda=\lambda_1\otimes \lambda_2$ is dual feasible for $A,b,c$ where $A=A_1\otimes A_2$ and $b=b_1\otimes b_2$. 
\end{lemma}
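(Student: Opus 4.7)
The plan is to reduce the statement directly to the Mittal--Szegedy product theorem (Theorem~\ref{th:mittalszegedy}) applied to the dual program~(\ref{eq:dualguess}) for the two individual systems. Since Lemma~\ref{lemma:qproduct} already identifies the primal tensor-product SDP as the one with data $A = A_1 \otimes A_2$ and $b = b_1 \otimes b_2$, the corresponding dual is the one with data $A^T = A_1^T \otimes A_2^T$ and $b = b_1 \otimes b_2$, so a feasible $\lambda$ for the tensor-product dual is precisely what Mittal--Szegedy will produce from $\lambda_1$ and $\lambda_2$.

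The first thing I would do is check the positivity hypothesis of Theorem~\ref{th:mittalszegedy}, namely that $b_1 \succeq 0$ and $b_2 \succeq 0$. Recall from Lemma~\ref{lemma:guessissdp} that the objective coefficient $b_z$ (for each guess~$z$) is a $0/1$ matrix with a $1$ exactly at the diagonal entries of $\Gamma^k$ of the form $\brakket{\psi}{O_i^\dagger O_i}{\psi}$ with $f(\bof{x}) = z$, and zeros elsewhere. Hence each $b_z$ is a diagonal matrix with nonnegative entries, which is trivially positive semi-definite. The full stacked object $b = (b_1;\ldots;b_{|\mathcal{F}|})$ enters the dual constraint block-by-block (cf.~(\ref{eq:dualguess})), and the matrix representation of each block is PSD, so the hypothesis is satisfied for both systems.

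Next I would invoke Theorem~\ref{th:mittalszegedy} with the data $(A_1,b_1,c_1,\lambda_1)$ and $(A_2,b_2,c_2,\lambda_2)$. The theorem immediately yields that $\lambda = \lambda_1 \otimes \lambda_2$ satisfies
\begin{equation}
\nonumber
(A_1 \otimes A_2)^T \lambda - b_1 \otimes b_2 \;\succeq\; 0,
\end{equation}
which is exactly dual feasibility for the SDP with data $(A, b, c) = (A_1 \otimes A_2,\ b_1 \otimes b_2,\ c_1 \otimes c_2)$ that governs the guessing probability of the composite system by Lemma~\ref{lemma:qproduct}.

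The only place where care is really needed is the bookkeeping: the dual variables in~(\ref{eq:dualguess}) form a block-vector $(\lambda_1,\ldots,\lambda_{|\mathcal{F}|+1})$, and the dual constraint is a block of $|\mathcal{F}|$ PSD conditions rather than a single PSD condition. I would spell out that Mittal--Szegedy applies block-wise (since $b_1 \otimes b_2$ has its natural block decomposition into the $|\mathcal{F}_1|\cdot|\mathcal{F}_2|$ guesses), and that the nonnegativity of each individual block $b_z$ is what is actually being used in the proof of Theorem~\ref{th:mittalszegedy}. With this accounting, no further work is needed beyond quoting the theorem. Thus the main (and only) obstacle is verifying that the block structure of the SDP data really does line up with the hypotheses of Mittal--Szegedy; once that is in place, the result follows in one line.
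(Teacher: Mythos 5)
Your proposal is correct and takes essentially the same approach as the paper: verify that each block $b_z$ is a diagonal $0/1$ matrix (hence positive semi-definite) and then invoke the Mittal--Szegedy product theorem. The paper's own proof is just terser about the block-wise bookkeeping, which you helpfully spell out but which introduces no new ideas.
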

\begin{proof}
Note that $b_i$ is of the form 
\begin{eqnarray}
\nonumber \left( \begin{array}{ccccc}
  0 & & \cdots & 0\\
0 & 1 & 0 \cdots 0\\
\cdots \\
0 & \cdots && 0
 \end{array}
\right)\ ,
\end{eqnarray}
i.e., it has a $1$ in the place where the matrix $\Gamma$ has the entry $\brakket{\Psi}{{E_u^x}^{\dagger}E_u^x}{\Psi}$ for $f(x)=i$ and $0$ everywhere else. 
It, therefore, only has positive entries on the diagonal and $0$ everywhere else.  
Clearly, $b_i\succeq 0$. 
The claim then follows by Theorem~\ref{th:mittalszegedy}. 
\end{proof}

We can now formulate the product lemma for the guessing probability. 
\begin{theorem}[Product lemma for the guessing probability]\label{th:guessprod}
Let $P_{\bof{X}_1|\bof{U}_1}$ be an $n$-party quantum system and $f(\bof{X}_1)$ a function $f:\mathcal{X}_1\rightarrow \mathcal{F}$ such that 
$P_{\mathrm{guess}}(f(\bof{X}_1)|Z(W),Q)\leq P_{\bof{X}_1|\bof{U}_1}^T\cdot \lambda_1$, where $Q=(\bof{U}_1=\bof{u}_1,F=f)$. Similarly, 
associate the guessing probability $P_{\mathrm{guess}}(g(\bof{X}_2)|Z(W,Q)\leq P_{\bof{X}_2|\bof{U}_2}^T\cdot \lambda_2$ with an $m$-party 
quantum system $P_{\bof{X}_2|\bof{U}_2}$ where
$Q=(\bof{U}_2=\bof{u}_2,G=g)$. Then the guessing probability of $f(\bof{X}_1)||g(\bof{X}_2)$ obtained from the $(n+m)$-party quantum system 
$P_{\bof{X}_1\bof{X}_2|\bof{U}_1\bof{U}_2}$ with 
$Q=(\bof{U}_1=\bof{u}_1,\bof{U}_2=\bof{u}_2,F=f,G=g)$
 is bounded by
\begin{eqnarray}
\nonumber P_{\mathrm{guess}}(f(\bof{X}_1)||g(\bof{X}_2)|Z(W),Q)&\leq &
P_{\bof{X}_1\bof{X}_2|\bof{U}_1\bof{U}_2}^T \cdot (\lambda_1 \otimes \lambda_2)\ .
\end{eqnarray}
\end{theorem}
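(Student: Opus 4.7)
The plan is to reduce the theorem to weak duality applied to the tensor product of the SDPs~(\ref{eq:dualguessprob}) already analysed for each individual system. The two main ingredients are Lemma~\ref{lemma:qproduct}, which bounds the combined guessing probability by the primal value of a tensor-product SDP with constraint matrix $A_1\otimes A_2$ and objective matrix $b_1\otimes b_2$, and the Mittal--Szegedy product theorem (Theorem~\ref{th:mittalszegedy}), applied through Lemma~\ref{lemma:qdualproduct}, which guarantees that $\lambda_1\otimes \lambda_2$ is dual feasible for this tensor-product SDP whenever each $\lambda_i$ is.

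The crucial structural observation is that the dual feasibility constraints of~(\ref{eq:dualguessprob}) are \emph{independent} of the particular observable probabilities $P_{\bof{X}|\bof{U}}$; only the dual objective depends on them. Hence the hypothesis $P_{\mathrm{guess}}(f(\bof{X}_i)|Z(W),Q)\leq P_{\bof{X}_i|\bof{U}_i}^T\cdot \lambda_i$ identifies each $\lambda_i$ (via Lemma~\ref{lemma:qguessdualprod}) as a dual feasible point of~(\ref{eq:dualguessprob}) for the $i$\textsuperscript{th} system, and this feasibility persists no matter what joint system is considered. Lemma~\ref{lemma:qproductconditions} ensures that the reduced quantum certificate ${\Gamma'}^k_{n+m}$ of the $(n+m)$-party system satisfies both tensored constraints $(A_{\mathrm{qb},1}\otimes \mathds{1})\cdot {\Gamma'}^k_{n+m}=0$ and $(\mathds{1}\otimes A_{\mathrm{qb},2})\cdot {\Gamma'}^k_{n+m}=0$, so the tensor-product SDP is a valid relaxation for the combined system, with right-hand side involving the joint observable distribution $P_{\bof{X}_1\bof{X}_2|\bof{U}_1\bof{U}_2}$.

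Putting these pieces together: Lemma~\ref{lemma:qproduct} bounds the combined guessing probability by the primal optimum of the tensor-product SDP; Lemma~\ref{lemma:qdualproduct} supplies the dual feasible point $\lambda_1\otimes \lambda_2$; and weak duality bounds that primal optimum by the corresponding dual objective value, which by the mixed-product property evaluates to $P_{\bof{X}_1\bof{X}_2|\bof{U}_1\bof{U}_2}^T\cdot (\lambda_1\otimes \lambda_2)$, since the entries of the joint distribution are exactly the positions in the combined observable vector that pair against $\lambda_1\otimes \lambda_2$ in the dual objective.

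The main obstacle is the bookkeeping needed to verify that the tensor product of the two dual formulations of~(\ref{eq:dualguessprob}) coincides, block by block, with the dual of the SDP bounding the combined guessing probability, and in particular that the resulting dual objective pairs against the \emph{joint} observable $P_{\bof{X}_1\bof{X}_2|\bof{U}_1\bof{U}_2}$ rather than the product of marginals $P_{\bof{X}_1|\bof{U}_1}\otimes P_{\bof{X}_2|\bof{U}_2}$. Once this structural compatibility is checked, the argument itself is a two-line application of the supporting lemmas together with weak duality.
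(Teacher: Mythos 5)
Your proposal is correct and follows essentially the same route as the paper: the paper's own proof is the one-liner ``direct consequence of Lemma~\ref{lemma:qdualproduct},'' which indeed relies implicitly on the chain you spell out — Lemma~\ref{lemma:qproduct} to identify the tensor-product SDP, Lemma~\ref{lemma:qdualproduct} (via Mittal--Szegedy) for dual feasibility of $\lambda_1\otimes\lambda_2$, weak duality, and the bookkeeping through $A_{\mathrm{IJ},1}\otimes A_{\mathrm{IJ},2}$ and the reduced certificate ${\Gamma'}^k_{n+m}$ to land on the joint observable vector $P_{\bof{X}_1\bof{X}_2|\bof{U}_1\bof{U}_2}$ rather than a product of marginals. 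The only loose phrasing is the direction of Lemma~\ref{lemma:qguessdualprod} (which passes from~(\ref{eq:dualguessprob}) to~(\ref{eq:dualguess}), not the reverse), but the underlying argument you intend is sound.
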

\begin{proof}
This is a direct consequence of Lemma~\ref{lemma:qdualproduct}. 
\end{proof}
When the marginal system is of the form $P_{\bof{X}_1|\bof{U}_1}\otimes P_{\bof{X}_2|\bof{U}_2}$, this implies that the guessing probability is the product of the 
guessing probabilities of the two subsystems. In terms of the min-entropy, it implies that the min-entropy is additive. 

\begin{corollary}
Let $P_{\bof{X}|\bof{U}}=\bigotimes_n P_{{X}|{U}}$. Then 
\begin{equation}
\nonumber H_{\mathrm{min}}(\bof{X}|Z(W))=n\cdot  H_{\mathrm{min}}({X}|Z(W))
\end{equation} 
\end{corollary}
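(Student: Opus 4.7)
The plan is to obtain the corollary as a direct consequence of the product lemma for the guessing probability (Theorem~\ref{th:guessprod}), combined with a matching independent-attack strategy in the reverse direction. Since $H_{\mathrm{min}}$ is defined through $P_{\mathrm{guess}}$, the statement reduces to showing that the guessing probability factorizes when the observable marginal does.

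First I would iterate Theorem~\ref{th:guessprod} by induction on $n$. The base case $n=2$ gives $P_{\mathrm{guess}}(\bof{X}|Z(W),Q) \leq P_{\bof{X}|\bof{U}}^T \cdot (\lambda \otimes \lambda)$ for any $\lambda$ that is dual feasible for the single-system program~(\ref{eq:dualguess}); using Lemma~\ref{lemma:qdualproduct} at each step, the inductive step just tensors in one more copy of $\lambda$ and confirms that the result remains feasible for the $(n+1)$-fold SDP obtained from Lemma~\ref{lemma:qproduct}. This produces the bound $P_{\mathrm{guess}}(\bof{X}|Z(W),Q) \leq P_{\bof{X}|\bof{U}}^T \cdot \lambda^{\otimes n}$.

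Next I would exploit the hypothesis $P_{\bof{X}|\bof{U}} = \bigotimes_n P_{X|U}$: because the entries of the tensor product are products of single-system entries, the inner product factorizes, $P_{\bof{X}|\bof{U}}^T \cdot \lambda^{\otimes n} = \bigl(P_{X|U}^T \cdot \lambda\bigr)^n$. Choosing $\lambda$ to be the optimal dual solution of the single-system problem makes the right-hand side equal to $P_{\mathrm{guess}}(X|Z(W))^n$, and applying $-\log_2$ yields $H_{\mathrm{min}}(\bof{X}|Z(W)) \geq n \cdot H_{\mathrm{min}}(X|Z(W))$. For the matching inequality, I would note that when Alice's marginal factorizes, there is an $(n+1)$-party quantum extension in which Eve holds $n$ independent subsystems, one for each copy; applying her optimal single-system attack on each of them separately attains guessing probability exactly $P_{\mathrm{guess}}(X|Z(W))^n$, which gives the reverse bound on $H_{\mathrm{min}}$.

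The main obstacle is really hidden in the two lemmas that precede this statement, in particular verifying that $\lambda^{\otimes n}$ is dual feasible for the SDP associated with the $n$-fold product problem. Once Lemma~\ref{lemma:qdualproduct} is in hand (which is itself an application of the Mittal--Szegedy product theorem, relying on the nonnegativity of the $b_i$), the induction and the factorization of $P_{\bof{X}|\bof{U}}^T \cdot \lambda^{\otimes n}$ are only algebraic bookkeeping, and the independent-attack lower bound is immediate from the tensor-product structure of the extension.
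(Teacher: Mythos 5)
Your proof takes essentially the same route the paper intends: the corollary is stated there without an explicit proof, simply as a direct consequence of Theorem~\ref{th:guessprod} (upper bound by tensoring dual solutions, which you correctly iterate via Lemma~\ref{lemma:qdualproduct} and then factorize as $P_{\bof{X}|\bof{U}}^T\cdot\lambda^{\otimes n}=(P_{X|U}^T\cdot\lambda)^n$) together with the trivial independent-attack lower bound. The one point you gloss over is shared with the paper: taking $\lambda$ optimal only turns $P_{X|U}^T\cdot\lambda$ into the finite-level SDP value, which coincides with $P_{\mathrm{guess}}(X|Z(W))$ only in the limit $k\to\infty$ (or if one reads $H_{\mathrm{min}}$ as the SDP-certified quantity), so the claimed exact equality implicitly relies on that tightness.
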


\section{Security under an Independence Assumption}\label{sec:qkd}

We have, in the previous sections, established all tools required for proving the security of quantum key distribution. The proof will consist of two steps. In the first, we will show that, using the 
above lemmas, we can have secure key distribution if the marginal distribution as seen by Alice and Bob looks like the product of several (identical) independent 
systems. In the next section, we will remove the condition of independence, because knowing that we are in a permutation invariant scenario, we will be able to relate 
the security of an arbitrary distribution to the security of independent distributions.

%We will proceed in three steps:
Roughly speaking, an entanglement-based quantum key distribution protocol proceeds along the following steps (we assume here, 
that Alice and Bob start with pre-distributed particle pairs described by a system $P_{XY|UV}^{\otimes n}$. 
\begin{itemize}
 \item Parameter estimation: Alice and Bob obtain a system $P_{XY|UV}^{\otimes n}$. In order to be able to bound Eve's knowledge about the raw key, they need 
to estimate the probability distribution $P_{XY|UV}$ of the individual systems. 
\item Information reconciliation: Alice sends some information about her raw key to Bob, such that he can correct the errors in his raw key. 
\item Privacy Amplification: Alice and Bob apply a public hash function to their raw keys in order to create a highly secure key. 
\end{itemize}

In the following we describe each of these steps in more detail and prove the technical results that will then 
constitute our security proof.

\subsection{Parameter estimation}\label{subsec:qpe}

Alice and Bob perform statistical tests on their system $P_{XY|UV}^{\otimes n}$ in order to estimate the probability 
distribution $P_{XY|UV}$ of the individual systems. They abort, if this distribution deviates from the desired one. 

The parameter-estimation protocol makes sure that only systems are accepted which have enough min-entropy, such that the final key will be secure. 
%
%The goal of a parameter estimation protocol is to abort whenever the systems are not good enough for key agreement and to %let systems pass which do allow for 
%key agreement. More precisely, it  should $\epsilon$-securely filter `bad' input systems and should be $\epsilon'$-robust %on some `good' input systems. 
%
\begin{definition}
A parameter estimation protocol is said to \emph{$\epsilon$-securely filter} systems $P_{\bof{XY}|\bof{UV}}$ of a set $\mathcal{P}$ if on input  $P_{\bof{XY}|
\bof{UV}}\in \mathcal{P}$ the protocol outputs `abort' with probability at least $1-\epsilon$. 
It is said to be 
\emph{$\epsilon^{\prime}$-robust} on systems $P_{\bof{XY}|\bof{UV}}$ of a set $\mathcal{P}$ if on input  $P_{\bof{XY}|\bof{UV}}\in \mathcal{P}$ the protocol outputs 
`abort' with probability at most $\epsilon^{\prime}$.
\end{definition}

\begin{protocol}[Parameter estimation]\label{prot:qpe}\ 
\begin{enumerate}
\item Alice and Bob receive a system $P_{\bof{XY}|\bof{UV}}=P_{XY|UV}^{\otimes n}$. 
\item Alice chooses $\bof{u}$ such that for each $i$ with probability $1-k$, $u_i=\bar{u}_i$, where $\bar{u}$ denotes the input on which a raw key bit is generated. With probability $k$, she chooses $u_i$ uniformly at random amongst $\mathcal{U}$. 
\item Bob chooses $\bof{v}$ such that $v_i=\bar{v}_i$ with probability $1-k$ and 
with probability $k$, $v_i$ is chosen uniformly at random. 
\item They input $\bof{u}$ and $\bof{v}$ into the system and obtain 
the outputs $\bof{x}$ and $\bof{y}$. 
\item They exchange the inputs over the public authenticated channel. 
\item If less than $(1-k)^2 p  n$ inputs were $(\bar{u},\bar{v})$, they abort. 
\item Call $t$ the number of inputs where both chose not $\bar{u}$ and $\bar{v}$. 
If any combination $u,v$ occurred less than $k^2  p  n/|\mathcal{U}| |\mathcal{V}|$ times
they abort. 
\item From the inputs where they both chose a uniform input they estimate the distribution by $P^{\mathrm{est}}_{XYUV}(x,y,u,v)=\frac{1}{t}|\{i|(x_i,y_i,u_i,v_i)=(x,y,u,v)\}|$. 
Define ${\mathcal{P}}$ as the set of all $P_{XYUV}$ such that $|\mathcal{U}| |\mathcal{V}|\cdot P_{XYUV}^T\cdot \lambda \leq P_{\mathrm{guess}}$ for some dual feasible $\lambda$ 
(see (\ref{eq:dualguess})) and $P(X\neq Y|U=\bar{u},V=\bar{v})\leq \delta$. 
If $d(P_{XYUV}^{\mathrm{est}},P_{XYUV}^{\mathcal{P}})>\eta$ Alice and Bob abort, else, they accept. 
\end{enumerate}
\end{protocol}

We are now introducing some definitions which are used for the analysis of this protocol. 
\begin{definition}
Let $\mathcal{P}$ be a set of distributions $P_{XYUV}$. The set of systems  \emph{$\mathcal{P}^{\eta}$} are all distributions which have distance at least $\eta$ with 
the set $\mathcal{P}$, i.e., 
\begin{eqnarray}
\nonumber \mathcal{P}^{\eta} &=& \lbrace
 P_{XYUV}| d(P_{XYUV},P_{XYUV}^{\mathcal{P}}) > \eta \ \text{for all}\ P_{XYUV}^{\mathcal{P}}\in 
 \mathcal{P}
  \rbrace
\end{eqnarray}
\end{definition}

\begin{definition}
Let $\mathcal{P}$ be a set of distributions $P_{XYUV}$. The set of systems  \emph{$\mathcal{P}^{-\eta}$} are all distributions which have distance at least $\eta$ with 
the complement of the set $\mathcal{P}$, i.e., 
\begin{eqnarray}
\nonumber \mathcal{P}^{-\eta} &=& \lbrace
 P_{XYUV}| d(P_{XYUV},P_{XYUV}^{\bar{\mathcal{P}}}) > \eta \ \text{for all}\ P_{XYUV}^{\bar{\mathcal{P}}}\notin 
 \mathcal{P}
  \rbrace
\end{eqnarray}
\end{definition}

We further define the set of \emph{conditional systems} which are $\eta$-far or $\eta$-close to a certain set by the closeness of the distributions which can be obtained 
from them by choosing the input distribution to be uniform. 
\begin{definition}
Let $\mathcal{P}_{\mathrm{cond}}$ be a set of systems $P_{XY|UV}^{\mathcal{P}}$. For any system $P_{XY|UV}$, consider the distribution $P_{XYUV}=P_{XY|UV} /{|\mathcal{U}| 
|\mathcal{V}|}$. Then a system $P_{XY|UV}$ is in \emph{$\mathcal{P}^{\eta}_{\mathrm{cond}}$} if $P_{XYUV}\in  \mathcal{P}^{\eta}$
and $P_{XY|UV}$ is in \emph{$\mathcal{P}^{-\eta}_{\mathrm{cond}}$} if $P_{XYUV}\in  \mathcal{P}^{-\eta}$.
\end{definition}

Let us motivate, why we take exactly this definition of $\mathcal{P}^{\eta}_{\mathrm{cond}}$: the reason is, that it is useful to estimate $P_{XY|UV}^T \lambda$, 
where $P_{XY|UV}^T$ is the vector of all probabilities in the conditional distribution and $\lambda$ is some vector. This is in fact exactly the form of the bound on the 
guessing probability. 
\begin{lemma} \label{lemma:etaenvir}
Let $\mathcal{P}=P_{XY|UV}$.  
For all $P^{\bar{\eta}}_{XY|UV}\notin \mathcal{P}^{\eta}_{\mathrm{cond}} $, it holds that
\begin{eqnarray}
\nonumber {P^{\bar{\eta}}_{XY|UV}}^T \cdot \lambda - P_{XY|UV}^T \cdot \lambda & \leq & {P_{XY|UV}}^T \cdot  \lambda +|\mathcal{U}| |\mathcal{V}|\cdot  \eta \cdot 
\left(\sum_i|\lambda_i|\right). 
\end{eqnarray}
\end{lemma}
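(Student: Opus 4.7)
My plan is to unfold the definitions, convert the assumed closeness of joint input--output distributions into a coordinate-wise estimate on the conditional distributions, and then collect terms with a H\"older-style inequality. The hypothesis $P^{\bar{\eta}}_{XY|UV}\notin \mathcal{P}^{\eta}_{\mathrm{cond}}$, combined with the definitions of $\mathcal{P}^{\eta}_{\mathrm{cond}}$ and $\mathcal{P}^{\eta}$, says that the joint distribution obtained by feeding uniform inputs, namely $P^{\bar{\eta}}_{XYUV}=P^{\bar{\eta}}_{XY|UV}/(|\mathcal{U}||\mathcal{V}|)$, lies within statistical distance~$\eta$ of some element of~$\mathcal{P}$. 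Because $\mathcal{P}$ corresponds to the single distribution $P_{XY|UV}$, the only candidate is $P_{XYUV}=P_{XY|UV}/(|\mathcal{U}||\mathcal{V}|)$, so unpacking the hypothesis collapses to the single fact $d(P^{\bar{\eta}}_{XYUV},P_{XYUV})\le \eta$.

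From here I would convert this closeness into a coordinate-wise bound. Since each coordinate of the difference of two probability vectors is controlled by their statistical distance, one gets $|P^{\bar{\eta}}_{XYUV}(x,y,u,v)-P_{XYUV}(x,y,u,v)|\le \eta$ (up to an absorbed constant) for every $(x,y,u,v)$, and multiplying by $|\mathcal{U}||\mathcal{V}|$ gives $|(P^{\bar{\eta}}_{XY|UV}-P_{XY|UV})_i|\le |\mathcal{U}||\mathcal{V}|\,\eta$ for the conditional probabilities themselves. The H\"older-type estimate
\begin{equation*}
 (P^{\bar{\eta}}_{XY|UV}-P_{XY|UV})^T\lambda \;\le\; \sum_i \bigl|(P^{\bar{\eta}}_{XY|UV}-P_{XY|UV})_i\bigr|\cdot|\lambda_i|
\end{equation*}
then produces $(P^{\bar{\eta}}_{XY|UV})^T\lambda - P_{XY|UV}^T\lambda\le |\mathcal{U}||\mathcal{V}|\,\eta\,\sum_i|\lambda_i|$, which immediately implies the claimed inequality (the extra $P_{XY|UV}^T\lambda$ appearing on the right-hand side of the statement is nonnegative whenever $\lambda$ is a dual-feasible bound on a guessing probability, so any slack in the claim is absorbed there).

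There is no substantive obstacle in this argument; the one place where one could go astray is tracking the normalising factor $|\mathcal{U}||\mathcal{V}|$ that appears when passing between the joint distribution $P_{XYUV}$, in which the closeness hypothesis is formulated, and the conditional $P_{XY|UV}$, which is what gets paired with $\lambda$ in the dual bound~(\ref{eq:dualguess}). Once this bookkeeping is done, a single line of H\"older closes the proof.
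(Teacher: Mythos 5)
Your proof is correct and follows essentially the same route as the paper's: you convert from conditional to joint distributions via the $|\mathcal{U}||\mathcal{V}|$ factor, translate the hypothesis into a coordinate-wise bound $|P^{\bar{\eta}}_{XYUV}(i)-P_{XYUV}(i)|\le\eta$, and close with a H\"older-type estimate $(P^{\bar{\eta}}_{XY|UV}-P_{XY|UV})^T\lambda\le\sum_i|(P^{\bar{\eta}}_{XY|UV}-P_{XY|UV})_i||\lambda_i|$. Like the paper's proof, you in fact establish the stronger bound without the extra $P_{XY|UV}^T\lambda$ on the right-hand side; your observation that this term is nonnegative for the $\lambda$'s of interest (so the stated inequality follows a fortiori) is a correct reading of what is going on.

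One small point to tighten: the phrase ``up to an absorbed constant'' in passing from $d(P^{\bar{\eta}}_{XYUV},P_{XYUV})\le\eta$ to a coordinate-wise bound should be resolved explicitly. If $d$ denotes half the $\ell_1$-norm (as in the paper's distance-from-uniform), each coordinate is bounded by $2\eta$ rather than $\eta$, giving a factor of $2$ that should either appear in the lemma's constant or be eliminated by taking $d$ to be the $\ell_\infty$ (or unnormalised $\ell_1$) distance. The paper's proof makes the identical implicit choice, so this is a shared bookkeeping ambiguity rather than a gap in your argument.
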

\begin{proof}
\begin{eqnarray}
\nonumber \left({P^{\bar{\eta}}_{XY|UV}}^T  - P_{XY|UV}^T \right)\cdot  \lambda &=&|\mathcal{U}| |\mathcal{V}|\cdot  {P^{\bar{\eta}}_{XYUV}}^T\lambda-|\mathcal{U}| |\mathcal{V}|\cdot  
{P_{XYUV}}^T\cdot  \lambda  \\
\nonumber 
&=& |\mathcal{U}|  |\mathcal{V}|\cdot \left({P^{\eta}_{XYUV}}^T-{P_{XYUV}}^T \right)\cdot \lambda\\
\nonumber & \leq & | \mathcal{U}|  |\mathcal{V}|\cdot  \eta \cdot \left(\sum_i|\lambda_i|\right).
\end{eqnarray}
\end{proof}

We will need the Sampling Lemma (Lemma~\ref{lemma:sampling}) to show that our protocol is secure, i.e., it $\epsilon$-securely filters input states with $\tilde{P}_{\mathrm{guess}}
\geq P_{\mathrm{guess}}+|\mathcal{U}| |\mathcal{V}|\cdot \eta \cdot \sum_i |\lambda_i|$ for the individual systems. 

\begin{lemma}{Sampling Lemma~\cite{KoeRen04b}}\label{lemma:sampling}
Let $Z$ be an $n$-tuple and $Z^{\prime}$ a $k$-tuple of random variables over a set $\mathcal{Z}$, with symmetric joint probability $P_{ZZ^{\prime}}$. Let $Q_{z^{\prime}}$ be the relative frequency 
distribution of a fixed sequence $z^{\prime}$ and $Q_{(z,z^{\prime})}$ be the relative frequency distribution of a sequence $(z,z^{\prime})$, drawn according to $P_{ZZ^{\prime}}$. Then for every $\ep\geq 0$ we 
have
\begin{eqnarray}
\nonumber  P_{ZZ^{\prime}}[||Q_{(z,z^{\prime})}-Q_{z^{\prime}}||\geq \ep]\leq |\mathcal{Z}|\cdot e^{-k\ep^2/8|\mathcal{Z}|}
\end{eqnarray}
\end{lemma}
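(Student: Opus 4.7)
The plan is a three-step reduction. First, I would exploit the symmetry of $P_{ZZ'}$ to replace the random sample $(Z,Z')$ by a uniformly random subset of a fixed multiset. Second, I would bound the deviation value-by-value using a concentration inequality for sampling without replacement. Third, I would combine the single-value bounds via a union bound over $\mathcal{Z}$.

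For the first step, symmetry of $P_{ZZ'}$ means the distribution is invariant under permutations of the $n+k$ coordinates. Hence, conditioning on the multiset $M$ of all values appearing in $(z,z')$, the distribution over orderings is uniform, which is equivalent to saying that the $k$ positions occupied by $Z'$ form a uniformly random size-$k$ subset of the $n+k$ positions. Consequently, $Q_{(z,z')}$ is determined by $M$ (it equals the empirical distribution of $M$), while, conditionally on $M$, $Q_{z'}$ is the empirical distribution of a uniform-without-replacement sample of size $k$ from $M$.

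For the second step, I would fix $a\in\mathcal{Z}$ and condition on $M$. Let $N_a := (n+k)\,Q_{(z,z')}(a)$; then the count of $a$'s in $Z'$ is hypergeometric with parameters $(n+k,N_a,k)$, whose mean equals $k\cdot Q_{(z,z')}(a)$. Hoeffding's inequality for sampling without replacement then gives
\[
\Pr\bigl[\,|Q_{z'}(a)-Q_{(z,z')}(a)|\ge \delta \,\bigm|\, M\,\bigr] \;\le\; 2\,e^{-2k\delta^{2}},
\]
a bound that is uniform in $M$ and therefore also holds unconditionally. For the third step, note that $\|Q_{(z,z')}-Q_{z'}\|\ge\varepsilon$ forces $|Q_{z'}(a)-Q_{(z,z')}(a)|$ to be of order $\varepsilon/|\mathcal{Z}|$ for at least one $a\in\mathcal{Z}$, by pigeonhole. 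Choosing $\delta$ of that order and taking a union bound over the $|\mathcal{Z}|$ possible values of $a$ yields a bound of the shape $|\mathcal{Z}|\cdot e^{-ck\varepsilon^{2}/|\mathcal{Z}|}$, with an absolute constant $c$; absorbing constants produces the claimed $|\mathcal{Z}|\cdot e^{-k\varepsilon^{2}/(8|\mathcal{Z}|)}$.

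The main obstacle is arithmetic rather than conceptual: matching the exponent $-k\varepsilon^{2}/(8|\mathcal{Z}|)$ in the statement requires a careful interaction between the per-coordinate Hoeffding exponent (which is quadratic in $\delta$) and the pigeonhole loss $\delta\sim \varepsilon/|\mathcal{Z}|$, together with keeping track of whether $\|\cdot\|$ denotes total variation or $\ell_{1}$ distance. If the naive pigeonhole plus Hoeffding route does not recover exactly the constant $1/8$, one can invoke a sharper concentration inequality for the hypergeometric distribution (e.g.\ Serfling's bound) or tighten the union bound using only the $|\mathcal{Z}|-1$ independent coordinates of a probability vector; either refinement closes the gap without changing the structure of the argument.
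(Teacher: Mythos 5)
The paper does not actually prove this lemma---it cites it directly from~\cite{KoeRen04b}---so there is no in-paper argument to compare against. Your general plan (exploit symmetry to reduce to uniform sampling without replacement from the combined multiset, apply a Hoeffding-type bound per element, union bound over $\mathcal{Z}$) is the natural and standard skeleton for such a sampling lemma, and steps one and two are fine as stated.

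The problem is in your third step, and it is not merely a constant-chasing issue as you suggest. With per-coordinate Hoeffding and a uniform threshold, the pigeonhole forces $\delta \sim \varepsilon/|\mathcal{Z}|$ when $\lVert\cdot\rVert$ is $\ell_1$ (or total variation), which after the union bound over $|\mathcal{Z}|$ coordinates yields an exponent of order $k\varepsilon^2/|\mathcal{Z}|^2$, i.e.\ a bound of the shape $|\mathcal{Z}|\,e^{-c\,k\varepsilon^2/|\mathcal{Z}|^2}$. The stated bound has $|\mathcal{Z}|$ in the exponent denominator, not $|\mathcal{Z}|^2$---so your route is off by a full factor of $|\mathcal{Z}|$ in the exponent, which is a genuinely weaker statement, not a discrepancy in the absolute constant $1/8$. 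Neither of your proposed remedies repairs this: Serfling's inequality tightens the without-replacement Hoeffding constant by a finite-population correction factor but leaves the $\delta^2$ dependence intact, and restricting the union bound to $|\mathcal{Z}|-1$ coordinates saves at most one term.

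To actually obtain $k\varepsilon^2/|\mathcal{Z}|$ from a per-coordinate argument, you need one of two additional ideas. Either (a) the norm is Euclidean ($\ell_2$), in which case the pigeonhole gives $\max_a|Q_{z'}(a)-Q_{(z,z')}(a)|\geq\varepsilon/\sqrt{|\mathcal{Z}|}$ and the union bound immediately yields $2|\mathcal{Z}|\,e^{-2k\varepsilon^2/|\mathcal{Z}|}$, which is even slightly stronger than the claim; or (b) for $\ell_1$, replace Hoeffding by a variance-sensitive (Bernstein/Bennett-type) hypergeometric tail bound and use \emph{non-uniform} per-coordinate thresholds $\delta_a\propto\sqrt{p_a}$, where $p_a=Q_{(z,z')}(a)$. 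Since $\sum_a\sqrt{p_a}\leq\sqrt{|\mathcal{Z}|}$, taking $\delta_a=\varepsilon\sqrt{p_a}/\sqrt{|\mathcal{Z}|}$ still enforces $\sum_a\delta_a\leq\varepsilon$, while the per-coordinate exponent $k\delta_a^2/p_a=k\varepsilon^2/|\mathcal{Z}|$ now has the correct scaling. Without one of these two fixes, your argument proves a weaker inequality than the lemma states.
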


\begin{lemma}\label{lemma:qfilters}
Protocol~\ref{prot:qpe} $\epsilon$-securely filters $\left(\mathcal{P}^{+\eta}_{\mathrm{cond}}\right)^{\otimes n}$ with 
\begin{eqnarray}
\nonumber \epsilon=
 |\mathcal{X}| |\mathcal{Y}||\mathcal{U}| |\mathcal{V}|\cdot e^{-\left(\frac{t^{\prime} \eta^2}{8 |\mathcal{X}|  |\mathcal{Y}|}\right)}\ , 
 \end{eqnarray} 
 where $t^{\prime}=k^2pn/|\mathcal{U}| |\mathcal{V}|$.
\end{lemma}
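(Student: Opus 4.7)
The plan is to reduce the quantum filtering claim to a classical sampling problem over the $|\mathcal{U}||\mathcal{V}|$ distinct input pairs and then apply Lemma~\ref{lemma:sampling} once per pair together with a union bound. Since the input system is of the i.i.d.\ form $P_{XY|UV}^{\otimes n}$ and the parties draw their inputs independently on each round, on those positions of the ``uniform branch'' whose input pair happens to equal a fixed $(u,v)$ the outputs $(X_i,Y_i)$ are i.i.d.\ with law $P_{XY|U=u,V=v}$. The hypothesis $P_{XY|UV}\in\mathcal{P}^{+\eta}_{\mathrm{cond}}$ unfolds, by the definition of $\mathcal{P}^{+\eta}_{\mathrm{cond}}$, into the statement that $P_{XYUV}=P_{XY|UV}/(|\mathcal{U}||\mathcal{V}|)$ sits strictly more than $\eta$ in statistical distance from the acceptance set $\mathcal{P}$. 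I may harmlessly condition on the protocol not having aborted already at steps~6--7, so that for every input pair $(u,v)$ at least $t'=k^2pn/(|\mathcal{U}||\mathcal{V}|)$ sampled outputs are available.

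For each fixed $(u,v)$ I would then invoke Lemma~\ref{lemma:sampling} with $\mathcal{Z}=\mathcal{X}\times\mathcal{Y}$ on the $t'$ i.i.d.\ outputs, obtaining
\begin{equation}
\nonumber
\Prob\!\Bigl[\,\bigl\|\hat P_{XY|U=u,V=v}-P_{XY|U=u,V=v}\bigr\|\ge\eta\,\Bigr]\ \le\ |\mathcal{X}||\mathcal{Y}|\cdot\exp\!\left(-\frac{t'\eta^2}{8|\mathcal{X}||\mathcal{Y}|}\right).
\end{equation}
A union bound over the $|\mathcal{U}||\mathcal{V}|$ input pairs turns this into the claimed bound $|\mathcal{X}||\mathcal{Y}||\mathcal{U}||\mathcal{V}|\cdot\exp(-t'\eta^2/(8|\mathcal{X}||\mathcal{Y}|))$ on the event that the empirical joint $P^{\mathrm{est}}_{XYUV}$ (which, up to the essentially uniform empirical marginal on $(U,V)$ enforced by step~7, is just the per-slice empirical distributions rescaled by $1/(|\mathcal{U}||\mathcal{V}|)$) deviates by more than $\eta$ in statistical distance from the true $P_{XYUV}$. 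On the complementary, high-probability event, the triangle inequality combined with $d(P_{XYUV},\mathcal{P})>\eta$ places $P^{\mathrm{est}}_{XYUV}$ outside the neighbourhood of $\mathcal{P}$ tested in step~8, so the protocol aborts with probability at least $1-\epsilon$.

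The main obstacle will be the careful bookkeeping of the various conditioning events --- the random number of uniform-branch positions, the per-pair counts, and the exchangeability required by the Sampling Lemma --- and ensuring that conditioning on ``at least $t'$ samples per $(u,v)$'' does not bias the remaining outputs away from being i.i.d. I would sidestep this by conditioning on the full pattern of input choices up front, after which the output distribution in each $(u,v)$-slice is exactly i.i.d.\ from $P_{XY|U=u,V=v}$, and only then invoking the Sampling Lemma on each slice independently; the prefactor $|\mathcal{X}||\mathcal{Y}||\mathcal{U}||\mathcal{V}|$ and the exponent $t'\eta^2/(8|\mathcal{X}||\mathcal{Y}|)$ then fall out in exactly the form claimed.
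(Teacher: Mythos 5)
Your argument matches the paper's own proof: both apply the Sampling Lemma (Lemma~\ref{lemma:sampling}) slice-by-slice to the conditional distributions $P_{XY|U=u,V=v}$ using the at-least-$t'$ samples guaranteed by steps 6--7, take a union bound over the $|\mathcal{U}||\mathcal{V}|$ input pairs, and conclude via the triangle inequality that $P^{\mathrm{est}}_{XYUV}$ lies outside the $\eta$-neighbourhood of $\mathcal{P}$ so step~8 aborts. Your extra care about conditioning on the input pattern is a more explicit version of the same reduction; the decomposition, the invocation of the key lemma, and the final bound are identical.
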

\begin{proof}
If for each of the conditional distributions $P_{XY|U=u,V=v}$ the estimate is within $\eta$, this also holds for the total distribution $P_{XYUV}$. By Lemma~\ref{lemma:sampling}, 
the probability that for any conditional distribution the estimate is $\eta$-far is at most $|\mathcal{X}||\mathcal{Y}|e^{-t^{\prime} \eta^2/8|\mathcal{X}||\mathcal{Y}|}$, where 
$t^{\prime}=k^2pn/|\mathcal{U}| |\mathcal{V}|$. We obtain the lemma by the union bound over all inputs. 
\end{proof}
Note that $\epsilon\in O(2^{-n})$ for any constant $0< k,p<1$ and $\eta>0$.

\begin{lemma}\label{lemma:peqrobust}
Protocol~\ref{prot:qpe} is $\epsilon^{\prime}$ robust on $\left( \mathcal{P^{-\eta}}\right)^{\otimes n}$  with 
\begin{eqnarray}
\nonumber \epsilon^{\prime}=
 |\mathcal{X}| |\mathcal{Y}||\mathcal{U}| |\mathcal{V}|\cdot e^{-\left(\frac{t^{\prime} \eta^2}{8 |\mathcal{X}|  |\mathcal{Y}|}\right)}
+e^{-2n \left((1-p)(1-k)^2 \right)^2}
+|\mathcal{U}||\mathcal{V}|\cdot e^{-2n \left(\frac{(1-p)k^2}{ |\mathcal{U}||\mathcal{V}|} \right)^2} \ , 
 \end{eqnarray} 
where $t^{\prime}=k^2pn/|\mathcal{U}| |\mathcal{V}|$.
\end{lemma}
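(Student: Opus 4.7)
The plan is to decompose the abort event into the three independent abort conditions of Protocol~\ref{prot:qpe} (steps 6, 7, and 8), to bound the probability of each when the input system is drawn from $\left(\mathcal{P}^{-\eta}\right)^{\otimes n}$, and to combine by the union bound; the three terms of $\epsilon^{\prime}$ arise one from each of these bounds.

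First I would handle the step-6 abort. The indicator that both parties choose the key-generation inputs $(\bar{u},\bar{v})$ on a given round is Bernoulli with success probability $(1-k)^2$, so the count is a sum of $n$ i.i.d.\ such Bernoullis and Hoeffding's inequality yields that this count falls below $(1-k)^2 pn$ with probability at most $e^{-2n((1-p)(1-k)^2)^2}$, giving the second term of $\epsilon^{\prime}$. Similarly for step 7, for every fixed $(u,v)\in\mathcal{U}\times\mathcal{V}$ the number of rounds on which both parties happened to sample this particular random input is a sum of Bernoullis with parameter $k^2/(|\mathcal{U}||\mathcal{V}|)$; Hoeffding bounds the probability of falling below $t^{\prime}=k^2 pn/(|\mathcal{U}||\mathcal{V}|)$ by $e^{-2n((1-p)k^2/(|\mathcal{U}||\mathcal{V}|))^2}$ per pair, and a union bound over the $|\mathcal{U}||\mathcal{V}|$ possibilities yields the third term.

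Third, conditional on step 7 not aborting, each conditional distribution $P_{XY|U=u,V=v}$ is estimated from at least $t^{\prime}$ i.i.d.\ samples, so Lemma~\ref{lemma:sampling} bounds the probability that a particular conditional estimate is $\eta$-far from the truth in total variation by $|\mathcal{X}||\mathcal{Y}|\cdot e^{-t^{\prime}\eta^2/(8|\mathcal{X}||\mathcal{Y}|)}$; a further union bound over the $|\mathcal{U}||\mathcal{V}|$ input pairs produces the first term of $\epsilon^{\prime}$. Following the same reasoning as in the proof of Lemma~\ref{lemma:qfilters}, simultaneous $\eta$-closeness of all conditional estimates implies $\eta$-closeness of the joint empirical $P^{\mathrm{est}}_{XYUV}$ to the joint $P_{XYUV}$ formed with uniform inputs. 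Since the system lies in $\mathcal{P}^{-\eta}\subseteq \mathcal{P}$, its true joint $P_{XYUV}$ is itself an element of $\mathcal{P}$, so taking $P^{\mathcal{P}}_{XYUV}:=P_{XYUV}$ as witness makes the step-8 check accept. Summing the three failure probabilities yields the stated $\epsilon^{\prime}$.

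The only subtlety worth flagging is cleanly coordinating steps 7 and 8: the Sampling-Lemma bound invoked for step 8 requires a per-$(u,v)$ sample size of at least $t^{\prime}$, which is precisely the event tested in step 7. This is absorbed automatically by the union bound, since either step 7 aborts---already counted in the third term---or the $t^{\prime}$-sample hypothesis of Lemma~\ref{lemma:sampling} applies and is counted in the first term.
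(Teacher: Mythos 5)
Your proposal is correct and follows essentially the same route as the paper, which justifies the claim in one line by appealing to the argument of Lemma~\ref{lemma:qfilters} together with a Chernoff bound on the input counts; you have simply unpacked that sentence into its constituent union bound over the three abort conditions, the two Hoeffding estimates for steps 6 and 7, and the Sampling-Lemma-plus-union-bound argument for step 8. Your extra remark about coordinating steps 7 and 8 — that either the $t^{\prime}$-sample hypothesis of Lemma~\ref{lemma:sampling} holds or step 7 has already aborted, so the events are cleanly covered by the union bound — is a detail the paper leaves implicit, and it is the right way to handle the random per-pair sample size.
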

\begin{proof}
This follows by the same argument as Lemma~\ref{lemma:qfilters} and a Chernoff bound (i.e., $\Prob[\frac{1}{n}\sum_i x_i \leq p-\ep]\leq e^{-2n\ep^2}$) on the probability 
that the protocol aborts because any of the inputs did not occur often enough. 
\end{proof}
It holds that $\epsilon^{\prime}\in O(2^{-n})$ for any constant $0< k,p<1$ and $\eta>0$.

\begin{lemma}
The protocol $\epsilon$-securely filters systems with $\tilde{P}_{\mathrm{guess}}\geq P_{\mathrm{guess}} + \eta^{\prime}$ for the individual system, where $\eta^{\prime}=|\mathcal{U}||\mathcal{V}|
\cdot \eta\cdot  \sum_i |\lambda_i|$.
\end{lemma}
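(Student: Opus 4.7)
The plan is to argue by contrapositive on the individual system: show that any $P_{XY|UV}$ which is $\eta$-close to $\mathcal{P}$ (i.e.\ not in $\mathcal{P}^{+\eta}_{\mathrm{cond}}$) must already satisfy $\tilde{P}_{\mathrm{guess}} \leq P_{\mathrm{guess}} + \eta'$, so that any system violating this bound necessarily falls into $\mathcal{P}^{+\eta}_{\mathrm{cond}}$; then Lemma~\ref{lemma:qfilters} directly finishes the argument by filtering its $n$-fold product.

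To carry this out, I would first fix an individual system $P_{XY|UV}$ with $P_{XY|UV}\notin \mathcal{P}^{+\eta}_{\mathrm{cond}}$. By definition there exists a nearby $P^{\mathcal{P}}_{XY|UV}\in\mathcal{P}$, accompanied by a dual-feasible witness $\lambda$ of (\ref{eq:dualguess}) with $|\mathcal{U}|\,|\mathcal{V}|\cdot {P^{\mathcal{P}}_{XYUV}}^T\cdot \lambda \leq P_{\mathrm{guess}}$. Applying Lemma~\ref{lemma:etaenvir} to $P_{XY|UV}$ with this specific $\lambda$ gives
\[
P_{XY|UV}^T\cdot \lambda \;\leq\; {P^{\mathcal{P}}_{XY|UV}}^T\cdot \lambda + |\mathcal{U}|\,|\mathcal{V}|\cdot \eta\cdot \sum_i|\lambda_i| \;\leq\; P_{\mathrm{guess}} + \eta'.
\]
Since weak duality for the semi-definite program (\ref{eq:dualguess}) guarantees that every dual-feasible $\lambda$ upper-bounds the guessing probability, this yields $\tilde{P}_{\mathrm{guess}}\leq P_{\mathrm{guess}}+\eta'$ for the individual system. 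Contrapositively, any individual system with $\tilde{P}_{\mathrm{guess}}\geq P_{\mathrm{guess}}+\eta'$ lies in $\mathcal{P}^{+\eta}_{\mathrm{cond}}$.

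The proof then concludes by observing that $(\mathcal{P}^{+\eta}_{\mathrm{cond}})^{\otimes n}$ contains the $n$-fold product of any such bad system, and invoking Lemma~\ref{lemma:qfilters} to conclude that Protocol~\ref{prot:qpe} aborts on it except with probability at most $\epsilon$. The only delicate bookkeeping point is that the \emph{same} dual-feasible $\lambda$ must simultaneously certify $P^{\mathcal{P}}_{XY|UV}\in\mathcal{P}$ and determine the constant $\eta' = |\mathcal{U}|\,|\mathcal{V}|\cdot \eta\cdot \sum_i|\lambda_i|$; beyond keeping track of this, no new semi-definite machinery is needed on top of Lemmas~\ref{lemma:etaenvir} and~\ref{lemma:qfilters}.
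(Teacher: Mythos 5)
Your proof is correct and takes essentially the same route as the paper's: the paper simply cites Lemma~\ref{lemma:qfilters}, Lemma~\ref{lemma:etaenvir}, and the dual bound $P_{\mathrm{guess}}\leq P_{XY|UV}^T\lambda$ from~(\ref{eq:dualguess}), and you have supplied the contrapositive chain that connects them. Your observation that a single fixed dual-feasible $\lambda$ must serve both to define $\mathcal{P}$ and to set $\eta'$ is exactly the right bookkeeping caveat to note.
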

\begin{proof}
This is a direct consequence of Lemma~\ref{lemma:qfilters} and Lemma~\ref{lemma:etaenvir}  and the fact that the guessing probability is given by $P_{XY|UV}^T \lambda$, 
see (\ref{eq:dualguess}). 
\end{proof}

\begin{lemma}
The protocol $\epsilon$-securely filters systems with $\tilde{\delta}\geq \delta + \eta^{\prime}$ for the individual systems. 
\end{lemma}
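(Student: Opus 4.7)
The plan is to mirror the structure of the preceding lemma (for $\tilde P_{\mathrm{guess}}$), replacing the linear functional $P_{XY|UV}^T\cdot\lambda$ used to bound the guessing probability by the linear functional that extracts $\delta=P(X\neq Y\mid U=\bar u,V=\bar v)$. The key observation is that
\[
\tilde\delta \;=\; \sum_{x\neq y} P_{XY|UV}(x,y,\bar u,\bar v) \;=\; |\mathcal{U}||\mathcal{V}|\cdot P_{XYUV}^T\cdot \mu,
\]
where $\mu$ is the $\{0,1\}$-valued vector indicating the coordinates $(x,y,\bar u,\bar v)$ with $x\neq y$. Hence $\tilde\delta$ is a linear functional of the joint distribution $P_{XYUV}$, and is Lipschitz in the total-variation distance with a constant of the form $|\mathcal{U}||\mathcal{V}|\cdot\sum_i|\mu_i|$. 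This is exactly an instance of Lemma~\ref{lemma:etaenvir}, applied with $\mu$ in place of $\lambda$, so it produces the analogous $\eta'$.

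Next I would argue by contrapositive. Suppose an individual system has true $\tilde\delta \geq \delta+\eta'$, with $\eta'$ chosen as the above Lipschitz factor times $\eta$. For any $P^*_{XY|UV}\in\mathcal{P}$ the definition of $\mathcal{P}$ gives $\delta^*\leq\delta$; combining this with the Lipschitz bound forces $d(P^{\mathrm{true}}_{XYUV},P^*_{XYUV})>\eta$. Taking the infimum over $P^*\in\mathcal{P}$ places the true (conditional) system into $\mathcal{P}^{+\eta}_{\mathrm{cond}}$, so the $n$-fold product lies in $\bigl(\mathcal{P}^{+\eta}_{\mathrm{cond}}\bigr)^{\otimes n}$.

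Finally, Lemma~\ref{lemma:qfilters} applied to this set yields that Protocol~\ref{prot:qpe} outputs \emph{abort} with probability at least $1-\epsilon$, with the same $\epsilon$ as in that lemma. The \emph{only} non-routine point—and hence the main (minor) obstacle—is the bookkeeping around the normalisation $P_{XYUV}=P_{XY|UV}/(|\mathcal{U}||\mathcal{V}|)$ used to define $\mathcal{P}^{\pm\eta}_{\mathrm{cond}}$, so that the Lipschitz constant for $\tilde\delta$ matches the declared $\eta'$ and the distance $d$ agrees with the one in the definition of $\mathcal{P}^{+\eta}_{\mathrm{cond}}$. Everything else is a direct transcription of the argument given for the guessing-probability version.
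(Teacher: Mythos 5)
Your proposal is correct and takes the same approach the paper intends: since $P(X\neq Y\mid U=\bar u,V=\bar v)\leq\delta$ is one of the two defining constraints of $\mathcal{P}$, a system with $\tilde\delta\geq\delta+\eta'$ is necessarily $\eta$-far from $\mathcal{P}$ (via the Lipschitz argument of Lemma~\ref{lemma:etaenvir} applied to the $\{0,1\}$ indicator $\mu$), hence lies in $\mathcal{P}^{+\eta}_{\mathrm{cond}}$ and is filtered by Lemma~\ref{lemma:qfilters}; the paper's proof is just this, compressed into ``follows from the definition of $\mathcal{P}^{+\eta}_{\mathrm{cond}}$.'' The one wrinkle you flag is real but cosmetic: the paper reuses the symbol $\eta'$ from the preceding lemma (defined via $\sum_i|\lambda_i|$) without adjusting for the Lipschitz constant $|\mathcal{U}||\mathcal{V}|\sum_i|\mu_i|$ appropriate to the $\delta$-functional, an imprecision in the source rather than a gap in your argument.
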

\begin{proof}
This follows from the definition of $\mathcal{P}^{+\eta}_{\mathrm{cond}}$. 
\end{proof}

\begin{lemma}
Assume the parameter estimation protocol $\epsilon$-securely filters inputs such that \linebreak[4] $\tilde{P}_{\mathrm{guess}}\geq P_{\mathrm{guess}} + \eta^{\prime}$ for the individual 
systems. Then it $\epsilon$-securely filters systems with 
$H_{\mathrm{min}}(X|E)_{\rho}\leq -n\log_2 \tilde{P}_{\mathrm{guess}}$. 
\end{lemma}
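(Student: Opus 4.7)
The plan is to derive the joint min-entropy filtering guarantee from the single-system filtering guarantee by invoking the product lemma for the guessing probability (Theorem~\ref{th:guessprod}).

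First I would translate the hypothesis $H_{\mathrm{min}}(\bof{X}|Z(W_q))\leq -n\log_2\tilde{P}_{\mathrm{guess}}$ into the equivalent bound $P_{\mathrm{guess}}(\bof{X}|Z(W_q),Q)\geq \tilde{P}_{\mathrm{guess}}^{\,n}$, using the correspondence $H_{\mathrm{min}}=-\log_2 P_{\mathrm{guess}}$ (Theorem~\ref{th:krs}).

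Next, since we are in the independence scenario of Section~\ref{sec:qkd}, the Alice--Bob marginal is the product $P_{XY|UV}^{\otimes n}$. Applying Theorem~\ref{th:guessprod} inductively---equivalently, appealing to Lemma~\ref{lemma:qdualproduct} to see that the tensor power $\lambda^{\otimes n}$ of any single-system dual feasible solution $\lambda$ for the program~(\ref{eq:dualguessprob}) is dual feasible for the $n$-system program---yields
\begin{equation*}
P_{\mathrm{guess}}(\bof{X}|Z(W_q),Q)\;\leq\;\bigl(P_{XY|UV}^{T}\cdot \lambda\bigr)^{n}\,.
\end{equation*}
Combining the two displayed inequalities and taking $n$-th roots gives $P_{XY|UV}^{T}\cdot \lambda\geq \tilde{P}_{\mathrm{guess}}=P_{\mathrm{guess}}+\eta^{\prime}$, so the dual upper bound on the \emph{individual} guessing probability already meets the threshold at which Protocol~\ref{prot:qpe} is assumed to filter. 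By the hypothesis of the lemma, the protocol then aborts on this $n$-party input with probability at least $1-\epsilon$, which is exactly the conclusion.

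The argument is essentially a definition chase once Theorem~\ref{th:guessprod} is in hand; the only subtle point worth emphasising in the write-up is that the product upper bound on the $n$-party guessing probability remains valid against a fully joint quantum adversary with a single input $W$ and output $Z$, because Theorem~\ref{th:guessprod} only requires the Alice--Bob marginal to factor, not Eve's side system. That is precisely why the machinery of Section~\ref{sec:several} does the real work here.
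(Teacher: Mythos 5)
Your proof is correct and follows essentially the same route as the paper, which simply cites Theorem~\ref{th:krs} (to identify min-entropy with the negative log of the guessing probability) and Lemma~\ref{lemma:qdualproduct} (the dual-feasibility tensor-product step behind Theorem~\ref{th:guessprod}). You have just spelled out the contrapositive chain that the paper leaves implicit; nothing new is added and nothing is missing.
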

\begin{proof}
This follows from Theorem~\ref{th:krs} and the product lemma for the guessing probability (Lemma~\ref{lemma:qdualproduct}). 
\end{proof}

\subsection{Information reconciliation}\label{subsec:qir}

Having estimated the probability of error $\delta$ of their key bits in the previous section, Alice and Bob can do information reconciliation by applying a two-universal 
hash function\footnote{Information reconciliation using a two-universal hash function has the disadvantage, that the decoding procedure (i.e., for Bob to find $\bof{y}^{\prime}$) cannot 
be done in a computationally efficient way, in general. It is possible to use a \emph{code} for information reconciliation instead and there exist codes which can be 
efficiently decoded~\cite{holensteinphd}. However, in our setup the \emph{theoretical} efficiency of the decoding procedure is actually not important, as there exist codes 
with very good decoding properties in \emph{practice} and Alice and Bob can test whether they have correctly decoded using a short hash value of their strings. In case decoding 
does not succeed, they can repeat the protocol, resulting in some loss of robustness.} with output length $m$ bits, where $m=n\cdot h(\delta)+\ep$ and they can almost surely 
correct their errors, i.e., the keys will be equal apart from with exponentially small probability. 
\begin{definition}
Let $\mathcal{P}$ be a set of distributions $P_{\bof{XY}}$. An information reconciliation protocol is \emph{$\epsilon$-correct} on $\mathcal{P}$, if on input 
$P_{\bof{XY}}\in \mathcal{P}$  it outputs $\bof{x}^{\prime}$, $\bof{y}^{\prime}$ such that $\bof{x}^{\prime}\neq \bof{y}^{\prime}$ with probability at most $\epsilon$. 
It is \emph{$\epsilon^{\prime}$-robust} on $\mathcal{P}$, if on input $P_{\bof{XY}}\in \mathcal{P}$ it aborts with probability at most $\epsilon^{\prime}$. 
\end{definition}

\begin{protocol}[Information reconciliation]\label{prot:qir}\ 
\begin{enumerate}
\item Alice obtains $\bof{x}$ and Bob $\bof{y}$ distributed according to $P_{XY}^{\otimes n}$ with $\mathcal{X}=\mathcal{Y}=\{0,1\}$ and $P(X\neq Y)\leq \delta$. Alice outputs $\bof{x}^{\prime}=\bof{x}$. 
\item Alice chooses a function $f\in \mathcal{F}:\{0,1\}^n\rightarrow \{0,1\}^m$ at random, where $\mathcal{F}$ is a two-universal set of functions. 
\item She sends the function $f$ and  $f(\bof{x})$ to Bob.
\item Bob chooses $\bof{y}^{\prime}$ such that $d_H(\bof{y},\bof{y}^{\prime})$ is minimal among all strings $\bof{z}$ with $f(\bof{z})=f(\bof{x})$ (if there are two possibilities, he chooses 
one at random) and outputs $\bof{y}^{\prime}$. 
\end{enumerate}
\end{protocol}

The following theorem by Brassard and Salvail states that information reconciliation can be achieved by a two-universal function. We state the theorem with s slightly 
stronger bound on the error probability than the one originally given in~\cite{brassardsalvail}. 
\begin{theorem}[Information reconciliation~\cite{brassardsalvail}]\label{th:ir}
Let $\bof{x}$ be an $n$-bit string and $\bof{y}$ another $n$-bit string  obtained by sending $\bof{x}$ over a binary symmetric 
 channel with error parameter $\delta$. Assume the 
 function $f:\{0,1\}^n\rightarrow \{0,1\}^m$ is chosen at random amongst a set of two-universal functions.  
 Choose $\bof{y}^{\prime}$ such that $d_H(\bof{y},\bof{y}^{\prime})$ is minimal among all strings $\bof{r}$ with $f(\bof{r})=f(\bof{x})$. 
 Then, for any $\kappa>0$,  
\begin{eqnarray} 
\nonumber \Prob[{\bof{x}\neq \bof{y}^{\prime}}]\leq 
e^{-2\kappa ^2\cdot n}+
2^{n\cdot h(\delta+\kappa)-m} \ ,
\end{eqnarray}
 where $h(p)=-p\cdot \log_2 p - (1-p)\log_2 (1-p)$ is the binary entropy function.
\end{theorem}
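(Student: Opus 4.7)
The plan is to decompose the failure event $\bof{x}\neq \bof{y}^{\prime}$ into two sources of error and bound each separately. Decoding fails precisely when there exists some string $\bof{r}\neq \bof{x}$ with $f(\bof{r})=f(\bof{x})$ and $d_H(\bof{y},\bof{r})\leq d_H(\bof{y},\bof{x})$. I would split on whether the true Hamming distance $d_H(\bof{x},\bof{y})$ exceeds $(\delta+\kappa)n$ or not, since outside this typical set the error is controlled by concentration of the BSC, while inside it the error is controlled by two-universality.

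First I would bound the \emph{atypical} event. Since $\bof{y}$ is obtained from $\bof{x}$ over a binary symmetric channel with crossover probability $\delta$, $d_H(\bof{x},\bof{y})$ is a sum of $n$ independent Bernoulli$(\delta)$ random variables. Hoeffding's inequality gives
\begin{equation}
\nonumber
\Prob\bigl[d_H(\bof{x},\bof{y}) > (\delta+\kappa)n\bigr] \leq e^{-2\kappa^2 n}\ ,
\end{equation}
which accounts for the first term in the claimed bound.

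Next I would handle the \emph{typical} case, conditioning on $d_H(\bof{x},\bof{y})\leq (\delta+\kappa)n$. Here, any candidate $\bof{r}$ that could beat $\bof{x}$ in the decoder must satisfy $d_H(\bof{y},\bof{r})\leq (\delta+\kappa)n$, so $\bof{r}$ lies in a Hamming ball of radius $(\delta+\kappa)n$ around $\bof{y}$. A standard volume estimate bounds the number of such strings by $2^{n h(\delta+\kappa)}$ (assuming $\delta+\kappa\leq 1/2$, which is the regime of interest). By two-universality of $\mathcal{F}$, for each fixed $\bof{r}\neq \bof{x}$ we have $\Prob_{f}[f(\bof{r})=f(\bof{x})]\leq 2^{-m}$, where the probability is over the random choice of $f$. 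A union bound over the at most $2^{n h(\delta+\kappa)}$ candidates gives that the probability some bad $\bof{r}$ collides with $\bof{x}$ under $f$ is at most $2^{n h(\delta+\kappa)-m}$.

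Adding the two contributions yields the claimed bound $e^{-2\kappa^2 n}+2^{n h(\delta+\kappa)-m}$. The only nontrivial step is the volume bound for the Hamming ball, which follows from standard binomial tail estimates (e.g., $\sum_{i\leq \alpha n}\binom{n}{i}\leq 2^{n h(\alpha)}$ for $\alpha\leq 1/2$); everything else is a straightforward application of Hoeffding's inequality, two-universality, and the union bound. I do not foresee a genuine obstacle, since the sharpening over~\cite{brassardsalvail} comes only from using Hoeffding rather than a weaker Chernoff-type bound for the typical-distance event.
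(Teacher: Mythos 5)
Your proof is correct and follows essentially the same approach as the paper's: split on whether $d_H(\bof{x},\bof{y})$ exceeds $(\delta+\kappa)n$, bound the atypical event by a Hoeffding/Chernoff bound, and bound the typical event by a union bound over the Hamming ball (with volume $\leq 2^{nh(\delta+\kappa)}$) using two-universality. Your version is in fact slightly more careful in centering the Hamming ball at $\bof{y}$ rather than at $\bof{x}$, which is the right center given the decoding rule, but the count is the same.
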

\begin{proof}
$\bof{x}\neq \bof{y}^{\prime}$ if either $d_H(\bof{x},\bof{y})$ is large or if $f(\bof{x})=f(\bof{y}^{\prime})$. 
The probability that the strings $\bof{x}$ and $\bof{y}$ differ at more than $n(\delta+\kappa)$ positions is bounded by 
\begin{eqnarray}
\nonumber \Prob[d_H(\bof{x},\bof{y})]\geq n\cdot (\delta+\kappa)]\leq e^{-2\kappa^2\cdot n}\ .
\end{eqnarray}
The probability a $\bof{y}^{\prime}\neq \bof{x}$ with small $d_H(\bof{x},\bof{y}^{\prime})$ is mapped to the same value by $f$ is
\begin{eqnarray}
\nonumber \Prob[f(\bof{x})=f(\bof{y}^{\prime}),d_H(\bof{x},\bof{y}^{\prime})\leq n(\delta+\kappa) ] &\leq & 2^{-m}\cdot \sum_{i=0}^{ n(\delta+\kappa) }\binom{n}{i} \\
\nonumber &\leq & 2^{-m}2^{n\cdot h(\delta+\kappa)}\ .
\end{eqnarray}
The theorem follows by the union bound. 
\end{proof}

\begin{lemma}\label{lemma:qirworks}
The protocol is $\epsilon$-correct on input $P_{XY}^{\otimes n}$ such that $P(X\neq Y)\leq \delta$ where, for any $\kappa>0$,  
\begin{eqnarray}
\nonumber \epsilon &=& e^{-2\kappa^2\cdot n}+
2^{n\cdot h(\delta+\kappa)-m} \ . 
\end{eqnarray}
and $0$-robust on all inputs. 
\end{lemma}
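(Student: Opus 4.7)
The statement essentially reduces to an application of Theorem~\ref{th:ir}, so the plan is to check that the hypotheses of that theorem are met (or almost met) by Protocol~\ref{prot:qir}, and to verify the two parts (correctness and robustness) separately.

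For robustness, I would simply observe that the protocol has no abort step: Alice always chooses a hash function and sends $(f, f(\bof{x}))$, and Bob always picks some $\bof{y}^{\prime}$ achieving the minimum Hamming distance among preimages of $f(\bof{x})$ (breaking ties at random). Hence the protocol aborts with probability $0$ on every input, giving $0$-robustness trivially.

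For correctness, since Alice sets $\bof{x}^{\prime}=\bof{x}$, the event $\bof{x}^{\prime}\neq \bof{y}^{\prime}$ is just $\bof{x}\neq \bof{y}^{\prime}$. This is precisely the event bounded by Theorem~\ref{th:ir}. The only difference from the setting of that theorem is that in our case $\bof{y}$ is not necessarily the output of a binary symmetric channel with parameter $\delta$ applied to $\bof{x}$, but rather $(\bof{x},\bof{y})\sim P_{XY}^{\otimes n}$ with $P(X\neq Y)\leq \delta$. However, inspecting the proof of Theorem~\ref{th:ir}, the only place this hypothesis is used is to bound the probability that the Hamming distance $d_H(\bof{x},\bof{y})$ exceeds $n(\delta+\kappa)$; and this bound follows from Hoeffding's inequality on the i.i.d.\ sum $\sum_i \mathds{1}[X_i\neq Y_i]$ of Bernoulli variables with mean at most $\delta$, yielding $e^{-2\kappa^2 n}$ regardless of whether the joint distribution comes from a BSC or a general $P_{XY}$. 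The remaining piece of the bound, $2^{n h(\delta+\kappa)-m}$, is a property of the two-universal hash family and a volume count of Hamming balls, and is independent of the source distribution. Summing the two contributions by the union bound gives the claimed $\epsilon$.

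I do not anticipate a real obstacle here: the only nontrivial observation is that Theorem~\ref{th:ir} as stated assumes a BSC model, whereas the lemma is phrased for the i.i.d.\ source with $P(X\neq Y)\leq \delta$, so the slight generalization needs to be pointed out explicitly. Once this is done, the result is immediate from Theorem~\ref{th:ir} for correctness and from the absence of an abort step for robustness.
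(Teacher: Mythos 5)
Your proof is correct and follows essentially the same route as the paper: correctness is reduced to Theorem~\ref{th:ir}, and $0$-robustness follows because the protocol has no abort step (a suitable $\bof{y}^{\prime}$ always exists). You are in fact slightly more careful than the paper, which cites Theorem~\ref{th:ir} without remarking that it is stated for a binary symmetric channel rather than for an i.i.d.\ source with $P(X\neq Y)\leq\delta$; your observation that the only place the BSC hypothesis enters is the Hoeffding/Chernoff bound on $d_H(\bof{x},\bof{y})$, which holds just as well for i.i.d.\ error indicators with mean at most $\delta$, correctly closes that small gap.
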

\begin{proof}
Correctness follows directly from Theorem~\ref{th:ir}. Robustness follows from the fact that there always exists a $\bof{y}^{\prime}$ such that $f(\bof{y}^{\prime})=f(\bof{x})$.  
\end{proof}
For any $\kappa>0$ and $m>n\cdot h(\delta+\kappa)$, this value is $\in O(2^{-n})$.

When some information about the raw key is released --- such as, for example, when Alice and Bob do information reconciliation --- the min-entropy can at most be reduced 
by the number of bits com\-mu\-ni\-ca\-ted, see~\cite{rennerphd}. 
\begin{theorem}[Chain rule~\cite{rennerphd}]\label{th:chainrule}
Let $\rho_{XEC}$ be classical on $C$. Then 
\begin{eqnarray}
\nonumber H_{\mathrm{min}}(X|E,C)_{\rho}\geq H_{\mathrm{min}}(X|E)_{\rho}- H_{\mathrm{max}}(C)\geq H_{\mathrm{min}}(X|E)_{\rho}-m\ ,
\end{eqnarray}
where $m=\log_2 |C|$ is the number of bits of $C$. 
\end{theorem}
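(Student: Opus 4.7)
The plan is to reduce the inequality to an equivalent statement about the guessing probability via Theorem~\ref{th:krs}, and then produce that inequality by a direct POVM-construction argument. Explicitly, since $X$ and $C$ are classical, Theorem~\ref{th:krs} gives
\begin{equation}
\nonumber
H_{\mathrm{min}}(X|E,C)_{\rho}-H_{\mathrm{min}}(X|E)_{\rho}=\log_2\frac{P_{\mathrm{guess}}(X|E)_{\rho}}{P_{\mathrm{guess}}(X|E,C)_{\rho}},
\end{equation}
so the first inequality is equivalent to $P_{\mathrm{guess}}(X|E,C)_{\rho}\leq |C|\cdot P_{\mathrm{guess}}(X|E)_{\rho}$, and the second follows once we know $H_{\mathrm{max}}(C)\leq \log_2|C|=m$, which holds by definition of the max-entropy on a classical register of alphabet size $|C|$.

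For the first inequality I would proceed as follows. Write the cq-state as $\rho_{XEC}=\sum_{x,c}P(x,c)\ket{x}\bra{x}\otimes\rho_E^{x,c}\otimes\ket{c}\bra{c}$, so that $P(x)\rho_E^x=\sum_c P(x,c)\rho_E^{x,c}$ is the marginal after tracing out $C$. Since $C$ is classical, any POVM $\{E_{EC}^x\}_x$ on $E\otimes C$ may without loss of generality be taken block-diagonal in the $C$-basis, namely $E_{EC}^x=\sum_c E_c^x\otimes\ket{c}\bra{c}$, where for each fixed $c$ the family $\{E_c^x\}_x$ is a POVM on $\mathcal{H}_E$. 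Taking the POVM that achieves $P_{\mathrm{guess}}(X|E,C)_{\rho}$, one obtains
\begin{equation}
\nonumber
P_{\mathrm{guess}}(X|E,C)_{\rho}=\sum_{x,c}P(x,c)\,\tr\!\bigl(E_c^x\,\rho_E^{x,c}\bigr).
\end{equation}

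The key step is to build from this a strategy that guesses $X$ using only $E$. Define $\tilde{E}_E^x:=\frac{1}{|C|}\sum_c E_c^x$; this is a valid POVM on $\mathcal{H}_E$ since each $E_c^x\succeq 0$ and $\sum_x\tilde{E}_E^x=\frac{1}{|C|}\sum_c\mathds{1}=\mathds{1}$. Then by Theorem~\ref{th:krs},
\begin{eqnarray}
\nonumber
P_{\mathrm{guess}}(X|E)_{\rho}
&\geq & \sum_x\tr\!\bigl(\tilde{E}_E^x\,P(x)\rho_E^x\bigr)
\;=\;\tfrac{1}{|C|}\sum_{x,c,c'}P(x,c')\,\tr\!\bigl(E_c^x\,\rho_E^{x,c'}\bigr)\\
\nonumber
&\geq & \tfrac{1}{|C|}\sum_{x,c}P(x,c)\,\tr\!\bigl(E_c^x\,\rho_E^{x,c}\bigr)
\;=\;\tfrac{1}{|C|}\,P_{\mathrm{guess}}(X|E,C)_{\rho},
\end{eqnarray}
where the second inequality keeps only the diagonal $c'=c$ terms and uses that every dropped term $P(x,c')\tr(E_c^x\,\rho_E^{x,c'})$ is nonnegative (since $E_c^x\succeq 0$ and $\rho_E^{x,c'}\succeq 0$). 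Taking $-\log_2$ and combining with $H_{\mathrm{max}}(C)\leq m$ yields the chain rule.

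The proof is essentially routine once the operational interpretation in Theorem~\ref{th:krs} is invoked; the only place requiring a small observation is the positivity argument that lets us discard the cross terms $c'\neq c$ after averaging, which is what prevents the need to know anything about the correlation between $\rho_E^{x,c}$ and $\rho_E^{x,c'}$. Everything else is just bookkeeping.
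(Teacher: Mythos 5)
The paper does not prove Theorem~\ref{th:chainrule}; it is stated with a citation to~\cite{rennerphd} and used as a black box, so there is no in-paper proof to compare against. Your proof is essentially correct and is a clean, self-contained derivation of the bound the paper actually uses, namely $H_{\mathrm{min}}(X|E,C)_\rho \geq H_{\mathrm{min}}(X|E)_\rho - m$. The reduction via Theorem~\ref{th:krs} to $P_{\mathrm{guess}}(X|E,C)\leq |C|\,P_{\mathrm{guess}}(X|E)$, the block-diagonalization of the optimal POVM in the $C$-basis (valid because $\rho_{EC}^x$ is block-diagonal, so pinching does not change the trace), the construction of the averaged POVM $\tilde{E}^x_E=\frac{1}{|C|}\sum_c E^x_c$, and the discarding of the nonnegative cross terms are all sound.

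One small inaccuracy: you assert that the \emph{first} inequality (the one with $H_{\mathrm{max}}(C)$) is equivalent to $P_{\mathrm{guess}}(X|E,C)\leq|C|\cdot P_{\mathrm{guess}}(X|E)$, but it is actually equivalent to $P_{\mathrm{guess}}(X|E,C)\leq 2^{H_{\mathrm{max}}(C)}\cdot P_{\mathrm{guess}}(X|E)$, which is tighter since $H_{\mathrm{max}}(C)=\log_2|\mathrm{supp}(P_C)|\leq\log_2|C|$. Your argument as written therefore establishes only the end-to-end (coarser) bound $H_{\mathrm{min}}(X|E,C)\geq H_{\mathrm{min}}(X|E)-m$, skipping the intermediate statement rather than deriving it. This is easily repaired: replace $\tilde{E}^x_E=\frac{1}{|C|}\sum_c E^x_c$ by $\tilde{E}^x_E=\frac{1}{|\mathrm{supp}(P_C)|}\sum_{c\in\mathrm{supp}(P_C)} E^x_c$; this is still a valid POVM, and since $P(x,c')=0$ for $c'\notin\mathrm{supp}(P_C)$ the same chain of inequalities gives $P_{\mathrm{guess}}(X|E,C)\leq|\mathrm{supp}(P_C)|\cdot P_{\mathrm{guess}}(X|E)=2^{H_{\mathrm{max}}(C)}P_{\mathrm{guess}}(X|E)$, i.e., the sharper inequality as stated.
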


\subsection{Privacy amplification}\label{subsec:qpa}

In order to create a highly secure key from a partially secure string, Alice and Bob will do \emph{privacy amplification} i.e., apply a two-universal hash function to 
their raw keys. 
The distance from uniform of the final key string is given by the following theorem. 
\begin{theorem}[Privacy amplification~\cite{rennerkoenig,rennerphd}]\label{th:qpa}
Let $\rho_{XE}$ be classical on $\mathcal{H}_X$ and let $\mathcal{F}$ be a family of two-universal hash functions from $|\mathcal{X}|$ to $\{0,1\}^s$. Then
\begin{eqnarray}
\nonumber d(\rho_{F(X)EF}|EF)&\leq & \sqrt{\tr\rho_{XE}}\cdot 2^{-\frac{1}{2}(H_{\mathrm{min}(\rho_{XE}|E)}-s)}
 \leq 2^{-\frac{1}{2}(H_{\mathrm{min}(\rho_{XE}|E)}-s)}\ .
\end{eqnarray}
\end{theorem}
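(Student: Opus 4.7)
The plan is to adapt the classical leftover hash lemma to handle quantum side information, following the Renner--König strategy of bounding trace distance by a conditional Hilbert--Schmidt norm and then exploiting two-universality.

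First I would reduce the trace distance to an $L_2$-type quantity. For any Hermitian operator $A$ on a space of dimension $d$, one has the Cauchy--Schwarz bound $\|A\|_1 \le \sqrt{d}\,\|A\|_2$. To keep a meaningful bound when $E$ is infinite-dimensional (or simply large), I would instead use the \emph{conditional} $L_2$-norm with respect to a reference state $\sigma_E$, defined by $\|\tau_{YE}\|_{2,\sigma_E}^2 := \tr\bigl((\mathds{1}_Y \otimes \sigma_E^{-1/4})\,\tau_{YE}\,(\mathds{1}_Y \otimes \sigma_E^{-1/4})\bigr)^2$, and apply the Cauchy--Schwarz-type inequality $\|\tau_{YE}\|_1 \le \sqrt{|Y|}\,\|\tau_{YE}\|_{2,\sigma_E}$. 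Taking $\sigma_E = \rho_E$ and using Jensen's inequality to move the square root outside the expectation over $F$ then reduces the problem to computing
\begin{equation*}
\mathbb{E}_F\bigl\|\rho_{F(X)E|F} - \tau_{F(X)} \otimes \rho_E\bigr\|_{2,\rho_E}^2 .
\end{equation*}

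Next I would expand this $L_2$ quantity. The mixed cross terms vanish because $\tr_Y \bigl(\tau_Y \cdot (\cdot)\bigr)$ picks out the uniform part, leaving
\begin{equation*}
\mathbb{E}_F \tr\bigl(\rho_{F(X)E}^2 (\mathds{1}_Y \otimes \rho_E^{-1})\bigr) - 2^{-s}\, \tr\bigl(\rho_E^2 \rho_E^{-1}\bigr).
\end{equation*}
Writing $\rho_{F(X)E} = \sum_{x} P_X(x)\, |F(x)\rangle\langle F(x)| \otimes \rho_E^x$, the two-universality of $\mathcal{F}$ contributes $\mathbb{E}_F[\mathds{1}_{F(x)=F(x')}] \le 2^{-s}$ for $x \ne x'$, causing the ``off-diagonal'' contributions to be absorbed by the subtracted uniform term. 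What survives is the ``diagonal'' piece $2^{-s}\sum_x P_X(x)^2 \tr(\rho_E^x \rho_E^{-1} \rho_E^x)$, i.e., $2^{-s} \cdot 2^{-H_2(\rho_{XE}|E)_\rho}$ (the conditional collision entropy with respect to $\rho_E$).

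The final step is to replace $H_2$ by $H_{\min}$. Using the operator inequality $\rho_{XE} \le 2^{-H_{\min}(X|E)_\rho}\, \mathds{1}_X \otimes \rho_E$ that follows from the definition of conditional min-entropy, one gets $\tr(\rho_{XE}^2 \rho_E^{-1}) \le 2^{-H_{\min}(X|E)_\rho} \tr \rho_{XE}$, which upon taking the square root and combining with the $\sqrt{|Y|} = 2^{s/2}$ prefactor yields exactly the claimed bound $\sqrt{\tr \rho_{XE}} \cdot 2^{-(H_{\min}-s)/2}$. The second inequality is then immediate since $\tr \rho_{XE} \le 1$. The main obstacle I expect is the bookkeeping in the conditional $L_2$ norm: getting Jensen's inequality, two-universality, and the $\rho_E^{-1}$ weighting to align so that dimension factors cancel cleanly and the bound comes out with $H_{\min}$ (rather than $H_2$ or $\log|\mathcal{X}|$) in the exponent.
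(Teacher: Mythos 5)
The paper does not prove this theorem: it is stated as an imported result and cited to \cite{rennerkoenig,rennerphd}, so there is no in-paper proof to compare against. Your sketch reconstructs the standard Renner--K\"onig argument from those references (conditional $L_2$ norm weighted by $\rho_E^{-1/4}$, Cauchy--Schwarz with the $2^{s/2}$ dimension factor, Jensen over $F$, expansion using two-universality, then upgrade from collision entropy to min-entropy via $\rho_{XE}\preceq 2^{-H_{\min}(X|E)}\,\mathds{1}_X\otimes\rho_E$), so the route is the correct and intended one.

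One local slip: after averaging over $F$, the surviving ``diagonal'' term is $\sum_x P_X(x)^2\,\tr(\rho_E^x\rho_E^{-1}\rho_E^x)$ --- it does \emph{not} carry a $2^{-s}$ prefactor. The off-diagonal sum, bounded by $2^{-s}\sum_{x\neq x'}(\cdots)$, is what gets cancelled against the subtracted uniform term $2^{-s}\tr\rho_E$ (leaving a harmless $(1-2^{-s})\leq 1$ on the diagonal). If the $2^{-s}$ really multiplied the diagonal piece, then combining with the $2^{s/2}$ from Cauchy--Schwarz would give $2^{-H_2/2}$ with no $s$ in the exponent, contradicting both the theorem and your own final line. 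Since your conclusion is stated correctly, this is a bookkeeping typo rather than a structural error, but it is exactly the kind of cancellation you flag at the end as the delicate part, so it is worth getting straight: the $2^{s/2}$ survives to the end, and two-universality buys you the replacement of the full alphabet size $|\mathcal{X}|$ by the collision entropy in the exponent, not an extra $2^{-s}$.
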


\subsection{Key distribution}\label{subsec:qkd}

We can now put everything together to obtain a key-distribution scheme. A key-distribution protocol should be \emph{secure}. This means that it should output the same 
key to Alice and Bob (\emph{correctness}) and Eve should not know anything about the key (\emph{secrecy}). Furthermore, the protocol should output a key when the adversary 
is passive, i.e., it should be \emph{robust}.

\begin{protocol}[Key distribution]\label{prot:qkey}\ 
\begin{enumerate}
\item Alice and Bob receive $P_{XY|UV}^{\otimes n}$
\item They apply parameter estimation using Protocol~\ref{prot:qpe}. 
\item They do information reconciliation using Protocol~\ref{prot:qir}. 
\item Privacy amplification: Alice chooses a function $f:\{0,1\}^n\rightarrow \{0,1\}^s\in \mathcal{F}$ from a two-universal set and sends $f$ to Bob. Alice outputs $f(\bof{x})$ 
and Bob $f(\bof{y}^{\prime})$. 
\end{enumerate}
\end{protocol}

\begin{lemma}\label{lemma:qkdsecure}
Protocol~\ref{prot:qkey} is $\epsilon$-secret with $\epsilon\in O(2^{-n})$ and $\epsilon^{\prime}$-correct with $\epsilon^{\prime}\in O(2^{-n})$ for $m>n\cdot h(\delta)$ and $s=q\cdot n< 
\log_2P_{\mathrm{guess}}-m/n$. It is $\epsilon^{\prime\prime}$-robust on $\left( \mathcal{P}^{-\eta}\right)^{\otimes n}$ with $\epsilon^{\prime\prime}\in O(2^{-n})$. 
\end{lemma}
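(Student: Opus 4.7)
The plan is to prove the three claims (secrecy, correctness, robustness) separately and then invoke the triangle inequality (\ref{eq:triangle}) together with the decomposition of $\epsilon$ into secrecy plus correctness contributions given just before Definition~\ref{def:distance_from_uniform}. In each case the strategy is to bound by a union bound the probability that any of the three sub-protocols (parameter estimation, information reconciliation, privacy amplification) fails, and to show that each failure probability is exponentially small in $n$ under the stated regime.

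For \textbf{correctness}, I would observe that $f(\bof{x})\neq f(\bof{y}')$ can only occur if $\bof{x}\neq \bof{y}'$ (since $f$ is a deterministic function). Hence correctness is inherited from the information-reconciliation step: Lemma~\ref{lemma:qirworks} gives a correctness error of $e^{-2\kappa^2 n}+2^{n\,h(\delta+\kappa)-m}$ for any $\kappa>0$. Since $m>n\cdot h(\delta)$ strictly, one can pick $\kappa$ so small that $m>n\cdot h(\delta+\kappa)$; both terms are then $O(2^{-n})$.

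For \textbf{secrecy}, I would decompose Eve's view into the public communication during parameter estimation, the syndrome $f(\bof{x})$ from information reconciliation, and her final measurement outcome $Z$. The argument proceeds in three steps. First, Lemma~\ref{lemma:qfilters} (applied with $\eta'=|\mathcal{U}||\mathcal{V}|\cdot \eta \cdot\sum_i|\lambda_i|$) guarantees that, except with probability $O(2^{-n})$, the input system passes parameter estimation only if the individual marginals satisfy $\tilde{P}_{\mathrm{guess}}\le P_{\mathrm{guess}}+\eta'$. Conditioned on acceptance, the product theorem for the guessing probability (Theorem~\ref{th:guessprod}) applied $n$ times, combined with Theorem~\ref{th:krs}, yields
\begin{equation}
\nonumber H_{\mathrm{min}}(\bof{X}\mid Z(W),Q)\ \ge\ -n\log_2(P_{\mathrm{guess}}+\eta')\ .
\end{equation}
Second, the $m$ bits of reconciliation information are accounted for by the chain rule (Theorem~\ref{th:chainrule}), which drops the min-entropy by at most $m$. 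Third, the privacy-amplification theorem (Theorem~\ref{th:qpa}) converts the resulting min-entropy into a distance-from-uniform of at most $2^{-\frac12(-n\log_2(P_{\mathrm{guess}}+\eta')-m-s)}$. Under the stated condition $s<-n\log_2 P_{\mathrm{guess}}-m$ (and choosing $\eta$ small enough so that $\eta'$ does not eat up the gap), this distance is $O(2^{-n})$. An application of Lemma~\ref{def:dist_advantage} then identifies this distance with $\delta(\mathcal{S}_{\mathrm{int}},\mathcal{S}_{\mathrm{ideal}})$.

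For \textbf{robustness} on $(\mathcal{P}^{-\eta})^{\otimes n}$, Lemma~\ref{lemma:peqrobust} bounds the abort probability of parameter estimation by $O(2^{-n})$, information reconciliation never aborts by Lemma~\ref{lemma:qirworks}, and privacy amplification never aborts. A union bound completes the argument. The main subtlety I expect is bookkeeping in the secrecy part: one must carefully handle the event that parameter estimation accepts a ``bad'' state, absorbing its probability into $\epsilon$ via Lemma~\ref{lemma:event}, and verify that the slack $\eta'$ introduced by the finite-statistics estimate can be made small enough to preserve the inequality $s<-n\log_2 P_{\mathrm{guess}}-m$ while still having $\eta>0$ (so that Lemma~\ref{lemma:qfilters}'s bound applies). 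Everything else is a routine triangle-inequality assembly using the framework of Section~\ref{sec:model}.
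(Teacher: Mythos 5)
Your plan is correct and follows essentially the same route as the paper, which proves this lemma in a single sentence by citing Lemma~\ref{lemma:qfilters}, Lemma~\ref{lemma:qirworks}, Theorem~\ref{th:qpa}, Theorem~\ref{th:chainrule}, and Lemma~\ref{lemma:peqrobust}; you simply spell out the assembly (including the use of Theorem~\ref{th:guessprod}/Theorem~\ref{th:krs} to go from the per-system filter guarantee to a min-entropy bound, the $\kappa$-slack for IR, and the $\eta'$-slack for parameter estimation), all of which the paper leaves implicit. Your reading of the stated threshold as $s<-n\log_2 P_{\mathrm{guess}}-m$ is also the intended one; the inequality as printed in the lemma has a sign/normalization typo.
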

\begin{proof}
This is a direct consequence of the fact that each step in the protocol is secure (Lemma~\ref{lemma:qfilters}, Lemma~\ref{lemma:qirworks} and Theorem~\ref{th:qpa}), 
taking into account Theorem~\ref{th:chainrule}. Robustness follows from the robustness of the parameter-estimation protocol, Lemma~\ref{lemma:peqrobust}. 
\end{proof}

The secret key rate is the length of the key $S$ that the protocol can output and still remain secure. We obtain the following.  
\begin{lemma}
The scheme reaches a key rate $q$ of 
\begin{eqnarray}
\nonumber q &=&- \log_2 P_{\mathrm{guess}}-h(\delta)
\end{eqnarray}
\end{lemma}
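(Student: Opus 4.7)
The plan is to combine the three-stage security analysis developed above and read off the asymptotic rate from the surviving min-entropy after each step. First, after parameter estimation, the accepted marginal system is (up to an $O(2^{-n})$ filter failure) one whose single-system guessing probability is at most $P_{\mathrm{guess}} + \eta'$, where $\eta' = |\mathcal{U}||\mathcal{V}| \cdot \eta \cdot \sum_i |\lambda_i|$ can be driven to zero. Applying Theorem~\ref{th:krs} together with the product lemma for the guessing probability (Theorem~\ref{th:guessprod}, via Lemma~\ref{lemma:qdualproduct}), the joint min-entropy of Alice's raw key $\bof{X}$ given Eve's quantum side information $E$ satisfies
\[
H_{\min}(\bof{X}|E) \geq -n\log_2(P_{\mathrm{guess}} + \eta').
\]

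Next, I would account for the $m$ bits sent during information reconciliation. By the chain rule (Theorem~\ref{th:chainrule}),
\[
H_{\min}(\bof{X}|E,C) \geq H_{\min}(\bof{X}|E) - m.
\]
Lemma~\ref{lemma:qirworks} (via Theorem~\ref{th:ir}) shows that correctness holds as soon as $m \geq n \cdot h(\delta + \kappa)$ for some small $\kappa > 0$, so asymptotically one may take $m = n(h(\delta) + o(1))$.

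Finally, privacy amplification (Theorem~\ref{th:qpa}) guarantees that the distance from uniform of a hashed key of length $s$ is at most $2^{-(H_{\min}(\bof{X}|E,C) - s)/2}$. To make this exponentially small in $n$ it suffices that $s \leq H_{\min}(\bof{X}|E,C) - \Omega(n\zeta)$ for any fixed slack $\zeta > 0$; combined with the two displays above this yields
\[
s \leq -n\log_2(P_{\mathrm{guess}} + \eta') - n(h(\delta) + o(1)).
\]
Dividing by $n$ and letting $\eta', \kappa, \zeta \to 0$ -- which Lemma~\ref{lemma:qkdsecure} permits while keeping the overall security, correctness and robustness errors in $O(2^{-n})$ -- gives the claimed rate $q = -\log_2 P_{\mathrm{guess}} - h(\delta)$.

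The main obstacle is purely bookkeeping rather than a new conceptual step: one must verify that the various slack parameters ($\eta$ and $\eta'$ from parameter estimation, $\kappa$ from information reconciliation, and the privacy-amplification slack $\zeta$) can simultaneously be taken small while the exponentially small $\epsilon$-bounds in Lemmas~\ref{lemma:qfilters}, \ref{lemma:peqrobust}, \ref{lemma:qirworks} and Theorem~\ref{th:qpa} remain $O(2^{-n})$. Lemma~\ref{lemma:qkdsecure} already packages this bookkeeping, so the present lemma just extracts the leading-order expression for $s/n$ from the constraint $s < -n\log_2 P_{\mathrm{guess}} - m$ together with $m \approx n \, h(\delta)$.
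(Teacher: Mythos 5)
Your proposal is correct and matches the paper's (implicit) argument: the paper presents this lemma as a direct consequence of Lemma~\ref{lemma:qkdsecure}, read off by pushing $m$ down to $n\,h(\delta)$ and $s$ up to $-n\log_2 P_{\mathrm{guess}}-m$, and your proof simply unpacks that chain — parameter estimation bounds the single-system guessing probability, Theorem~\ref{th:krs} and the product lemma give the joint min-entropy, the chain rule subtracts the $m$ reconciliation bits, and privacy amplification extracts $s$ bits — with the slack parameters $\eta'$, $\kappa$, $\zeta$ sent to zero in the asymptotic limit. The bookkeeping remark at the end is well placed; that is precisely what Lemma~\ref{lemma:qkdsecure} already handles.
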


\begin{lemma}
The scheme reaches a positive key rate $q$ whenever 
\begin{eqnarray}
\nonumber -\log_2 P_{\mathrm{guess}}-h(\delta)&>& 0
\end{eqnarray}
\end{lemma}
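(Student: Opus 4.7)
The statement is an immediate corollary of the preceding lemma (the one giving the rate formula $q = -\log_2 P_{\mathrm{guess}} - h(\delta)$), so the plan is essentially to unfold that result and observe that the condition ``$q > 0$'' is exactly the inequality in the statement.

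My plan is as follows. First, I invoke the previous lemma to obtain that, by running Protocol~\ref{prot:qkey} with block size $n$, one can produce a key of length $s = qn$ bits that is $\epsilon$-secret, $\epsilon'$-correct and $\epsilon''$-robust on $(\mathcal{P}^{-\eta})^{\otimes n}$, for any $q$ strictly less than $-\log_2 P_{\mathrm{guess}} - h(\delta)$. Here the parameters of Protocol~\ref{prot:qpe} and Protocol~\ref{prot:qir} must be chosen so that all error terms are $O(2^{-n})$: concretely, one picks $0<k,p<1$ and $\eta>0$ as constants (so that Lemma~\ref{lemma:qfilters} and Lemma~\ref{lemma:peqrobust} give exponentially small failure probabilities) and sets $\kappa>0$ and $m = n\cdot h(\delta+\kappa) + \omega(\log n)$ (say $m = n(h(\delta)+\kappa')$ for a small constant $\kappa'>0$) so that Lemma~\ref{lemma:qirworks} gives $\epsilon \in O(2^{-n})$ for information reconciliation.

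Second, I argue that whenever $-\log_2 P_{\mathrm{guess}} - h(\delta) > 0$ one can actually choose a strictly positive $q$ consistent with these constraints. Pick any constant $q_0$ with $0 < q_0 < -\log_2 P_{\mathrm{guess}} - h(\delta)$. The slack $-\log_2 P_{\mathrm{guess}} - h(\delta) - q_0 > 0$ can be distributed among the ``losses'' caused by parameter estimation (the $\eta'$-term in the filtering lemma, the chain-rule correction $m/n$ from information reconciliation, and the $\kappa$ used to tolerate statistical fluctuations), so that the rate-constraint $s/n < -\log_2 \tilde P_{\mathrm{guess}} - m/n$ from Theorem~\ref{th:qpa} combined with Theorem~\ref{th:chainrule} is satisfied by $s = q_0 n$. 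All error terms remain $O(2^{-n})$, so the protocol is secure by Lemma~\ref{lemma:qkdsecure} and outputs a key of positive rate $q_0$.

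There is no real obstacle: the statement is essentially a restatement of the previous lemma, whose content is ``the rate of the scheme is $-\log_2 P_{\mathrm{guess}} - h(\delta)$'', turned into a sufficient condition for positivity. The only mildly non-trivial point is bookkeeping the constants $k, p, \eta, \kappa$ so that the rate-loss from the finite-block corrections can be made smaller than the available gap $-\log_2 P_{\mathrm{guess}} - h(\delta)$, and this is routine.
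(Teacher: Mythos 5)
Your proposal is correct and takes essentially the same approach as the paper, which in fact offers no explicit proof for this lemma at all, treating it as an immediate consequence of the preceding key-rate formula and Lemma~\ref{lemma:qkdsecure}. Your additional bookkeeping of the constants $k$, $p$, $\eta$, $\kappa$ and the slack needed to absorb the finite-block correction terms is sound and merely spells out what the paper leaves implicit.
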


\section{Removing the Assumption of Independence}\label{sec:notindependent}

We have seen that Alice and Bob can do key agreement (i.e., they either agree on a secret key or abort) if they share i.i.d. distributions. We now want to remove the 
requirement of independence. 

A special case is the one where Alice and Bob have two inputs and two outputs, i.e., their system violated the CHSH inequality~\cite{chsh}. In this case, there exists a 
(classical) map which they can apply to their inputs and outputs such that the system afterwards actually \emph{is} i.i.d.  more precisely a convex combination of i.i.d. 
distributions~\cite{mag,masanesv4}. The systems obtained this way, furthermore still violate the CHSH inequality by the same amount.\footnote{A similar map also exists for 
the generalization of the CHSH inequality, the Braunstein-Caves inequalities~\cite{braunsteincaves2}.}

In general, we do not know of such a map to transform arbitrary systems into product systems. Nevertheless, we will be able to relate the security of the key-distribution scheme 
on \emph{any} input to the security of the scheme on product inputs $P_{\bof{XY}|\bof{UV}}=P_{XY|UV}^{\otimes n}$ , for which we have already seen that it is secure, in 
Section~\ref{sec:qkd}. 
The reason is that we know that security is `permutation invariant' under the systems because each step of the protocol --- parameter estimation, information reconciliation 
and privacy amplification --- is permutation invariant\footnote{Otherwise permutation-invariance could be enforced by applying a random permutation on the systems at the beginning.}. 
The post-selection theorem allows us to relate security of permutation invariant states to the security of product states.

The post-selection theorem tells us that any permutation-invariant state can be obtained from the convex combination of i.i.d. (product) states by a measurement, and furthermore 
this measurement `works' sufficiently often. Therefore, if our key-distribution scheme is secure for product distributions, it is still `almost as secure' on a permutation 
invariant one.

Technically, the post-selection technique~\cite{postselection} gives a bound on the \emph{diamond norm} between two \emph{completely positive trace-preserving maps} (i.e., 
quantum channels) acting symmetrically on an $n$-party system. The diamond norm is directly related to the maximal probability of guessing whether one or the other map has 
been applied (on an input of choice), through the formula $p=1/2+1/4  \lVert \mathcal{E}-\mathcal{F} \rVert_\diamond$ (i.e., the distinguishing \emph{advantage} is then 
$1/4  \lVert \mathcal{E}-\mathcal{F} \rVert_\diamond$.) Therefore, it is especially useful in the context of cryptography, where usually a \emph{real} map is compared to an 
\emph{ideal} map --- such as one that creates a key that is secure by construction. While the diamond norm is defined as a maximization over all possible input states, the 
post-selection technique tells us that in the case of permutation invariant maps it is enough to consider them acting on a \emph{de Finetti state}, i.e., a convex combination 
of product states $\tau_{\mathcal{H}^n}=\int \sigma_{\mathcal{H}}^{\otimes n}\mu(\sigma_{\mathcal{H}})$, where $\mu$ is the measure induced by the Hilbert Schmidt metric. Let 
us now restate the main result of~\cite{postselection}. 
\begin{theorem}[Post-selection~\cite{postselection}]\label{th:postselection}
Consider a linear map from $\mathrm{End}(\mathcal{H}^{\otimes n})$ to $\mathrm{End}(\mathcal{H}^{\prime})$.\footnote{Note that in particular, $\Delta$ can be the difference between two 
completely positive trace-preserving maps $\mathcal{E}$ and $\mathcal{F}$.} 
If for any permutation $\pi$ there exists a CPTP map $\mathcal{K}_{\pi}$ such that $\Delta \circ \pi=\mathcal{K}_{\pi}\circ \Delta$, then 
\begin{eqnarray}
\nonumber \lVert \Delta \rVert_\diamond&\leq & g_{n,d} \lVert (\Delta \otimes \mathds{1}_{\mathcal{R}})\tau_{\mathcal{H}^n\mathcal{R}} \rVert_1\ ,
\end{eqnarray}
where $\mathds{1}_{\mathcal{R}}$ denotes the identity map on $\mathrm{End}(\mathcal{R})$ and $g_{n,d}=\binom{n+d^2-1}{n}\leq (n+1)^{d^2-1}$, where $d=\mathrm{dim}\mathcal{H}$. 
\end{theorem}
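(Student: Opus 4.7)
My plan is to unfold the diamond norm into a maximization over pure states, exploit the permutation covariance to restrict to a highly symmetric input, and then invoke the key representation-theoretic fact that the symmetric subspace is spanned by tensor powers of pure states.

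First, I would write
\begin{equation*}
\lVert \Delta \rVert_\diamond = \max_{|\psi\rangle \in \mathcal{H}^{\otimes n} \otimes \mathcal{R}} \lVert (\Delta \otimes \mathds{1}_\mathcal{R}) |\psi\rangle\langle\psi| \rVert_1,
\end{equation*}
where $\mathcal{R}$ is a reference system of dimension equal to $\dim(\mathcal{H}^{\otimes n})$. The covariance hypothesis $\Delta \circ \pi = \mathcal{K}_\pi \circ \Delta$ together with trace-norm monotonicity under CPTP maps $\mathcal{K}_\pi$ implies
\begin{equation*}
\lVert (\Delta \otimes \mathds{1}) \rho \rVert_1 \le \lVert (\Delta \otimes \mathds{1}) \overline{\rho} \rVert_1,
\end{equation*}
where $\overline{\rho} = \tfrac{1}{n!} \sum_\pi (\pi \otimes \mathds{1}) \rho (\pi^\dagger \otimes \mathds{1})$ is the symmetrization; by convexity of $\rho \mapsto \lVert(\Delta \otimes \mathds{1})\rho\rVert_1$ in the opposite direction this actually reduces to maximizing over permutation-invariant states on $\mathcal{H}^{\otimes n}\otimes\mathcal{R}$.

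Next, any such permutation-invariant state $\rho$ can be purified to a vector $|\psi\rangle$ lying in the symmetric subspace of $(\mathcal{H}\otimes\mathcal{H}')^{\otimes n} \otimes \mathcal{R}$ for an ancilla $\mathcal{H}'$ with $\dim \mathcal{H}' = d$. Here the critical ingredient is the Schur--Weyl representation of the symmetric subspace projector,
\begin{equation*}
P_{\mathrm{Sym}^n(\mathcal{H}\otimes\mathcal{H}')} = g_{n,d} \int |\theta\rangle\langle\theta|^{\otimes n}\,\mathrm{d}\mu(\theta),
\end{equation*}
with $g_{n,d} = \binom{n+d^2-1}{n}$ counting the dimension of $\mathrm{Sym}^n(\mathcal{H}\otimes\mathcal{H}')$. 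Since $|\psi\rangle\langle\psi| \le P_\mathrm{Sym}$ as operators, one obtains $|\psi\rangle\langle\psi| \le g_{n,d} \cdot \widetilde{\tau}$, where $\widetilde{\tau}$ is the pure-state de Finetti mixture on the doubled system; tracing out $\mathcal{H}'$ (which is part of the purifying side, not touched by $\Delta$) yields a domination of $\rho$ by $g_{n,d}\, \tau_{\mathcal{H}^n\mathcal{R}}$ where $\tau$ is now the desired de Finetti state on $\mathcal{H}^{\otimes n}$ (extended to include $\mathcal{R}$).

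Finally, one converts this Löwner-order domination into the claimed trace-norm inequality. The cleanest route is the post-selection / measurement interpretation: because $\rho$ arises as the outcome of a particular projective measurement on a purification of $\tau$ occurring with probability at least $1/g_{n,d}$, one can write
\begin{equation*}
(\Delta \otimes \mathds{1}_\mathcal{R})\rho = g_{n,d}\cdot \mathrm{Tr}_{\mathcal{R}'}\bigl[(\mathds{1} \otimes M)(\Delta \otimes \mathds{1}_{\mathcal{R}\mathcal{R}'})|\Phi\rangle\langle\Phi|\bigr]
\end{equation*}
for a suitable purification $|\Phi\rangle$ of $\tau$ and a suitable measurement operator $M$ satisfying $\mathrm{Tr}_{\mathcal{R}'}(M \cdot) = \mathds{1}_{\mathcal{H}^n\mathcal{R}}(\cdot)$ when integrated. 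Applying trace-norm monotonicity of the CPTP operations $\mathrm{Tr}_{\mathcal{R}'}$ and the measurement, together with $\lVert(\Delta\otimes\mathds{1}_{\mathcal{R}\mathcal{R}'})|\Phi\rangle\langle\Phi|\rVert_1 \le \lVert(\Delta\otimes\mathds{1}_\mathcal{R})\tau\rVert_1$ (since $\mathcal{R}'$ is again just a purifying register), gives the bound.

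The main obstacle is the last step: a naive triangle-inequality split of $g_{n,d}\tau = \rho + (g_{n,d}\tau - \rho)$ would only yield a useless bound, because $\Delta$ is only assumed linear (not completely positive), so operator inequalities do not transport directly to trace-norm inequalities. The post-selection / purification argument is precisely what circumvents this, by realizing $\rho$ as a CPTP image of (a purification of) $\tau$ rather than as a summand dominated by $\tau$.
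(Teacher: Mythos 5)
The paper does not prove Theorem~\ref{th:postselection}; it imports it verbatim from the reference~\cite{postselection} (Christandl, K\"onig and Renner, \emph{Postselection Technique for Quantum Channels}) and uses it as a black box in Section~\ref{sec:notindependent}. There is therefore no in-paper proof to compare against; what follows is an assessment of your sketch against the argument in that source.

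Your overall skeleton (unfold the diamond norm, reduce to permutation-invariant inputs, invoke the symmetric-subspace identity $P_{\mathrm{Sym}^n(\mathcal{H}\otimes\mathcal{H}')}=g_{n,d}\int\ket{\theta}\!\bra{\theta}^{\otimes n}\,d\mu(\theta)$, and convert the operator domination into a trace-norm bound via a measurement/post-selection realization) is indeed the Christandl--K\"onig--Renner route, and you are right that the last step is the delicate one, precisely because $\Delta$ is only linear, so L\"owner domination alone gives nothing. However, your step~2 has a genuine gap. From $\Delta\circ\pi=\mathcal{K}_\pi\circ\Delta$ with $\mathcal{K}_\pi$ CPTP one gets $\lVert(\Delta\otimes\mathds{1})\rho\rVert_1\le\lVert(\Delta\otimes\mathds{1})(\pi\rho\pi^\dagger)\rVert_1$ for each fixed $\pi$, but convexity of the trace norm then yields $\lVert(\Delta\otimes\mathds{1})\overline{\rho}\rVert_1\le\frac{1}{n!}\sum_\pi\lVert(\Delta\otimes\mathds{1})(\pi\rho\pi^\dagger)\rVert_1$, i.e.\ the reverse of the inequality you wrote: averaging can only \emph{decrease} the norm, so the bare symmetrization $\overline{\rho}$ need not be an optimizer, and ``convexity in the opposite direction'' does not rescue the claim. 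The actual reduction in~\cite{postselection} instead attaches a classical flag register carrying the permutation label, producing a state of the form $\frac{1}{n!}\sum_\pi(\pi\rho\pi^\dagger)\otimes\ket{\pi}\!\bra{\pi}$ on an enlarged reference system; because the flag sectors are mutually orthogonal, the trace norm splits into a sum and each summand is at least $\lVert(\Delta\otimes\mathds{1})\rho\rVert_1$ by the covariance bound, which gives the correct direction. This is also why the reference system $\mathcal{R}$ in the theorem statement is not tied to $\dim\mathcal{H}^{\otimes n}$: it must be allowed to grow to accommodate the flag (and later the purifying register of $\tau$). With that repair the remainder of your outline, including the post-selection identity realizing $\rho$ as $g_{n,d}$ times a CP trace-nonincreasing image of a purification of $\tau$ and the monotonicity of $\lVert\cdot\rVert_1$ under such maps, matches the cited proof.
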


For our purposes, this means roughly 
\begin{eqnarray}
\nonumber  \Prob[\mathcal{E}(\sigma^{\pi})=\mathrm{insecure}]\leq (n+1)^{(d^2-1)} \int \Prob[\mathcal{E}(\sigma^{\otimes n})=\mathrm{insecure}]d\sigma\ ,
\end{eqnarray}
where $\sigma^{\pi}$ is a permutation invariant input and $\mathcal{E}$ denotes the event that the scheme is insecure. 
The very right-hand side is what we have analysed in the previous section and because this is exponentially small, it remains exponentially small even when multiplied by the 
polynomial factor in front of it.

In our case, $\sigma$ represents the system $P_{XY|UV}$. We, therefore, need to model $P_{XY|UV}$ by a quantum state (note that this is only a mathematical tool and does not have 
any physical meaning). More precisely, we represent the distribution $P_{XYUV}$ by $\sigma$. Since our parameter estimation protocol is such that it filters the conditional 
distribution independently of the input distribution (it aborts if any input does not occur often enough), this is equivalent.

\begin{lemma}
Let $P_{XYUV}$ be a probability distribution. Then there exists a density matrix $\sigma$ in a Hilbert space $\mathcal{H}$ with $dim(\mathcal{H}))=|\mathcal{X}| |\mathcal{Y}||
\mathcal{U}| |\mathcal{V}|$ such that measuring $\sigma$ in the standard basis gives the distribution $P_{XYUV}$.
\end{lemma}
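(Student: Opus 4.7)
The plan is to give an explicit construction of $\sigma$ as a diagonal density matrix whose diagonal entries are exactly the probabilities $P_{XYUV}(x,y,u,v)$.

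First I would take $\mathcal{H}=\mathcal{H}_X\otimes \mathcal{H}_Y\otimes \mathcal{H}_U\otimes \mathcal{H}_V$, where $\mathcal{H}_X,\mathcal{H}_Y,\mathcal{H}_U,\mathcal{H}_V$ have orthonormal bases indexed by $\mathcal{X},\mathcal{Y},\mathcal{U},\mathcal{V}$, respectively. The dimension then equals $|\mathcal{X}||\mathcal{Y}||\mathcal{U}||\mathcal{V}|$ as required. Writing $\ket{x,y,u,v}:=\ket{x}\otimes\ket{y}\otimes\ket{u}\otimes\ket{v}$ for the standard product basis of $\mathcal{H}$, I would define
\begin{eqnarray}
\nonumber \sigma &:=& \sum_{x,y,u,v} P_{XYUV}(x,y,u,v)\,\ket{x,y,u,v}\bra{x,y,u,v}.
\end{eqnarray}

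Next I would verify that $\sigma$ is a valid density matrix: it is a convex combination (with weights summing to one, since $P_{XYUV}$ is a probability distribution) of rank-one projectors onto orthonormal basis vectors, hence it is Hermitian, positive semi-definite, and has unit trace. Finally, performing the projective measurement in the standard basis $\{\ket{x,y,u,v}\bra{x,y,u,v}\}$ yields outcome $(x,y,u,v)$ with probability $\brakket{x,y,u,v}{\sigma}{x,y,u,v}=P_{XYUV}(x,y,u,v)$, which is exactly the desired distribution.

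There is no real obstacle here — the lemma is a purely mathematical bookkeeping step that turns the classical distribution $P_{XYUV}$ into a quantum state so that the post-selection technique (Theorem~\ref{th:postselection}) can be applied. The only thing to note is that this embedding is not meant to reflect any physical preparation; it merely provides a density operator on a Hilbert space of the stated dimension whose standard-basis measurement statistics reproduce $P_{XYUV}$, which is all that is needed for the subsequent application of the post-selection bound.
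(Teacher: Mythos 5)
Your proposal is correct and matches the paper's own proof essentially verbatim: both construct the diagonal density matrix $\sigma=\sum P_{XYUV}(x,y,u,v)\,\ket{x,y,u,v}\bra{x,y,u,v}$ on a Hilbert space of the stated dimension and observe that standard-basis measurement reproduces $P_{XYUV}$. Your explicit tensor-product labelling of the basis is a cosmetic difference from the paper's flat index $i$.
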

\begin{proof}
Associate with each element of the standard basis $\{\ket{i}\}_i$ an outcome $x,y,u,v$. Take $\sigma =\sum_{i=1}^{|\mathcal{X}||\mathcal{Y}| |\mathcal{U}| |\mathcal{V}|}p_i \ket{i}
\bra{i}$ where $p_i=P_{XYUV}(x,y,u,v)$. 
\end{proof}
This tells us, that we can use $d=|\mathcal{X}||\mathcal{Y}||\mathcal{U}||\mathcal{V}|$ in the above formula. Let us now state, that the key-distribution protocol is secure on any 
input (not only product). It furthermore reaches essentially the same key rate. Robustness remains, of course, unchanged. 
\begin{theorem}\label{th:qsecure}
Protocol~\ref{prot:qkey} is $\epsilon$-secure with $\epsilon\in O(2^{-n})$ on any input for $m>n\cdot h(\delta)$ and $s=q\cdot n< \log_2P_{\mathrm{guess}}-m/n$. It is $\epsilon^{\prime\prime}$-robust
 on $\left( \mathcal{P}^{-\eta}\right)^{\otimes n}$ with $\epsilon^{\prime\prime}\in O(2^{-n})$. 
\end{theorem}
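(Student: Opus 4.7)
The robustness claim is immediate from Lemma~\ref{lemma:qkdsecure}, which already establishes $\epsilon''$-robustness on product inputs in $(\mathcal{P}^{-\eta})^{\otimes n}$, so the work lies in extending the $O(2^{-n})$ security bound from product inputs $P_{XY|UV}^{\otimes n}$ to arbitrary inputs. The plan is to apply the post-selection technique (Theorem~\ref{th:postselection}) together with the density-matrix representation of $P_{XYUV}$ given by the lemma preceding the theorem, so that $d = |\mathcal{X}||\mathcal{Y}||\mathcal{U}||\mathcal{V}|$ depends only on the alphabet sizes and not on $n$.

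First I would verify that every stage of Protocol~\ref{prot:qkey} is invariant under permutations of the $n$ subsystems. Parameter estimation (Protocol~\ref{prot:qpe}) is defined through empirical relative frequencies and is thus symmetric; information reconciliation and privacy amplification are implemented with randomly chosen two-universal hash functions, whose guarantees depend only on the joint distribution of the raw keys, not on the ordering. Any residual asymmetry could in any case be removed by prepending a publicly chosen random permutation, with no loss of correctness or robustness.

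Next, let $\mathcal{E}$ denote the CPTP map describing the real protocol (whose output consists of Alice's key, Bob's key, the public transcript, and a register modelling Eve's side information) and $\mathcal{F}$ the corresponding ideal map producing a perfect key. Both are permutation-invariant on the input, hence $\Delta := \mathcal{E}-\mathcal{F}$ satisfies the hypothesis of Theorem~\ref{th:postselection}, giving
\begin{eqnarray}
\nonumber \lVert \Delta \rVert_\diamond &\leq & (n+1)^{d^2-1}\cdot \lVert (\Delta\otimes \mathds{1}_{\mathcal{R}})\,\tau_{\mathcal{H}^n\mathcal{R}} \rVert_1\ .
\end{eqnarray}
Writing $\tau_{\mathcal{H}^n\mathcal{R}} = \int \sigma_{\mathcal{H}\mathcal{R}}^{\otimes n}\,\mu(d\sigma)$ and using convexity of the trace norm, this is bounded by $(n+1)^{d^2-1}\int \lVert (\Delta\otimes \mathds{1})\sigma^{\otimes n}\rVert_1\,\mu(d\sigma)$. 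For each product input the integrand is a distinguishing advantage on an i.i.d. input (with the purifying reference absorbed into Eve's register), and Lemma~\ref{lemma:qkdsecure} bounds it by $O(2^{-n})$. Since $d$ is constant, the polynomial prefactor $(n+1)^{d^2-1}$ is dominated by the exponential decay and the product remains in $O(2^{-n})$, which together with the factor $1/4$ relating diamond norm and distinguishing advantage proves $\epsilon$-security.

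The main subtlety I expect is ensuring that the $O(2^{-n})$ bound of Lemma~\ref{lemma:qkdsecure} is uniform over all product inputs $\sigma^{\otimes n}$: on good inputs (in $(\mathcal{P}_{\mathrm{cond}}^{-\eta})^{\otimes n}$) the protocol outputs a secure key with overwhelming probability, while on bad inputs (in $(\mathcal{P}_{\mathrm{cond}}^{+\eta})^{\otimes n}$) parameter estimation aborts with overwhelming probability, and the remaining ``boundary'' distributions abort or succeed safely by the Sampling Lemma~\ref{lemma:sampling}; checking that the three error terms (from Lemma~\ref{lemma:qfilters}, Lemma~\ref{lemma:qirworks} and Theorem~\ref{th:qpa}) combine uniformly in $\sigma$ is the only part requiring care, and this is precisely what the composition established in the proof of Lemma~\ref{lemma:qkdsecure} delivers.
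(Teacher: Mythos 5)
Your proposal is correct and takes essentially the same route as the paper: the paper's proof of Theorem~\ref{th:qsecure} is simply the observation that Lemma~\ref{lemma:qkdsecure} combined with the post-selection theorem (Theorem~\ref{th:postselection}) yields the claim, with $d=|\mathcal{X}||\mathcal{Y}||\mathcal{U}||\mathcal{V}|$ constant so the polynomial factor $(n+1)^{d^2-1}$ does not destroy the exponential decay. You have spelled out exactly this argument, including the permutation-invariance check and the convexity step over the de~Finetti state, which the paper leaves implicit.
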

\begin{proof}
This follows directly from Lemma~\ref{lemma:qkdsecure}, using Theorem~\ref{th:postselection}.  
\end{proof}

\section{A Specific Protocol}\label{sec:qprotocol}

While our results apply to a rather generic class of protocols (see Protocol~\ref{prot:qkey}), we consider here, for the purpose of illustration, a specific protocol, as described below (Protocol~\ref{qprot}). The protocol is an entanglement-based quantum key distribution protocol similar to the original proposal by Ekert~\cite{ekert}.

\begin{figure}[ht!]
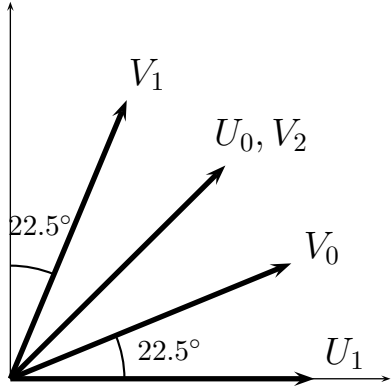

\centering
\pspicture*[](-0.5,-0.5)(5.5,5.5)
\psarc(0,0){1.5}{67.5}{90}
\rput[B]{0}(0.4,1.9){\normalsize{$22.5^{\circ}$}}
\psarc(0,0){1.5}{0}{22.5}
\rput[B]{0}(2.1,0.25){\normalsize{$22.5^{\circ}$}}
\rput[br]{90}(0,0){\psline[linewidth=0.5pt]{->}(0,0)(5,0)}
\rput[br]{0}(0,0){\psline[linewidth=0.5pt]{->}(0,0)(5,0)}
\rput[B]{0}(1.8,3.9){\Large{$V_1$}}
\rput[br]{67.5}(0,0){\psline[linewidth=2pt]{->}(0,0)(4,0)}
\rput[br]{22.5}(0,0){\psline[linewidth=2pt]{->}(0,0)(4,0)}
\rput[B]{0}(4.1,1.6){\Large{$V_0$}}
\rput[br]{45}(0,0){\psline[linewidth=2pt]{->}(0,0)(4,0)}
\rput[B]{0}(3.3,3.1){\Large{$U_0,V_2$}}
\rput[br]{0}(0,0){\psline[linewidth=2pt]{->}(0,0)(4,0)}
\rput[B]{0}(4.4,0.2){\Large{$U_1$}}
\endpspicture
\caption{\label{fig:basen} Alice's and Bob's measurement bases in terms of polarization used in Protocol~\ref{qprot}.}
\end{figure}

\begin{protocol}\label{qprot}\ 
\begin{enumerate}
 \item Alice creates $n$ maximally entangled states $\ket{\Psi^-}=(\ket{01}-\ket{10})/\sqrt{2}$, and sends one qubit of every state to Bob.
 \item Alice and Bob randomly measure the $i$\textsuperscript{th} system in either the basis $U_0$ or $U_1$ (for Alice) or $V_0$, $V_1$ or $V_2$ (Bob); the five bases are shown
 in Figure~\ref{fig:basen}. Bob flips his measurement result. 
They make sure that measurements associated with different subsystems commute. 
 \item The measurement results when both measured $U_0,V_2$ form the raw key.
\item For the remaining $k$ measurements they announce the results over the public authenticated channel and estimate the guessing probability and $\delta$ 
(see Section~\ref{subsec:qpe}). 
If the parameters are such that key agreement is possible, they continue; otherwise they abort.
\item They do information reconciliation and privacy amplification as  given in Sections~\ref{subsec:qir} and~\ref{subsec:qpa}. 
\end{enumerate}
\end{protocol}

\begin{figure}[h!]
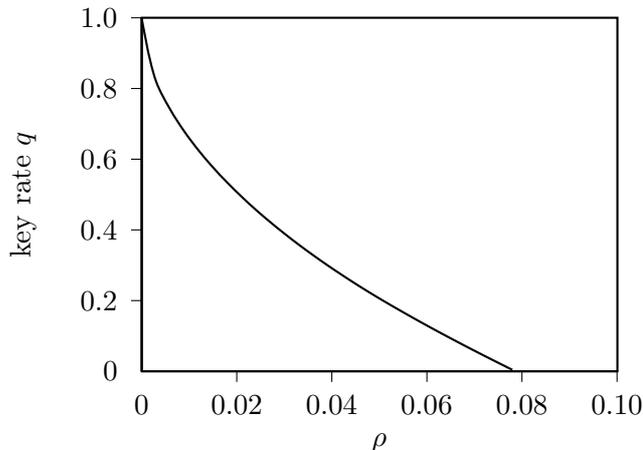

\centering
% \pspicture[](-2,-1)(7,4)
\pspicture[](-2.5,-0.75)(8.75,4.5)
 % \pspicture[](-3,-1.5)(10.5,6)
 \psset{xunit=62.5cm,yunit=4.6875cm}
%   \psset{xunit=75cm,yunit=5.625cm}
% \psset{xunit=50cm,yunit=3.75cm}
  \savedata{\mydataa}[
 {
{0.0000,1},
{0.0030,	0.821016426665049},
{0.0060,	0.740691751143548},
{0.0090,	0.677862617875497},
{0.0120,	0.624316166255036},
{0.0150,	0.57683385883745},
{0.0180,	0.533772043356894},
{0.0210,	0.494134634173186},
{0.0240,	0.457033494760889},
{0.0270,	0.422533310472383},
{0.0300,	0.389849475645926},
{0.0330,	0.35867529866217},
{0.0360,	0.329149447359709},
{0.0390,	0.300783150964238},
{0.0420,	0.273734569316761},
{0.0450,	0.247542826109787},
{0.0480,	0.222378305894379},
{0.0510,	0.198205818087642},
{0.0540,	0.174792660472844},
{0.0570,	0.151914854287686},
{0.0600,0.129554203443098},
{0.0630,0.107694014159461},
{0.066,0.0863189024482308},
{0.069,0.0652206073728114},
{0.072,0.0445824575238165},
{0.075,0.0242005174755597},
{0.078,0.00406681145525334}
} ]
 %\rput[c](0.05,-0.25){{$\rho$}}
  \rput[c](0.05,-0.2){{$\rho$}}
  \rput[c]{90}(-0.025,0.5){key rate {$q$}}
   \psaxes[Dx=0.02,Dy=0.2,  showorigin=true,tickstyle=bottom,axesstyle=frame](0,0)(0.1001,1.0001)
   \dataplot[plotstyle=curve,showpoints=false,dotstyle=o]{\mydataa} 
 \endpspicture
 \caption{\label{fig:qkeyrate} The key rate of Protocol~\ref{qprot} secure against device-independent quantum adversaries as function of the channel noise.}
 \end{figure}

Our main results (in particular Theorem~\ref{th:qsecure}) allow us to calculate the rate at which the protocol can produce a secure key, depending on the quality of the original entangled states shared by Alice and Bob. (This quality normally depends on the noise in the quantum channel used to distribute the entangled states.) For the matter of concreteness, we assume that these shared entangled states are mixtures consisting of a singlet (with weight $1-\rho$) and a fully mixed state (with weight $\rho$).  The resulting rate depending on the parameter $\rho$ is shown in Figure~\ref{fig:qkeyrate}.

\subparagraph*{Acknowledgements: }We thank Roger Colbeck, 
Dejan Dukaric, Artur Ekert, Thomas Holenstein, Severin Winkler and Stefan Wolf for helpful discussions. EH acknowledges support from 
the Swiss National Science Foundation and an ETHIIRA grant of ETH's research commission. RR acknowledges support from the Swiss National Science Foundation (grant No. 200021-119868).\\

\noindent \textit{Note added after completion of this work: }Results closely related to the ones presented here have been obtained independently in Ref.~\cite{masanespironioacin}.

\bibliographystyle{alpha}
\bibliography{quantum_pa_xor}

\newcommand{\etalchar}[1]{$^{#1}$}
\begin{thebibliography}{MMMO06}

\bibitem[ABG{\etalchar{+}}07]{abgs}
Antonio Ac\'\i{}n, Nicolas Brunner, Nicolas Gisin, Serge Massar, Stefano
  Pironio, and Valerio Scarani.
\newblock Device-independent security of quantum cryptography against
  collective attacks.
\newblock {\em Physical Review Letters}, 98(23):230501, 2007.

\bibitem[AGM06]{AcinGisinMasanes}
Antonio Ac\'\i{}n, Nicolas Gisin, and Llu\'\i{}s Masanes.
\newblock From {B}ell's theorem to secure quantum key distribution.
\newblock {\em Physical Review Letters}, 97(12):120405, 2006.

\bibitem[AMP06]{AcinMassarPironio}
Antonio Ac\'\i{}n, Serge Massar, and Stefano Pironio.
\newblock Efficient quantum key distribution secure against no-signalling
  eavesdroppers.
\newblock {\em New Journal of Physics}, 8(8):126, 2006.

\bibitem[BB84]{bb84}
Charles~H. Bennett and Gilles Brassard.
\newblock Quantum cryptography: public key distribution and coin tossing.
\newblock In {\em Proceedings of IEEE International Conference on Computers,
  Systems and Signal Processing}, 1984.

\bibitem[BC90]{braunsteincaves2}
Samuel~L. Braunstein and Carlton~M. Caves.
\newblock Wringing out better {B}ell inequalities.
\newblock {\em Annals of Physics}, 202(1):22---56, 1990.

\bibitem[BHK05]{bhk}
Jonathan Barrett, Lucien Hardy, and Adrian Kent.
\newblock No signaling and quantum key distribution.
\newblock {\em Physical Review Letters}, 95(1):010503, 2005.

\bibitem[BPW03]{bpw}
Michael Backes, Birgit Pfitzmann, and Michael Waidner.
\newblock A composable cryptographic library with nested operations.
\newblock In {\em CCS'03: Proceedings of the ACM Conference on Computer and
  Communications Security}, pages 220--230, 2003.

\bibitem[BS93]{brassardsalvail}
Gilles Brassard and Louis Salvail.
\newblock Secret-key reconciliation by public discussion.
\newblock In {\em EUROCRYPT'93: Proceedings of the International Conference on
  the Theory and Application of Cryptographic Techniques}, pages 410--423,
  1993.

\bibitem[BTN01]{btn}
Aharon Ben-Tal and Arkadi Nemirovski.
\newblock {\em Lectures on modern convex optimization: analysis, algorithms,
  and engineering applications}.
\newblock Society for Industrial and Applied Mathematics, 2001.

\bibitem[BV04]{bv}
Stephen Boyd and Lieven Vandenberghe.
\newblock {\em Convex optimization}.
\newblock Cambridge University Press, 2004.

\bibitem[Can01]{canetti}
Ran Canetti.
\newblock Universally composable security: a new paradigm for cryptographic
  protocols.
\newblock In {\em FOCS~'01: Proceedings of the Symposium on Foundations of
  Computer Science}, pages 136--145, 2001.

\bibitem[CHSH69]{chsh}
John~F. Clauser, Michael~A. Horne, Abner Shimony, and Richard~A. Holt.
\newblock Proposed experiment to test local hidden-variable theories.
\newblock {\em Physical Review Letters}, 23(15):880--884, 1969.

\bibitem[CKR09]{postselection}
Matthias Christandl, Robert K\"onig, and Renato Renner.
\newblock Postselection technique for quantum channels with applications to
  quantum cryptography.
\newblock {\em Physical Review Letters}, 102(2):020504, 2009.

\bibitem[Col06]{colbeckphd}
Roger Colbeck.
\newblock {\em Quantum and relativistic protocols for secure multi-party
  computation}.
\newblock PhD thesis, University of Cambridge, 2006.
\newblock Available at http://arxiv.org/abs/0911.3814.

\bibitem[DLTW08]{dltw08}
Andrew~C. Doherty, Yeong-Cherng Liang, Ben Toner, and Stephanie Wehner.
\newblock The quantum moment problem and bounds on entangled multi-prover
  games.
\newblock In {\em CCC'08: Proceedings of the Conference on Computational
  Complexity}, pages 199--210, 2008.

\bibitem[Eke91]{ekert}
Artur~K. Ekert.
\newblock Quantum cryptography based on {B}ell's theorem.
\newblock {\em Physical Review Letters}, 67(6):661--663, 1991.

\bibitem[H{\"a}n10]{thesis}
Esther H{\"a}nggi.
\newblock {\em Device-independent quantum key distribution}.
\newblock PhD thesis, ETH Zurich, August 2010.

\bibitem[Hol06]{holensteinphd}
Thomas Holenstein.
\newblock {\em Strengthening key agreement using hard-core sets}.
\newblock PhD thesis, Eidgen\"ossische Technische Hochschule Z\"urich, 2006.

\bibitem[HRW10]{eurocrypt}
Esther H{\"a}nggi, Renato Renner, and Stefan Wolf.
\newblock Efficient device-independent quantum key distribution.
\newblock In {\em EUROCRYPT'10: Proceedings of the Conference on the Theory and
  Applications of Cryptographic Techniques}, pages 216--234, 2010.

\bibitem[KR05]{KoeRen04b}
Robert K{\"o}nig and Renato Renner.
\newblock A de {F}inetti representation for finite symmetric quantum states.
\newblock {\em Journal of Mathematical Physics}, 46(122108), 2005.

\bibitem[KRS09]{krs}
Robert K{\"o}nig, Renato Renner, and Christian Schaffner.
\newblock The operational meaning of min- and max-entropy.
\newblock {\em IEEE Transactions on Information Theory}, 55(9):4337--4347,
  2009.

\bibitem[LM08]{leemittal}
Troy Lee and Rajat Mittal.
\newblock Product theorems via semidefinite programming.
\newblock In {\em ICALP'08: Proceedings of the International Colloquium on
  Automata, Languages and Programming, ICALP 2008}, pages 674--685, 2008.

\bibitem[L{\"o}f04]{yalmip}
Johan L{\"o}fberg.
\newblock Yalmip : A toolbox for modeling and optimization in {MATLAB}.
\newblock In {\em Proceedings of the CACSD Conference}, 2004.

\bibitem[LWW{\etalchar{+}}10]{photonics}
Lars Lydersen, Carlos Wiechers, Christoffer Wittmann, Dominique Elser, Johannes
  Skaar, and Vadim Makarov.
\newblock Hacking commercial quantum cryptography systems by tailored bright
  illumination.
\newblock {\em Nature Photonics}, 4(10):686--689, 2010.

\bibitem[MAG06]{mag}
Llu\'\i{}s Masanes, Antonio Ac\'\i{}n, and Nicolas Gisin.
\newblock General properties of nonsignaling theories.
\newblock {\em Physical Review~A}, 73(1):012112, 2006.

\bibitem[Mak09]{makarov}
Vadim Makarov.
\newblock Controlling passively quenched single photon detectors by bright
  light.
\newblock {\em New Journal of Physics}, 11(6):065003, 2009.

\bibitem[Mas09]{lluis}
Llu\'\i{}s Masanes.
\newblock Universally composable privacy amplification from causality
  constraints.
\newblock {\em Physical Review Letters}, 102(14):140501, 2009.

\bibitem[MAT08]{matlab}
MATLAB.
\newblock {\em Version 7.7.0 (R2008b)}.
\newblock The MathWorks Inc., Natick, Massachusetts, 2008.

\bibitem[Mau02]{Maurer02}
Ueli Maurer.
\newblock Indistinguishability of random systems.
\newblock In {\em EUROCRYPT '02: Proceedings of the International Conference on
  the Theory and Applications of Cryptographic Techniques}, pages 110--132,
  2002.

\bibitem[McK10]{mckaguephd}
Matthew McKague.
\newblock {\em Quantum information processing with adversarial devices}.
\newblock PhD thesis, University of Waterloo, 2010.
\newblock Available at http://arxiv.org/abs/1006.2352.

\bibitem[MMMO06]{mmmo}
Fr\'ed\'eric Magniez, Dominic Mayers, Michele Mosca, and Harold Ollivier.
\newblock Self-testing of quantum circuits.
\newblock In {\em ICALP'06: Proceedings of the International Colloquium on
  Automata, Languages and Programming}, pages 72--83, 2006.

\bibitem[MPA10]{masanespironioacin}
Llu\'\i{}s Masanes, Stefano Pironio, and Antonio Ac\'\i{}n.
\newblock Secure device-independent quantum key distribution with causally
  independent measurement devices.
\newblock 2010.
\newblock Available at http://arxiv.org/abs/1009.1567.

\bibitem[MRW{\etalchar{+}}09]{masanesv4}
Lluis Masanes, Renato Renner, Andreas Winter, Jonathan Barrett, and Matthias
  Christandl.
\newblock Security of key distribution from causality constraints.
\newblock 2009.
\newblock Available at http://arxiv.org/abs/quant-ph/0606049v4.

\bibitem[MS07]{mittalszegedy}
Rajat Mittal and Mario Szegedy.
\newblock Product rules in semidefinite programming.
\newblock In {\em Proceedings of the International Symposium on Fundamentals of
  Computation Theory}, pages 435--445, 2007.

\bibitem[MY98]{my}
Dominic Mayers and Andrew Chi-Chih Yao.
\newblock Quantum cryptography with imperfect apparatus.
\newblock In {\em FOCS'98: Proceedings of the Symposium on Foundations of
  Computer Science}, pages 503--509, 1998.

\bibitem[NC00]{nielsenchuang}
Michael~A. Nielsen and Isaac~L. Chuang.
\newblock {\em Quantum computation and quantum information}.
\newblock Cambridge University Press, 2000.

\bibitem[NPA07]{npa07}
Miguel Navascu\'es, Stefano Pironio, and Antonio Ac\'\i{}n.
\newblock Bounding the set of quantum correlations.
\newblock {\em Physical Review Letters}, 98(1):010401, 2007.

\bibitem[NPA08]{npa}
Miguel Navascu\'es, Stefano Pironio, and Antonio Ac\'\i{}n.
\newblock A convergent hierarchy of semidefinite programs characterizing the
  set of quantum correlations.
\newblock {\em New Journal of Physics}, 10(7):073013, 2008.

\bibitem[PAM{\etalchar{+}}10]{random}
Stefano Pironio, Antonio Ac\'\i{}n, Serge Massar, Antoine~Boyer de~la Giroday,
  Dzimitry~N. Matsukevich, Peter Maunz, Steven Olmschenk, David Hayes, Le~Luo,
  T.~Andrew Manning, and Christopher Monroe.
\newblock Random numbers certified by {B}ell’s theorem.
\newblock {\em Nature}, 464:1021--1024, 2010.

\bibitem[PW01]{pw}
Birgit Pfitzmann and Michael Waidner.
\newblock A model for asynchronous reactive xystems and its application to
  secure message transmission.
\newblock In {\em SP '01: Proceedings of the 2001 IEEE Symposium on Security
  and Privacy}, page 184, 2001.

\bibitem[Ren05]{rennerphd}
Renato Renner.
\newblock {\em Security of quantum key distribution}.
\newblock PhD thesis, Eidgen\"ossische Technische Hochschule Z\"urich, 2005.
\newblock Available at http://arxiv.org/abs/quant-ph/0512258.

\bibitem[RK05]{rennerkoenig}
Renato Renner and Robert K{\"o}nig.
\newblock Universally composable privacy amplification against quantum
  adversaries.
\newblock In {\em TCC'05: Proceedings of the Theory of Cryptography
  Conference}, pages 407--425, 2005.

\bibitem[SGB{\etalchar{+}}06]{SGBMPA}
Valerio Scarani, Nicolas Gisin, Nicolas Brunner, Llu\'\i{}s Masanes, Sergi
  Pino, and Antonio Ac\'\i{}n.
\newblock Secrecy extraction from no-signaling correlations.
\newblock {\em Physical Review~A}, 74(4):042339, 2006.

\bibitem[Stu98]{sedumi}
Jos~F. Sturm.
\newblock Using sedumi 1.02, a matlab toolbox for optimization over symmetric
  cones, 1998.

\bibitem[Weh08]{wehnerphd}
Stephanie Wehner.
\newblock {\em Cryptography in a quantum world}.
\newblock PhD thesis, University of Amsterdam, 2008.
\newblock Available at http://arxiv.org/abs/0806.3483.

\bibitem[XQHK10]{lo}
Feihu Xu, Bing Qi, and Lo~Hoi-Kwong.
\newblock Experimental demonstration of phase-remapping attack in a practical
  quantum key distribution system.
\newblock 2010.
\newblock Available at http://arxiv.org/abs/1005.2376.

\end{thebibliography}

\appendix

\section{The XOR as Privacy-Amplification Function}

\subsection{Best attack on a bit}\label{subsec:qbit}

Of course, the analysis of Section~\ref{sec:qsingle} also tells us the best attack in case the function $f$ maps $\bof{X}$ to a bit. However, we can give a slightly different 
form to calculate the distance from uniform of a bit. This will allow us to show an XOR-Lemma for quantum secrecy. 
\begin{lemma}\label{lemma:distanceissdp}
Let $P_{\bof{X}Z|\bof{U}W}$ be a quantum system. 
The distance from uniform of 
$B=f(\bof{X})$ given $Z(W)$ and $Q:=(\bof{U}=\bof{u},F=f)$ is bounded by 
\begin{eqnarray}
\nonumber d(B|Z(W),Q)&\leq & \frac{1}{2}\cdot b^T\cdot \Gamma_\Delta^*\ ,
\end{eqnarray}
where $b^T\cdot \Gamma_\Delta^*$ is the optimal value of the optimization problem
\begin{eqnarray}
\label{eq:qprimal2} \max :&& \sum_{\bof{x}:B=0} \Gamma_\Delta(\bof{x},\bof{u})-\sum_{\bof{x}:B=1} \Gamma_\Delta(\bof{x},\bof{u})\\
\nonumber \st :&& A_{\mathrm{qb}}\Gamma_\Delta =0 
\\
\nonumber && \Gamma_\Delta \preceq \Gamma^k_{\mathrm{marg}} \\
\nonumber && \Gamma_\Delta\succeq-\Gamma^k_{\mathrm{marg}}\ ,
\end{eqnarray}
where $\Gamma^k_{\mathrm{marg}}$ is the matrix associated with the marginal system  
$P_{\bof{X}|\bof{U}}$. 
\end{lemma}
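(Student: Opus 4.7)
My plan is to derive the semi-definite characterization of $d(B|Z(W),Q)$ by reducing it to the guessing-probability SDP of Lemma~\ref{lemma:guessissdp} and then performing an affine change of variables on the conditional certificates.

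I would first establish the pointwise identity $d(B|Z(W),Q) = P_{\mathrm{guess}}(B|Z(W),Q) - 1/2$, valid for any binary variable. Since $P_U(0)=P_U(1)=1/2$, for every $z,q,w$ we have $\sum_b|P_{B|z,q,w}(b)-1/2|=2|P_{B|z,q,w}(0)-1/2|$ and $\max_b P_{B|z,q,w}(b)=1/2+|P_{B|z,q,w}(0)-1/2|$. Plugging these into the definitions of $d$ (Definition~\ref{def:distance_from_uniform}) and $P_{\mathrm{guess}}$, and using $\sum_z P_{Z|W=w}(z)=1$, yields the identity. Because $Q=(\bof{U}=\bof{u},F=f)$ is deterministic, the outer $\max_w$ in both quantities ranges over the same set of Eve's strategies, so the reduction is clean.

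Next, I would invoke Lemma~\ref{lemma:guessissdp} specialized to $|\mathcal{F}|=2$: $P_{\mathrm{guess}}(B|Z(W),Q)$ is upper bounded by the SDP in variables $\Gamma^0,\Gamma^1$ subject to $\Gamma^0+\Gamma^1=\Gamma^k_{\mathrm{marg}}$, $A_{\mathrm{qb}}\Gamma^z=0$, $\Gamma^z\succeq 0$, with objective $b_0^T\Gamma^0+b_1^T\Gamma^1$, where $b_z$ picks out the diagonal probability entries associated with $\bof{x}$ such that $f(\bof{x})=z$. Performing the change of variables $\Gamma_\Delta := \Gamma^0-\Gamma^1$ and writing $\Gamma^z=(\Gamma^k_{\mathrm{marg}}\pm\Gamma_\Delta)/2$, the sum constraint is absorbed, the positive semi-definiteness $\Gamma^z\succeq 0$ becomes exactly $-\Gamma^k_{\mathrm{marg}}\preceq\Gamma_\Delta\preceq\Gamma^k_{\mathrm{marg}}$, and since the marginal certificate satisfies $A_{\mathrm{qb}}\Gamma^k_{\mathrm{marg}}=0$, the two conditions $A_{\mathrm{qb}}\Gamma^z=0$ reduce by linearity to the single condition $A_{\mathrm{qb}}\Gamma_\Delta=0$. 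The correspondence $(\Gamma^0,\Gamma^1)\leftrightarrow \Gamma_\Delta$ is bijective, so no feasible region is added or removed. The objective becomes $b_0^T\Gamma^0+b_1^T\Gamma^1=\tfrac{1}{2}(b_0+b_1)^T\Gamma^k_{\mathrm{marg}}+\tfrac{1}{2}(b_0-b_1)^T\Gamma_\Delta$; the first term equals $1/2$ by the normalization $\sum_{\bof{x}}P_{\bof{X}|\bof{U}}(\bof{x},\bof{u})=1$, and the second is $\tfrac{1}{2}b^T\Gamma_\Delta$ with $b:=b_0-b_1$ having $+1$ at entries with $f(\bof{x})=0$ and $-1$ at entries with $f(\bof{x})=1$, matching the stated SDP. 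Combining with the identity from step one, $d(B|Z(W),Q)=P_{\mathrm{guess}}-1/2\leq \tfrac{1}{2}b^T\Gamma_\Delta^\ast$.

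The main obstacle I anticipate is the careful bookkeeping in showing that the two PSD constraints on $\Gamma^0,\Gamma^1$ translate cleanly into the two-sided linear matrix inequality on $\Gamma_\Delta$ and that this change of variables is a genuine bijection (so neither constraint $A_{\mathrm{qb}}\Gamma^z=0$ is weakened). A secondary point is the reduction of Eve's outcome alphabet to $|\mathcal{Z}|=2$: this is free, since by the argument in the proof of Lemma~\ref{lemma:qattackopt}, any Eve strategy with more outcomes can be coarsened by grouping outcomes according to her optimal guess without decreasing her success probability, and convexity of the set of quantum systems preserves membership.
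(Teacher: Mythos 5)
Your proof is correct and takes essentially the same approach as the paper: the change of variables to $\Gamma_\Delta$ (which, once you identify your unnormalized $\Gamma^0$ with $p\,\Gamma^{z_0}$, is literally the paper's $\Gamma_\Delta = 2p\,\Gamma^{z_0}-\Gamma^k_{\mathrm{marg}}$), the observation that linearity of $A_{\mathrm{qb}}$ together with $A_{\mathrm{qb}}\Gamma^k_{\mathrm{marg}}=0$ reduces the two constraints $A_{\mathrm{qb}}\Gamma^z=0$ to $A_{\mathrm{qb}}\Gamma_\Delta=0$, and the identification of the two PSD constraints with the sandwich $-\Gamma^k_{\mathrm{marg}}\preceq\Gamma_\Delta\preceq\Gamma^k_{\mathrm{marg}}$ all match the paper's proof. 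Your presentation is marginally more modular in that you factor out the elementary identity $d(B|Z(W),Q)=P_{\mathrm{guess}}(B|Z(W),Q)-1/2$ and then reuse Lemma~\ref{lemma:guessissdp} as a black box, whereas the paper expands the distance from uniform directly in terms of the conditional certificates; this is a cosmetic rather than substantive difference.
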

\begin{proof}
Define 
\begin{eqnarray}
\nonumber \Gamma_\Delta&=&2 p\cdot  \Gamma^{z_0}-\Gamma_{\mathrm{marg}}\ .
\end{eqnarray}
and note that with this definition $\Gamma^{z_0}=({\Gamma_{\mathrm{marg}}+\Gamma_\Delta})/({2p})$
and $\Gamma^{z_1}=({\Gamma_{\mathrm{marg}}-\Gamma_\Delta})/({2(1-p)})$. \\
The distance from uniform of a bit can be expressed as 
\begin{eqnarray}
\nonumber d(B|Z(W),Q)&=& \frac{1}{2}\cdot \left[p\cdot \left(\sum_{\bof{x}:B=0}\Gamma^{z_0}(\bof{x},\bof{u})-\sum_{\bof{x}:B=1}\Gamma^{z_0}(\bof{x},\bof{u})\right)\right.
\\
\nonumber
&&\left.+(1-p)\cdot \left(\sum_{\bof{x}:B=1}\Gamma^{z_1}(\bof{x},\bof{u})-\sum_{\bof{x}:B=0}\Gamma^{z_1}(\bof{x},\bof{u})\right)  
\right]\\
\nonumber&=& \frac{1}{2}\cdot b^T\cdot \Gamma_\Delta^*\ .
\end{eqnarray}
Now notice that $\Gamma^{z_0}$ and $\Gamma^{z_1}$ are actually quantum certificates of order $k$ exactly if $\Gamma_\Delta$ fulfils the above requirements. The conditions 
given by $A_{\mathrm{qb}}$ the matrix $\Gamma$ needs to fulfil are all linear and, therefore, because $\Gamma^k_{\mathrm{marg}}$ fulfils them, $\Gamma^{z_0}$ and $\Gamma^{z_1}$ fulfil them exactly if $\Gamma_\Delta$  does. 
The semi-definite constraints correspond exactly to the requirement that $\Gamma^{z_0}$ and $\Gamma^{z_0}$ are positive semi-definite, using the fact that the space of positive 
semi-definite matrices forms a convex cone. 
\end{proof}
The above semi-definite program can be written in the following form:
\begin{eqnarray}
\nonumber \text{PRIMAL}\\
\label{eq:primalbit} \max :& b^T\cdot \Gamma_\Delta\\
\nonumber \st :&
\underbrace{
 \left(
\begin{array}{c}
\phantom{-}\mathds{1} \\
-\mathds{1} \\
A_{\mathrm{qb}}
\end{array}
\right)
}_{A}
\cdot 
\Gamma_\Delta 
\begin{array}{c}
\preceq\\
\preceq\\
=
\end{array}
\underbrace{
\begin{array}{c}
\Gamma^k_{\mathrm{marg}}\\
\Gamma^k_{\mathrm{marg}}\\
0
\end{array}
}_{c}
\end{eqnarray}
\begin{eqnarray}
\nonumber  \text{DUAL}\\
\label{eq:dualbit} \min : & (\Gamma^k_{\mathrm{marg}})^T (\lambda_1+\lambda_2)\\
\nonumber  \st : & 
\underbrace{
\left(
\begin{array}{ccc}
\mathds{1} & -\mathds{1}& A_{\mathrm{qb}}^T 
\end{array}
\right)
}_{A^T}
\cdot 
\left(
\begin{array}{c}
\lambda_1 \\
\lambda_2\\
\lambda_3
\end{array}
\right)
=
b
\\
\nonumber  &\lambda_1,\lambda_2\succeq 0,\ \lambda_3 \text{ unrestricted}
\end{eqnarray}

\subsection[\ldots\ \  in terms of observable probabilities]{Best attack on a bit in terms of observable probabilities}\label{subsec:qbitobservable}

Any dual solution of (\ref{eq:dualbit}) gives us a bound on the distance from uniform of the bit $B$ in terms of the matrix elements $\Gamma^k_{\mathrm{marg}}$. We will now 
change our primal program to one where we \emph{optimize} over all $\Gamma^k_{\mathrm{marg}}$ compatible with the observable probabilities. The dual of this program has a solution 
only in terms these probabilities. We then show how we can transform any dual feasible solution of this program into a dual feasible solution of the program above with the same value.

The new semi-definite program we consider is the following:
\begin{eqnarray}
\label{eq:newopt}
\nonumber \text{PRIMAL}\\
\label{eq:qbitprimalobs} \max :& b^T\cdot \Gamma_\Delta\\
\nonumber \st :& \left(
\begin{array}{cc}
\phantom{-}\mathds{1} & -\mathds{1}\\
-\mathds{1} & -\mathds{1}\\
A_{\mathrm{qb}}& 0\\
0 & A_{\mathrm{IJ}}\\
0 & A_{\mathrm{qb}}
\end{array}
\right)
\cdot 
\left(
\begin{array}{c}
\Gamma_\Delta \\
\Gamma^k_{\mathrm{marg}}
\end{array}
\right)
\begin{array}{c}
\preceq\\
\preceq\\
=\\
=\\
=
\end{array}
\begin{array}{c}
0\\
0\\
0\\
P_{\bof{X}|\bof{U}}\\
0
\end{array}
\\
\nonumber  & \Gamma_\Delta, \Gamma^k_{\mathrm{marg}}\text{ unrestricted}
\end{eqnarray}
\begin{eqnarray}
\nonumber  \text{DUAL}\\
\label{eq:newoptdual} \min : & P_{\bof{X}|\bof{U}}^T \cdot \lambda_4\\
\nonumber  \st : & 
\left(
\begin{array}{ccccc}
\phantom{-}\mathds{1} & -\mathds{1} & A_{\mathrm{qb}}^T & 0 & 0\\
-\mathds{1} & -\mathds{1} & 0 & A_{\mathrm{IJ}} & A_{\mathrm{qb}}
\end{array}
\right)
\cdot 
\left(
\begin{array}{c}
\lambda_1 \\
\lambda_2\\
\lambda_3\\
\lambda_4\\
\lambda_5
\end{array}
\right)
=
\left(
\begin{array}{c}
b \\
0
\end{array}
\right)
\\
\nonumber  &\lambda_1,\lambda_2\succeq 0,\ \lambda_3, \lambda_4, \lambda_5 \text{ unrestricted}
\end{eqnarray}
where the matrix $A_{\mathrm{IJ}}$ is such that $A_{\mathrm{IJ}}\cdot \Gamma^k_{\mathrm{marg}}=P_{\bof{X}|\bof{U}}$. 
We claim that any dual feasible solution of (\ref{eq:newoptdual}) can be transformed into a dual feasible solution of (\ref{eq:dualbit}) with the same objective value. The 
solution of (\ref{eq:newoptdual}) therefore gives a bound on the distance from uniform only in terms of the observable probabilities. 
\begin{lemma}
Assume $ \lambda_1,\lambda_2,\lambda_3, \lambda_4, \lambda_5$ is a dual feasible solution of (\ref{eq:newoptdual}). Then $ \lambda_1,\lambda_2, \lambda_3$ is a dual feasible 
solutions of (\ref{eq:dualbit}) reaching the same objective value. 
\end{lemma}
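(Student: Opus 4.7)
The plan is to verify the two requirements separately: feasibility of $\lambda_1,\lambda_2,\lambda_3$ for (\ref{eq:dualbit}), and equality of the two objective values. Both should drop out of the block structure of (\ref{eq:newoptdual}) together with the two linear identities that $\Gamma^k_{\mathrm{marg}}$ satisfies, namely $A_{\mathrm{IJ}}\,\Gamma^k_{\mathrm{marg}}=P_{\bof{X}|\bof{U}}$ and $A_{\mathrm{qb}}\,\Gamma^k_{\mathrm{marg}}=0$.

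First I would read off the two rows of constraints in (\ref{eq:newoptdual}):
\begin{eqnarray}
\nonumber \lambda_1-\lambda_2+A_{\mathrm{qb}}^T\lambda_3 &=& b\ ,\\
\nonumber -\lambda_1-\lambda_2+A_{\mathrm{IJ}}^T\lambda_4+A_{\mathrm{qb}}^T\lambda_5 &=& 0\ .
\end{eqnarray}
The first of these is exactly the equality constraint appearing in (\ref{eq:dualbit}), and the sign restrictions $\lambda_1,\lambda_2\succeq 0$ are imposed in both programs. Hence $(\lambda_1,\lambda_2,\lambda_3)$ is dual feasible for (\ref{eq:dualbit}) with no further work.

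For the objective values I would use the second row to write $\lambda_1+\lambda_2=A_{\mathrm{IJ}}^T\lambda_4+A_{\mathrm{qb}}^T\lambda_5$, and then compute
\begin{eqnarray}
\nonumber (\Gamma^k_{\mathrm{marg}})^T(\lambda_1+\lambda_2)
&=& (\Gamma^k_{\mathrm{marg}})^T A_{\mathrm{IJ}}^T\lambda_4+(\Gamma^k_{\mathrm{marg}})^T A_{\mathrm{qb}}^T\lambda_5\\
\nonumber &=& (A_{\mathrm{IJ}}\,\Gamma^k_{\mathrm{marg}})^T\lambda_4+(A_{\mathrm{qb}}\,\Gamma^k_{\mathrm{marg}})^T\lambda_5\\
\nonumber &=& P_{\bof{X}|\bof{U}}^T\lambda_4+0\ ,
\end{eqnarray}
which is precisely the objective of (\ref{eq:newoptdual}). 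This gives the claimed equality of values.

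There is no real obstacle here; the lemma is a direct consequence of the fact that the new program was built by lifting $\Gamma^k_{\mathrm{marg}}$ into the variable set and enforcing exactly the linear identities it satisfies in the old program. The only thing to be careful about is reading the dual correctly from the primal blocks of (\ref{eq:qbitprimalobs}), i.e. that the occurrences of $A_{\mathrm{IJ}}$ and $A_{\mathrm{qb}}$ in (\ref{eq:newoptdual}) are to be interpreted as transposes in the coefficient matrix; the computation above is insensitive to this once the two identities for $\Gamma^k_{\mathrm{marg}}$ are invoked.
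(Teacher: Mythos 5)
Your proposal is correct and follows essentially the same route as the paper's proof: feasibility is read off directly from the first block row of (\ref{eq:newoptdual}), and the equality of objective values is obtained by substituting $\lambda_1+\lambda_2 = A_{\mathrm{IJ}}^T\lambda_4 + A_{\mathrm{qb}}^T\lambda_5$ from the second block row and then invoking $A_{\mathrm{IJ}}\,\Gamma^k_{\mathrm{marg}}=P_{\bof{X}|\bof{U}}$ and $A_{\mathrm{qb}}\,\Gamma^k_{\mathrm{marg}}=0$. Your closing remark about the transposes is also apt — the paper's displayed dual (\ref{eq:newoptdual}) omits the transposes on $A_{\mathrm{IJ}}$ and $A_{\mathrm{qb}}$ in the second row, but the proof (yours and the paper's) uses the transposed forms, which is the correct reading of the dual.
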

\begin{proof}
The condition that  $\lambda_1,\lambda_2, \lambda_3$ is feasible for (\ref{eq:dualbit}) follows directly from the (upper row) feasibility condition of  (\ref{eq:newoptdual}). 
To see that it reaches the same value, we use that fact that $\Gamma^k_{\mathrm{marg}}$ is a quantum certificate, i.e., 
\begin{eqnarray}
\nonumber A_{\mathrm{qb}}\cdot \Gamma^k_{\mathrm{marg}}=0
\end{eqnarray}
and the (lower row) condition of (\ref{eq:newoptdual}), i.e.,
\begin{eqnarray}
\nonumber -\lambda_1-\lambda_2+A_{\mathrm{IJ}}^T\cdot\lambda_4+A_{\mathrm{qb}}^T\cdot\lambda_5=0\ .
\end{eqnarray}
We then obtain
\begin{eqnarray}
\nonumber {\Gamma^k}_{\mathrm{marg}}^T\cdot(\lambda_1+\lambda_2)&=&{\Gamma^k}_{\mathrm{marg}}^T\cdot(\lambda_1 +\lambda_2)
+\Gamma_{\mathrm{marg}}^T\cdot
( -\lambda_1-\lambda_2+A_{\mathrm{IJ}}^T\cdot\lambda_4+A_{\mathrm{qb}}^T\cdot\lambda_5)\\
\nonumber &=&{\Gamma^k}_{\mathrm{marg}}^T\cdot
(A_{\mathrm{IJ}}^T\cdot\lambda_4+A_{\mathrm{qb}}^T\cdot\lambda_5)\\
\nonumber &=& (A_{\mathrm{IJ}}\cdot {\Gamma^k}_{\mathrm{marg}})^T\lambda_4\\
\nonumber &=& P_{\bof{X}|\bof{U}}^T \cdot \lambda_4\ .
\end{eqnarray}
\end{proof}

\subsection{An XOR-Lemma for quantum secrecy}\label{subsec:qxor}

Using Lemma~\ref{lemma:qproductconditions}, we can now show that the XOR of the two partially secure bits is highly secure.

\begin{lemma}\label{lemma:qxorlemma}
Let $A_1,b_1,c_1$ be the parameters associated with the semi-definite program (\ref{eq:primalbit}) bounding the distance from uniform of a bit $f(\bof{X}_1)\in \{0,1\}$ obtained 
from an $n$-party quantum system $P_{\bof{X}_1|\bof{U}_1}$ where $Q=(\bof{U}_1=\bof{u}_1,F=f)$ Similarly 
associate $A_2,b_2,c_2$ with the distance from uniform of a bit $g(\bof{X}_2)\in \{0,1\}$ obtained from an $m$-party quantum system $P_{\bof{X}_2|\bof{U}_2}$. 
Then then the distance from uniform of the bit $f(\bof{X}_1)\oplus g(\bof{X}_2)$ obtained from the $(n+m)$-party system $P_{\bof{X}_1\bof{X}_2|\bof{U}_1\bof{U}_2}$, 
where $Q=(\bof{U}_1=\bof{u}_1,\bof{U}_2=\bof{u}_2,F=f,G=g)$ is bounded by the semi-definite program $A,b,c$ with $A=A_1\otimes A_2$ and $b=b_1\otimes b_2$.
\end{lemma}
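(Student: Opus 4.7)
The plan is to follow the template of Lemma~\ref{lemma:qproduct}: apply the bit-distance SDP (\ref{eq:primalbit}) to the combined $(n+m)$-party system with the XOR bit as the target, and recognise it as a relaxation whose parameters are the Kronecker products $A_1\otimes A_2$ and $b_1\otimes b_2$ of the single-system ones. The true difference matrix $\Gamma_\Delta$ of the combined system will be exhibited as a feasible solution of the tensored SDP whose objective equals twice the distance from uniform of $f(\bof{X}_1)\oplus g(\bof{X}_2)$.

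First I would verify that $b_1\otimes b_2$ is the correct objective vector for the XOR bit. By Lemma~\ref{lemma:distanceissdp}, each $b_i$ is a $\pm 1$-diagonal with sign $+1$ on marginal-certificate entries whose bit output equals $0$ and sign $-1$ on entries whose bit output equals $1$. The Kronecker product $b_1\otimes b_2$ then carries sign $+1$ where the pair $(f(\bof{x}_1),g(\bof{x}_2))$ agrees and sign $-1$ where it disagrees, which is exactly the objective vector that (\ref{eq:primalbit}) would assign to the bit $f(\bof{x}_1)\oplus g(\bof{x}_2)$ of the combined system.

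Next I would verify $A=A_1\otimes A_2$ by invoking Lemma~\ref{lemma:qproductconditions}. The reduced quantum certificate $\Gamma^{\prime k}$ of the $(n+m)$-party system satisfies both $(A_{\mathrm{qb},1}\otimes \mathds{1})\,\Gamma^{\prime k}=0$ and $(\mathds{1}\otimes A_{\mathrm{qb},2})\,\Gamma^{\prime k}=0$; these are precisely the $A_{\mathrm{qb}}$-rows of $A_1\otimes A_2$ acting on the tensored certificate. The remaining $\pm \mathds{1}$ rows in each $A_i$ implement the PSD sandwich $-\Gamma^k_{\mathrm{marg}}\preceq \Gamma_\Delta\preceq \Gamma^k_{\mathrm{marg}}$ at the single-system level; tensoring these, and exploiting that tensor products of PSD matrices are PSD and that the PSD cone is convex, assembles them into a sandwich on the combined $\Gamma_\Delta$ that is equivalent to the feasibility of the XOR-labelled two-piece decomposition of the combined marginal certificate, via the identity $\Gamma_\Delta=2p\,\Gamma^{z_0}-\Gamma^k_{\mathrm{marg}}$ from the proof of Lemma~\ref{lemma:distanceissdp} applied to the XOR outcome.

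The main obstacle is the careful bookkeeping through the tensor products: one must check that the four combinations of $\pm\mathds{1}$ rows of $A_1$ and $A_2$ together encode exactly the semidefinite constraints the XOR-conditional certificates of the combined system must satisfy, and that the cross-terms $A_{\mathrm{qb},i}\otimes(\pm\mathds{1})$ correspond to $A_{\mathrm{qb},i}$-conditions restricted to the marginals of the XOR decomposition. Once this is patiently verified, feasibility of the true combined $\Gamma_\Delta$ under the tensored SDP, together with its objective value matching twice $d(B|Z(W),Q)$, yields the bound. For a dual counterpart of the statement, the $b_i\not\succeq 0$ obstruction preventing a direct application of Theorem~\ref{th:mittalszegedy} can be circumvented by splitting $b_i=b_i^+-b_i^-$ into PSD diagonal pieces and running the Mittal--Szegedy argument blockwise, in the spirit of Lemma~\ref{lemma:qdualproduct}.
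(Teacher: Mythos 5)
Your argument follows the same route as the paper: the paper's proof of this lemma appeals to Lemma~\ref{lemma:qproductconditions} for the constraint structure of the tensored SDP, and to the observation that the tensor product $b_1\otimes b_2$ of the $\pm 1$-diagonal objective vectors is exactly the objective vector for the XOR bit (since $(+1)(+1)=(-1)(-1)=+1$ and $(+1)(-1)=-1$, matching agreement/disagreement of the two bits). Your more detailed tracking of how the $\pm\mathds{1}$ rows of the $A_i$ tensor together to give the PSD sandwich on the combined $\Gamma_\Delta$ is a useful elaboration of what the paper leaves implicit. One correction to your concluding remark: you do not need the blockwise Mittal--Szegedy workaround for the dual counterpart. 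In contrast to the guessing-probability SDP, the dual constraint of (\ref{eq:dualbit}) is an equality, $A^T\lambda=b$, so $\lambda_1\otimes\lambda_2$ is dual feasible for $A_1\otimes A_2$, $b_1\otimes b_2$ simply because $(A_1^T\otimes A_2^T)(\lambda_1\otimes\lambda_2)=(A_1^T\lambda_1)\otimes(A_2^T\lambda_2)=b_1\otimes b_2$; the paper's Lemma~\ref{lemma:qdualxorproduct} uses exactly this and does not invoke Theorem~\ref{th:mittalszegedy} at all. The $b_i\not\succeq 0$ issue you identify is real but simply never arises here because the PSD inequality $A^T\lambda\succeq b$ has been replaced by an equality.
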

\begin{proof}
This follows form the fact that any $(n+m)$-party quantum system must fulfil Lemma~\ref{lemma:qproductconditions} and $b$ describing the XOR of two bits can be described as the 
tensor product of the ones associated with each of the two bits. 
\end{proof}

This  implies that for any dual feasible solution, the tensor product is dual feasible for the tensor product problem.

\begin{lemma}\label{lemma:qdualxorproduct}
Let $\lambda_1$ be a dual feasible for (\ref{eq:dualbit}) with $A_1,b_1,c_1$ associated with an $n$-party quantum system and $\lambda_2$ dual feasible for an $m$-party quantum 
system described by $A_2,b_2,c_2$. Then $\lambda=\lambda_1\otimes \lambda_2$ is dual feasible for $A,b,c$ where $A=A_1\otimes A_2$ and $b=b_1\otimes b_2$.
\end{lemma}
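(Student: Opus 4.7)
The plan is to give a direct verification, much in the spirit of Lemma~\ref{lemma:qdualproduct}, but without invoking Theorem~\ref{th:mittalszegedy} directly. The reason Mittal-Szegedy does not apply as a black box is that the vector $b$ of the XOR-bit program (\ref{eq:dualbit}) is not positive semi-definite --- it has entries $+1$ where $f(\bof{x})=0$ and $-1$ where $f(\bof{x})=1$. Fortunately, the SDP has a richer block structure than the one in Section~\ref{subsec:productguess} (two PSD-cone components $\lambda_1,\lambda_2$ plus one free component $\lambda_3$), and this structure is enough to carry out the tensor-product argument by hand.

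First, I will spell out feasibility for each factor. Writing $\lambda^{(r)}=(\lambda^{(r)}_1,\lambda^{(r)}_2,\lambda^{(r)}_3)$ for $r=1,2$, feasibility for $(A_r,b_r,c_r)$ as in (\ref{eq:dualbit}) means $\lambda^{(r)}_1,\lambda^{(r)}_2\succeq 0$ while $\lambda^{(r)}_3$ is free, together with the equality
\begin{eqnarray}
\nonumber \lambda^{(r)}_1-\lambda^{(r)}_2+A_{\mathrm{qb},r}^T\lambda^{(r)}_3 = b_r \ .
\end{eqnarray}
Next, I will compute the tensor-product equality. Using $A=A_1\otimes A_2$ and the identity $(M\otimes N)(x\otimes y)=(Mx)\otimes (Ny)$, a straightforward expansion of the nine blocks gives
\begin{eqnarray}
\nonumber A^T(\lambda^{(1)}\otimes\lambda^{(2)}) &=& \bigl(\lambda^{(1)}_1-\lambda^{(1)}_2+A_{\mathrm{qb},1}^T\lambda^{(1)}_3\bigr)\otimes\bigl(\lambda^{(2)}_1-\lambda^{(2)}_2+A_{\mathrm{qb},2}^T\lambda^{(2)}_3\bigr)\\
\nonumber &=& b_1\otimes b_2\ ,
\end{eqnarray}
exactly matching the equality constraint of the product dual.

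Finally, I will check the cone constraints block by block. In the product dual the variables are indexed by pairs $(i,j)\in\{1,2,3\}^2$, with the $(i,j)$-block inheriting a PSD constraint if and only if \emph{both} corresponding primal constraints were PSD-cone inequalities, i.e.\ $i,j\in\{1,2\}$; in every other case the block is free. Setting the $(i,j)$-block of our candidate to $\lambda^{(1)}_i\otimes\lambda^{(2)}_j$, each block that must be PSD is then a tensor product of two PSD matrices, hence PSD, while the free blocks impose no constraint. This is exactly analogous to the trick that makes Lemma~\ref{lemma:qdualproduct} go through: the negative-sign block $\lambda^{(1)}_1\otimes(-\lambda^{(2)}_2)$ which \emph{would} cause trouble under Mittal-Szegedy never actually has to be split off, because in the current three-block formulation the product variable $\lambda^{(1)}_1\otimes\lambda^{(2)}_2$ sits cleanly in a PSD-cone slot and contributes a $-b_1\otimes b_2$-type term to the equality via the sign pattern of $A$.

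The main (and only) subtlety will be to keep the indexing of the nine blocks under control and to match them correctly against the sign pattern of $A=A_1\otimes A_2$; once that bookkeeping is done, both the equality and the cone constraints fall out of the tensor-product identities used above, and no inequality analogue of Theorem~\ref{th:mittalszegedy} needs to be invoked.
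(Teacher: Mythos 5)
Your proof is correct and matches the paper's own approach: the equality constraint of the product dual follows from the tensor identity $(A_1^T\otimes A_2^T)(\lambda^{(1)}\otimes\lambda^{(2)})=(A_1^T\lambda^{(1)})\otimes(A_2^T\lambda^{(2)})=b_1\otimes b_2$, and the cone constraints follow because tensor products of positive semi-definite matrices are positive semi-definite. The paper's proof is a two-line statement of exactly these two facts, so your block-by-block unfolding --- together with your correct observation that Theorem~\ref{th:mittalszegedy} is neither invoked nor applicable here (since $b$ is indefinite) --- is simply a more explicit rendering of the same argument.
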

\begin{proof}
 $\lambda_1 \otimes \lambda_2$ fulfils the dual constraints because 
\begin{eqnarray}
\nonumber [ A_1\otimes A_2](\lambda_1 \otimes \lambda_2)= b_1\otimes b_2 \ .
\end{eqnarray}
Furthermore, the tensor product of two positive semi-definite matrices is again positive semi-definite. 
\end{proof}

We can now formulate the XOR-Lemma for quantum secrecy. 
\begin{theorem}[XOR-Lemma for quantum secrecy]\label{th:xorquant}
Let $P_{\bof{X}_1|\bof{U}_1}$ be an $n$-party quantum system and $f(\bof{X}_1)$ a bit such that $d(f(\bof{X}_1)| Z(W),Q)\leq (P_{\bof{X}_1|\bof{U}_1}^T \lambda_1)/2$ 
with  $Q=(\bof{U}_1=\bof{u}_1,F=f)$. 
 Similarly, associate $d(g(\bof{X}_2)| Z(W),Q)\leq (P_{\bof{X}_2|\bof{U}_2}^T \lambda_2)/2$ with a bit from an $m$-party quantum system $P_{\bof{X}_2|\bof{U}_2}$ 
where
$Q=(\bof{U}_2=\bof{u}_2,G=g)$.
Then the distance from uniform of $f(\bof{X}_1)\oplus g(\bof{X}_2)$ obtained from the $(n+m)$-party quantum system $P_{\bof{X}_1\bof{X}_2|\bof{U}_1\bof{U}_2}$ with 
$Q=(\bof{U}_1=\bof{u}_1,\bof{U}_2=\bof{u}_2,F=f,G=g)$
 is bounded by
\begin{eqnarray}
\nonumber d(f(\bof{X}_1)\oplus g(\bof{X}_2)|Z(W),Q)&\leq & \frac{1}{2}
P_{\bof{X}_1\bof{X}_2|\bof{U}_1\bof{U}_2}^T (\lambda_1 \otimes \lambda_2)\ .
\end{eqnarray}
\end{theorem}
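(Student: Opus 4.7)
The plan is to glue together the three results just proved in Sections~\ref{subsec:qbit}--\ref{subsec:qxor}. By Lemma~\ref{lemma:distanceissdp}, the quantity $d(f(\bof{X}_1)\oplus g(\bof{X}_2)\mid Z(W),Q)$ is at most $\tfrac{1}{2}$ times the optimal value of an SDP of the form (\ref{eq:primalbit}) attached to the joint $(n+m)$-party quantum system. Lemma~\ref{lemma:qxorlemma} then identifies the constraint matrix and objective of that SDP as the tensor products $A=A_1\otimes A_2$ and $b=b_1\otimes b_2$ of the data of the single-bit SDPs for the two subsystems, using that the $\pm 1$ pattern $b$ selecting $f(\bof{x}_1)\oplus g(\bof{x}_2)=0$ versus $=1$ factors as the outer product of the two single-bit $\pm 1$ patterns.

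Next I would apply Lemma~\ref{lemma:qdualxorproduct}: if $\lambda_1$ and $\lambda_2$ are dual feasible solutions for the two individual SDPs in the observable-probabilities form of Section~\ref{subsec:qbitobservable} (which is the form in which the hypotheses $d(f(\bof{X}_i)\mid Z(W),Q)\leq \tfrac{1}{2}\,P_{\bof{X}_i|\bof{U}_i}^T\lambda_i$ are stated), then $\lambda_1\otimes \lambda_2$ is dual feasible for the joint problem with data $A_1\otimes A_2$, $b_1\otimes b_2$. Weak duality for SDPs, applied to the observable-probabilities version of the joint dual, then yields
\begin{equation}
\nonumber
2\cdot d(f(\bof{X}_1)\oplus g(\bof{X}_2)\mid Z(W),Q)\ \leq\ P_{\bof{X}_1\bof{X}_2\mid \bof{U}_1\bof{U}_2}^T\cdot (\lambda_1\otimes \lambda_2),
\end{equation}
and dividing by $2$ gives the theorem.

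The only step requiring some care, and which I expect to be the main (minor) obstacle, is the observable-probabilities rewriting for the joint system: one has to check that $\lambda_1\otimes\lambda_2$, which comes from the single-system observable-form duals, is feasible for the joint observable-form dual and has objective value $P_{\bof{X}_1\bof{X}_2\mid\bof{U}_1\bof{U}_2}^T(\lambda_1\otimes\lambda_2)$ there. Feasibility is the content of Lemma~\ref{lemma:qdualxorproduct} combined with Lemma~\ref{lemma:qproductconditions}, which ensures that each constraint block from $A_1$ or $A_2$ remains valid after tensoring with identity when acting on the reduced quantum certificate ${\Gamma'}^k_{n+m}$ of the joint system. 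The equality of objective values uses that the observable entries of the joint system restrict in the natural way to the products of the single-system observable entries, so pairing against $\lambda_1\otimes\lambda_2$ collapses to the product of the individual pairings. Once this bookkeeping is in place, the theorem is immediate from weak duality.
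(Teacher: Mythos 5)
Your proposal is correct and follows essentially the same route as the paper, which proves the theorem as a one-line consequence of Lemma~\ref{lemma:qdualxorproduct} (itself built on Lemma~\ref{lemma:qxorlemma} and the tensor-product structure from Lemma~\ref{lemma:qproductconditions}), followed by weak duality. If anything, your final paragraph flagging the conversion between the observable-probabilities dual of Section~\ref{subsec:qbitobservable} and the raw dual~(\ref{eq:dualbit}) is slightly more careful than the paper, which silently elides that bookkeeping in its terse ``follows directly from'' proof.
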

\begin{proof}
This follows directly from Lemma~\ref{lemma:qdualxorproduct}. 
\end{proof}

When the marginal system is of the form $P_{\bof{X}_1|\bof{U}_1}\otimes P_{\bof{X}_2|\bof{U}_2}$, this implies that the XOR is secure, whenever one of the two bits is secure.

\end{document}